\newtheorem{theorem}{Theorem}[section]
\newtheorem{corollary}{Corollary}
\newtheorem{lemma}[theorem]{Lemma}
\newtheorem{proposition}{Proposition}
\theoremstyle{definition} 
\newtheorem{definition}[theorem]{Definition}
\newtheorem{remark}{Remark}
\numberwithin{equation}{section}
\numberwithin{proposition}{section}
\numberwithin{example}{section}
\def\ho{\widehat{\omega}}
\numberwithin{equation}{section}
\newtheorem*{Theorem*}{Theorem}
\theoremstyle{definition}
\def\om{\omega}
\def\>{\rangle}
\def\<{\langle}
\def\({\left (}
\def\){\right )}
\def\x{\times}
\def\B#1{\mathbf{#1}}
\def\tl#1{\tilde{#1}}
\newcommand{\co}{[\![}
\newcommand{\cc}{]\!]}
\newcommand{\Ad}{{\rm Ad}}
\newcommand{\lv}{\langle}
\newcommand{\rv}{\rangle}
\numberwithin{Theorem}{section}
\begin{document} 

\centerline{\Large \bf Cosymplectic geometry, reductions, and energy-}

\vskip 0.35cm
\centerline{\Large \bf momentum methods with applications}
\vskip 0.25cm

\centerline{J. de Lucas$^*$, A. Maskalaniec$^{**}$, and B.M. Zawora$^{**}$}
\vskip 0.35cm

\centerline{$^{*}$Centre de Recherches Math\'ematiques, Universit\'e de Montr\'eal}

\centerline{Pavillon André-Aisenstadt,
2920 Tour Rd, Montreal, Quebec H3T 1N8,
Canada}

\centerline{$^{**}$ UW Institute for Advanced Studies}

\centerline{Department of Mathematical Methods in Physics, University of Warsaw }   

\centerline{   ul. Pasteura 5, 02-093, Warsaw, Poland}

\begin{abstract}Classical energy-momentum methods study the existence and stability properties of solutions of $t$-dependent Hamilton equations on symplectic manifolds whose evolution is given by their Hamiltonian Lie symmetries. The points of such solutions are called \textit{relative equilibrium points}. This work devises a new cosymplectic energy-momentum method providing a new and more general framework to study $t$-dependent Hamilton equations. In fact, cosymplectic geometry allows for using more types of distinguished Lie symmetries (given by Hamiltonian, gradient, or evolution vector fields), relative equilibrium points, and reduction methods, than symplectic techniques. To make our work more self-contained and to fill some gaps in the literature, a review of the cosymplectic formalism and the cosymplectic Marsden--Weinstein reduction is included.  Known and new types of relative equilibrium points are characterised and studied. Our methods remove technical conditions used in previous energy-momentum methods, like the ${\rm Ad}^*$-equivariance of momentum maps. Eigenfunctions of $t$-dependent Schr\"odinger equations are interpreted in terms of relative equilibrium points in cosymplectic manifolds. A new cosymplectic-to-symplectic reduction is developed and a new associated type of relative equilibrium points, the so-called \textit{gradient relative equilibrium points}, are introduced and applied to study the Lagrange points and Hill radii of a restricted circular three-body system by means of a not Hamiltonian Lie symmetry of the system. 
\end{abstract}

{\bf Keywords:} cosymplectic geometry, cosymplectic reduction,  energy-momentum method, relative equilibrium point; restricted circular three-body problem, $t$-dependent Schr\"odinger equation.

{\bf MSC 2020:} {{34A26, 37J39} (primary) {34A05, 70H05, 70H14} (secondary)}

\section{Introduction}

Symplectic geometry has been successfully applied to the description of mechanical systems in physics \cite{Ar89, GS90}. The first works concerning this topic can be traced back to Lagrange and Poisson, who analysed the motion of rigid bodies and celestial mechanics \cite{La09, Ma09}. From the XXth century, the Marsden--Weinstein reduction theorem \cite{MW74} has played a crucial role in describing Hamiltonian systems on symplectic manifolds admitting a Lie group of symmetries of the Hamiltonian and the symplectic manifold. This has had a profound impact on the analysis of time-independent mechanical systems and led to numerous generalisations \cite{AM78, BS97, CZ13, MRSV15, MR86}.

The description of time-dependent mechanical systems cannot be directly approached using symplectic geometry (cf. \cite{AM78, Al89, CNY13}). Instead, it is possible to modify symplectic geometry to cope with them. For instance, one can use cosymplectic geometry \cite{Al89, CNY13, Li59, Li62} or to deal with time-parametrised Hamiltonians in  symplectic geometry  \cite{LZ21, Za21}. As illustrated by the applications in this paper, considering the time as a coordinate in cosymplectic geometry allows for new techniques that cannot be properly defined in time-parametrised symplectic frameworks \cite{AM78, LZ21, Za21}. 

In the cosymplectic scenario, time-dependent ($t$-dependent) Hamiltonian systems are described via a closed differential two-form $\omega$ and a closed non-vanishing one-form $\eta$, both on a manifold $M$, so that $\ker\omega\oplus\ker\eta=TM$. Hence, $M$ is odd-dimensional, while $(M,\omega,\eta)$ is called a {\it cosymplectic manifold}. Although some reviews on cosymplectic geometry can be found 
\cite{Al89, CNY13}, proofs of certain results may be difficult or even impossible to find. For instance, a Marsden--Weinstein cosymplectic reduction was developed in 
\cite{Al89}, but that work is not written in English and the only, as far as we know, available online manuscript of \cite{Al89} is a low-quality blurry scan of the original paper. Moreover, other pioneering works on cosymplectic geometry are not available in English \cite{Li62, Li75}. Posterior works, like \cite{CNY13, LS93}, frequently  use facts devised in  
\cite{Al89, Li62} without giving proofs. In this sense, this work fulfils a gap in the literature providing results that can be hardly available from other sources.

Given a cosymplectic manifold $(M,\omega,\eta)$, every function $f\in C^\infty(M)$ leads to three vector fields on $M$: a Hamiltonian vector field $X_f$; a gradient vector field $\nabla f$; and an evolution vector field $E_f$. The Hamilton equations induced by a function $h\in C^\infty(M)$, which determine in particular the dynamical behaviour of a $t$-dependent mechanical system, are here described, among other methods, as the integral curves of an evolution vector field $E_h$ on a manifold  $M=T\times P$, where $T$ is a one-dimensional manifold and $P$ is a symplectic manifold \cite{AM78, LR89, LM87}. Due to our interest in physical systems, we mainly focus on cosymplectic manifolds related to manifolds $M$ of the latter type. This is slightly more general than the approach given in the time-dependent energy-momentum method in \cite{LZ21} devoted to the particular case  $M=\mathbb{R}\times P$, as our new approach is more appropriate, for instance, for describing $t$-dependent Hamiltonian systems that are periodic with respect to the time.  The assumption $M=T\times P$ imposes a mild restriction in the cosymplectic manifolds under study. Indeed, we are here mainly focused on local aspects of dynamical systems on cosymplectic manifolds, and every cosymplectic manifold is locally diffeomorphic to $M=\mathbb{R}\times P$. Moreover, many interesting physical systems can be effectively analysed in our main framework.

Let us briefly summarise the  cosymplectic Marsden--Weinstein reduction to justify the significance of the findings of our paper. Recall that a cosymplectic manifold $(M,\omega,\eta)$ induces a unique vector field, $R$, on $M$, the so-called {\it Reeb vector field}, such that the vector field contractions $\iota_R\omega=0$ and $\iota_R\eta=1$ are satisfied.  Let $\Phi: G\times M\rightarrow  M$ be a Lie group action whose fundamental vector fields are Hamiltonian relative to $(M,\omega,\eta)$, take values in $\ker \eta$, and admit a common first integral $h\in C^\infty(M)$. Then, a cosymplectic momentum map $\mathbf{J}^\Phi: M\rightarrow \mathfrak{g}^*$, where $\mathfrak{g}^*$ is dual to the Lie algebra $\mathfrak{g}$ of $G$, can be defined under the condition that its coordinates are first integrals of the Reeb vector field of $(M,\omega,\eta)$ and Hamiltonian functions of a basis of fundamental vector fields of $\Phi$. Previous structures give rise to what can be, in short, called a {\it cosymplectic Hamiltonian system}. Under different technical conditions and approaches, first Albert \cite{Al89} and afterwards de Le\'on and Saralegi \cite{LS93}, reduced the space of orbits, $M_\mu^\Delta:=\mathbf{J}^{\Phi-1}(\mu)/G_\mu^\Delta$, of the restriction of the action  $\Phi$ to a certain $G_\mu^\Delta\subset G$ for a weak regular value $\mu\in\mathfrak{g}^*$ of $\mathbf{J}^\Phi$ and $\mathbf{J}^{\Phi-1}(\mu)$ being quotientable, which guarantees the existence of a canonical cosymplectic manifold  $(M^\Delta_\mu,\omega_\mu,\eta_\mu)$. In our work, this result is proved and described in more detail than in \cite{Al89, LS93}.  For cosymplectic manifolds $(T\times P,
 \omega,\eta)$ of special types related to classical mechanical systems, it is proved that the cosymplectic Marsden--Weinstein reduction permits us to define time-like coordinates on $M^
 \Delta_\mu$, i.e. $M^\Delta_\mu= T\times P^\Delta_\mu$ for a certain manifold $P^\Delta_\mu$. In this case, ${\bf J}^\Phi$ is shown to be time-independent provided $T$ is connected.

Our cosymplectic Marsden--Weinstein reduction shows that the function $h\in C^\infty(M)$ induces a new one, $k_\mu\in C^\infty(M^\Delta_\mu)$, defined uniquely by the condition $k_\mu\circ\pi_\mu:=h\circ \iota_\mu$, where $\iota_\mu:{\bf J}^{\Phi-1}(\mu)\hookrightarrow M$ is the natural immersion of ${\bf J}^{\Phi-1}(\mu)$ in $M$, while $\pi_\mu:\mathbf{J}^{\Phi-1}(\mu)\rightarrow  M^\Delta_\mu$ is the quotient map. The dynamics given by the Hamiltonian vector field, $E_h$, induced by $h$ on $M$ gives rise to a new evolution vector field $E_{k_\mu}$ on $M^\Delta_\mu$. The previous result, absent in \cite{Al89}, generalises the approach to the cosymplectic reduction of $h$ given in \cite{LS93}  by skipping unnecessary technical assumptions, like those concerning the ${\rm Ad}^*$-invariance of momentum maps, and extends the result for other types of vector fields, like $X_h$ and $\nabla h$. 

It is worth stressing that the cosymplectic Marsden--Weinstein reduction requires that the fundamental vector fields of $\Phi$ take values in $\ker \eta$. In physical systems, this is here proved to amount to the fact that the Lie symmetries to perform the reduction do not include partial derivatives relative to the time. Although this may seem restrictive, it is plenty of physical systems, depending explicitly on time, that admit such Lie symmetries. A relevant example of this type is given in Section \ref{Sec::QuantumExample}, where time-dependent Schr\"odinger equations are analysed. One may also study types of almost-rigid bodies with a $t$-dependent tensor inertia \cite{LZ21, Go80, Za21} satisfying, for instance, a symmetry around a certain axis at every time.  

If $M=T\times P$, one has that $h$ may have the so-called {\it  relative equilibrium points} with respect to the Lie group action $\Phi: G\times M\rightarrow  M$, i.e. points $z_e\in P$ giving rise to equilibrium points of the form $(t,\pi_{\mu}(z_e))$ for every $t\in T$ and $\mu={\bf J}^\Phi(t,z_e)$ of the Hamiltonian vector field associated with $k_\mu$ on $M^\Delta_\mu$, which are the projection of not necessarily equilibrium points of $X_h$ on $M$. Equivalently, relative equilibrium points are points related to integral curves of $X_h$ that can be described via the Lie group action. Relative equilibrium points are here devised as a cosymplectic analogue of the relative equilibrium point notion appearing in the $t$-dependent energy-momentum method \cite{LZ21,Za21} and as a generalisation of the classical relative equilibrium point concept occurring in the classical energy-momentum method \cite{AM78, MS88}. Next, the existence of  relative equilibrium points is characterised and analysed by using cosymplectic geometry. In particular, their relation to our cosymplectic approach to Hamilton equations and the equilibrium points of reduced Hamilton equations via the cosymplectic Marsden--Weinstein reduction is analysed.

It is relevant to study the stability of  relative equilibrium points, namely whether particular solutions get closer or away from relative equilibrium points as they evolve. Our cosymplectic energy-momentum method has been devised to analyse this problem and other related ones. Our techniques allow for studying the cosymplectic Hamiltonian system given by $k_\mu$ on the reduced space $M^\Delta_\mu$ via the properties of the initial function $h$, which avoids the necessity of constructing $k_\mu$ and $M^\Delta_\mu$ explicitly. Our techniques expand to a cosymplectic realm the classical energy-momentum method developed by Simo and Marsden in \cite{MS88} and represents a geometric, and more powerful, alternative to the $t$-dependent energy-momentum method given in \cite{LZ21}, where the time coordinate appeared as a parameter instead of a coordinate with some geometrical meaning. For instance, this is due to the fact that considering the time as a new variable will allow for the use of structures that cannot be described by using the time as a parameter. For instance, cosymplectic geometry has types of distinguished vector fields and other types of reductions that are not present in symplectic geometry \cite{Al89}. These results are here proved to provide new ways of studying relative equilibrium points on cosymplectic manifolds.

In this work, we provide conditions ensuring the stability of equilibrium points of reduced, $t$-dependent, cosymplectic Hamiltonian systems obtained via cosymplectic Marsden--Weinstein reductions. Then, we also show how to determine the stability in the reduced cosymplectic Hamiltonian system using the Hessian of the Hamiltonian function on the initial cosymplectic manifold, $(\mathbb{R}\times P,\omega,\eta)$, instead of using only the reduced cosymplectic manifold, $(M^\Delta_\mu\simeq \mathbb{R}\times P^\Delta_\mu,\omega_\mu,\eta_\mu)$, which is, in general, more difficult.

As a new application, the case of an $n$-level quantum system described by a $t$-dependent Schr\"odinger equation is studied. Relative equilibrium points are characterised as eigenvectors for every value of $t$ of the $t$-dependent Hamiltonian operator describing the system. Moreover, the stability of the reduced systems is analysed geometrically. An introductory two-level system is investigated in detail to fully understand the methods of our theory in a particularly simple case. It is worth noting that the previous examples  are described by a time-dependent Hamiltonian and  have relevant time-independent Lie symmetries that allow us to perform a Marsden--Weinstein reduction and apply our methods.

Next, a new type of cosymplectic-to-symplectic reduction is devised. This reduction allows for the use of Lie symmetries that do not take values in $\ker \eta$, which permits one the study of problems that cannot be analysed through the cosymplectic-to-symplectic reduction devised by Albert \cite[pg. 
 640]{Al89}. In particular, we prove that Albert's reduction is a particular case of ours. Moreover, the vector fields of our reduction, which appear in restricted circular three-body problems, cannot be described in $t$-dependent symplectic geometry and previous energy-momentum methods. Our new reduction cannot be fully described via a Poisson reduction either because the dynamics of the studied systems is not generated via a Hamiltonian vector field relative to a Poisson bivector. Additionally, our reduction has special features due to its cosymplectic nature that require a particular approach. Next, gradient vector fields in cosymplectic manifolds are employed to devise a new type of relative equilibrium points, the hereafter called {\it gradient relative equilibrium points}.

Our last example focuses on the study of a gradient relative equilibrium points for a $t$-dependent restricted circular three-body problem \cite{AM78}. In this case, it is shown that the existence of a natural Lie symmetry that contains a time derivative makes the approach given in the time-dependent energy-momentum method in \cite{LZ21} impossible to apply so as to reduce the initial cosymplectic problem. Our new cosymplectic-to-symplectic reduction is employed to reduce the restricted circular three-body problem to a new Hamiltonian system on the manifold of the orbits of the given Lie symmetry.  Our brand new gradient relative equilibrium point suggests that energy-momentum methods must be generalised to deal with new types of Poisson reductions. Note that the standard energy-Casimir method slightly depicted in \cite{MS88}, which is concerned with relative equilibrium points and their stability analysis for Hamiltonian systems relative to Poisson manifolds, does not apply  to our restricted circular three-body problem as the vector field determining its dynamics is not Hamiltonian neither. Some details and results concerning this new method are given. 

The structure of the paper goes as follows. Section \ref{Sec::BasicSympl} describes some basic notions on symplectic and cosymplectic geometry, and introduces the notation to be used hereafter. It also stresses the relevance and advantages of cosymplectic geometry relative to other related geometric approaches. Section \ref{Sec::AMomentumMap} describes the theory of ${
\rm Ad}^*$-equivariant momentum maps on cosymplectic manifolds while Section \ref{Sec::AGeneralMomentumMap} extends such results to the theory of momentum maps that are not ${
\rm Ad}^*$-equivariant. Section \ref{Sec::CosymplReduction} presents a generalisation of the classical Marsden--Weinstein reduction theorem to the cosymplectic realm. Section \ref{Sec::Stab} recalls some fundamental results on Lyapunov stability on manifolds. Section \ref{Sec::CharStrongEqP} generalises to cosymplectic Hamiltonian systems the relative equilibrium point definition for Hamiltonian systems in symplectic geometry. Section \ref{Sec::ReducedStability} studies a type of reduced cosymplectic Hamiltonian systems to give conditions to guarantee the stability of the equilibrium points of their Hamilton equations. Section \ref{Sec::RelationofHessians} analyses the connection between the relative equilibrium points in some $\mathbf{J}^{\Phi-1}(\mu)$ and the associated equilibrium points of reduced cosymplectic Hamiltonian systems. Section \ref{Sec::QuantumExample} applies our results to a two-level quantum system. Section \ref{Sec:SREP} generalises the previous example to an $n$-level quantum system. Section \ref{Sec::GradRelEqPoints} introduces a new type of cosymplectic-to-symplectic reduction, where Lie symmetries are not Hamiltonian vector fields, and the gradient relative equilibrium point concept. Section \ref{Sec::RDTP} analyses an application of our cosymplectic-to-symplectic reduction to a restricted circular three-body problem and new types of relative equilibrium points. Finally,  Section \ref{Sec::Conclusions} summarises our results and presents an outlook about further research.

\section{Basics on symplectic and cosymplectic geometry}
\label{Sec::BasicSympl}

Let us set the general assumptions and the notation for the whole paper. This section also presents general results on symplectic (see \cite{AM78,Ca06} for details) and cosymplectic geometry \cite{CNY13, Li59, Li62} to be used hereafter. Some of them are difficult to find in the previous literature. If not otherwise stated, all structures are assumed to be smooth and globally defined, while manifolds are assumed to be connected, paracompact, and Hausdorff. These simplifications stress our key ideas and allow us to avoid minor or unnecessary technical problems.  Moreover, $\Omega^k(M)$ and $\mathfrak{X}(M)$ stand for the spaces of differential $k$-forms and vector fields on a manifold $M$, respectively. We write $V$ for a finite-dimensional vector space.

A {\it symplectic manifold} is a pair $(P,\omega)$, where $P$ is a manifold and $\omega$ is a closed differential two-form on $P$ that is {\it non-degenerate}, namely the unique vector bundle morphism $\omega^\flat:TP\rightarrow  T^*P$ such that $\omega^\flat(v_z):=\omega_z(v_z,\cdot)\in T_z^*P$ for every $v_z\in T_zP$ and $z\in P$, is a vector bundle isomorphism. We  call $\omega$ a {\it symplectic form}. From now on, $(P,\omega)$ will always stand for a symplectic manifold.

The {\it symplectic orthogonal} of a subspace $V_z\subset T_zP$, with $z\in P$, relative to $(P,\omega)$ is defined by
\[
V^{\perp_\omega}_z:=\{\vartheta_z\in T_zP\,:\,\omega_z(\vartheta_z,v_z)=0,\,\forall v_z\in V_z\}.
\]

A vector field $X\in \mathfrak{X}(P)$ is {\it Hamiltonian} if $\iota_{X}\omega=df$ for some $f\in C^{\infty}(P)$.
Then, $f$ is called a {\it Hamiltonian function} of $X$. Since $\omega$ is non-degenerate, every $f\in C^\infty(P)$ has a unique Hamiltonian vector field $X_f$. We write ${\rm Ham}(P,\omega)$ for the vector space of Hamiltonian vector fields on $P$ relative to the symplectic form $\omega$. The Cartan's magic formula yields $\mathcal{L}_{X_f}\omega=0$ for every $f\in C^\infty(M)$,
where $\mathcal{L}_{X_f}\omega$ is the Lie derivative of $\omega$ with respect to $X_f$. 

Let us define the bracket
\begin{equation}
\label{PoissoneBracket}
\{\cdot,\cdot\}_\omega: C^\infty(P)\times C^\infty(P)\ni (f,g)\mapsto \omega(X_f,X_g)\in C^\infty(P).
\end{equation}
This bracket is bilinear, antisymmetric, and, since $d\omega=0$, it obeys the {\it Jacobi identity}, which makes $\{\cdot,\cdot\}_\omega$ into a {\it Lie bracket}. Moreover, $\{\cdot,\cdot\}_\omega$ obeys the {\it Leibniz rule}, i.e. 
\[
\{f,gh\}_\omega=\{f,g\}_\omega h+g\{f,h\}_\omega,\qquad \forall f,g,h\in C^\infty(P).
\]
Such properties turn  $\{\cdot,\cdot\}_\omega$ into a {\it Poisson bracket}. It can be proved that  \begin{equation*}
\label{Eq::AntiMorphismSym}
X_{\{f,g\}_\omega}=-[X_f,X_g],\qquad \forall f,g\in C^\infty(P),
\end{equation*}
and ${\rm Ham}(P,\omega)$ becomes a Lie algebra. Moreover, the mapping $f\in C^\infty(P)\mapsto -X_f\in {\rm Ham}(P,\omega)$ is a Lie algebra morphism relative to the Lie bracket $\{\cdot,\cdot\}_\omega$ in $C^\infty(P)$ and the commutator of vector fields in $\mathfrak{X}(P)$. In general, a pair $(P,\{\cdot,\cdot\})$ is called a {\it Poisson manifold}, where $P$ is a manifold and $\{\cdot,\cdot\}$ is a Poisson bracket on $C^\infty(P)$, which may not necessarily be associated with some symplectic form $\omega$. Then, any Poisson bracket gives rise to a bivector field $\Lambda$ on $P$ defined as
\[
\Lambda(z)(\alpha_z,\beta_z):=\{f,g\}(z),
\]
where $(df)_z=\alpha_z$ and $(dg)_z=\beta_z$ are elements of $T^*_zP$ for some $f,g\in C^\infty(P)$. The bivector field $\Lambda$ is called a {\it Poisson bivector}.

Let $\tau: T^*Q\rightarrow  Q$ be the canonical cotangent bundle projection onto $Q$. The {\it canonical one-form} on $T^*Q$ is the differential one-form $\theta_Q$ on $T^*Q$ defined by 
\begin{equation*}
(\theta_Q)_{\alpha_q}(v_{\alpha_q}):=\langle \alpha_q,T_{\alpha_q}\tau(v_{\alpha_q})\rangle,\quad\forall q\in Q ,\quad\forall \alpha_q\in T^*_qQ,\quad \forall v_{\alpha_q}\in T_{\alpha_q}T^*Q,
\end{equation*}
where $\langle\cdot,\cdot\rangle$ is the pairing between covectors and vectors on $Q$. The {\it canonical symplectic form} on $T^*Q$ is the differential two-form on $T^*Q$ given by $\omega_{Q}:=-d\theta_Q$.
On local adapted coordinates $\{q^i,p_i\}_{i=1,\ldots,n}$ to $T^*Q$, one has $\theta_Q=\sum_{i=1}^n p_idq^i$. Then, $\omega_Q=-d\theta_Q=\sum^n_{i=1}dq^i\land dp_i$ is a symplectic form.

We hereafter assume $G$ to be a connected Lie group with a Lie algebra $\mathfrak{g}$. Let us set some definitions and conventions concerning Lie group actions. If $\Phi: G\times P\rightarrow  P$ is a Lie group action and $\Phi$ is known from the context or its explicit form is irrelevant, we will write $gz$ instead of $\Phi(g,z)$ for every $g\in G$ and each $z\in P$. The {\it isotropy subgroup} of $\Phi$ at $z\!\in\!P$ is the Lie subgroup of $G$ given by $G_z:=\{g\in G:gz=z\}\subset G$.
{\it The orbit} of a point $z\in P$ relative to $\Phi$ is defined as $G z:=\{gz\,:\,g\in G\}$. The orbits of $\Phi$ are immersed submanifolds in $P$ \cite{Le13}. The {\it fundamental vector field} of a Lie group action $\Phi:G\times P\rightarrow  P$ related to $\xi\in\mathfrak{g}$ is the vector field on $P$ given by
\begin{equation*}
(\xi_P)_z:=\frac{d}{ds}\bigg|_{s=0}\Phi(\exp(s\xi),z),\quad \forall z\in P,
\end{equation*}
where $\exp:\mathfrak{g}\rightarrow  G$ is the exponential map related to the Lie group $G$. Then, $T_{\tilde{z}}(Gz)=\{(\xi_P)_{\tilde{z}}:\xi\in\mathfrak{g}\}$ for each $\tilde{z}\in Gz$.

Every Lie group acts on itself by inner automorphisms $I:(g,g')\in G\times G\mapsto gg'g^{-1}\in G$. This induces a Lie group action of $G$ on its Lie algebra $\mathfrak{g}$ given by ${\rm Ad}:(g,v)\in G\times \mathfrak{g}\mapsto T_eI_g(v)\in \mathfrak{g}$ and its dual ${\rm Ad}^*:(g,\vartheta)\in G\times \mathfrak{g}^*\mapsto \vartheta\circ {\rm Ad}_{g^{-1}}\in \mathfrak{g}^*$, which is called the {\it co-adjoint action} of $G$.

An {\it almost precosymplectic manifold of rank $2r$} is a triple $(M,\omega,\eta)$, where $M$ is a $(2n+1)$-dimensional manifold, $\omega\in \Omega^2(M)$ has constant rank $2r$ and $\eta\in\Omega^1(M)$ is such that $\omega^r\wedge\eta$ does not vanish at any point of $M$. If $\omega$ and $\eta$ are additionally closed, then $(M,\omega,\eta)$ is called a {\it precosymplectic manifold} of rank $2r$. If, moreover, $r=n$, then $(M,\omega,\eta)$ is said to be a {\it cosymplectic manifold}.  Note that the fact that $\eta\wedge \omega^n$ does not vanish at any point of $M$ amounts to $\eta\wedge \omega^n$ being a volume form. Hence, cosymplectic manifolds are always orientable and odd-dimensional. 

The Darboux theorem for cosymplectic manifolds   \cite{Al89} states that, given a cosymplectic manifold $(M,\omega,\eta)$, each point $x\in M$ admits a local coordinate system $\{t,q^1,\ldots,q^n,p_1,\ldots,p_n\}$ on an open neighbourhood $U$ of $x$ so that
\[
\omega=\sum_{i=1}^ndq^i\wedge dp_i,\qquad \eta=dt
\]
on $U$. Such local coordinates are called {\it cosymplectic Darboux coordinates}, although we simply call them {\it Darboux coordinates} if it is clear from the context that they are related to a cosymplectic manifold. Note that cosymplectic Darboux coordinates are not unique.

Each cosymplectic manifold $(M,\omega,\eta)$ admits a unique vector field $R$ on $M$ such that
\begin{equation}\label{ReebCon}
\iota_R\,\omega=0,\qquad \iota_R\,\eta=1.
\end{equation}
We call $R$ the {\it Reeb vector field} of $(M,\omega,\eta)$. In Darboux coordinates for $(M,\omega,\eta)$, let us say $\{t,q^1,\ldots,q^n,p_1,\ldots,p_n\}$, the Reeb vector field reads $R=\frac{\partial}{\partial t}$.

A {\it cosymplectomorphism} is a map $\varphi:M_1\rightarrow  M_2$ between cosymplectic manifolds $(M_1,\omega_1,\eta_1)$ and $(M_2,\omega_2,\eta_2)$ such that $\varphi^*\omega_2=\omega_1$ and $\varphi^*\eta_2=\eta_1$. A {\it cosymplectic Lie group action} relative to $(M,\omega,\eta)$ is a Lie group action $\Phi:G\times M\rightarrow  M$ such that, for every $g\in G$, the map $\Phi_g:M\rightarrow  M$ is a cosymplectomorphism. In other words,
\begin{equation}
\label{Eq::ContactomorphismG}
\Phi_g^*\,\omega=\omega,\qquad \Phi_g^*\,\eta=\eta,\qquad  \forall g\in G.
\end{equation}
Since manifolds are assumed to be connected in this work, $\Phi:G\times M\rightarrow  M$ is a cosymplectomorphism if and only if
\begin{equation}
\label{Eq::Contactomorphism}
\mathcal{L}_{\xi_M}\omega=0,\qquad\mathcal{L}_{\xi_M}\eta=0,\qquad\forall\xi\in\mathfrak{g} .
\end{equation}
Moreover, since $d\eta=0$, the condition $\mathcal{L}_{\xi_M}\eta=0$ implies that $\iota_{\xi_M}\eta$ is a constant function on $M$, which does not need to be one or zero. This will be of interest afterwards to study the restricted circular three-body problem \cite{AM78}.

Given a cosymplectic manifold $(M,\omega,\eta)$, the vector bundle morphism
\[
\flat :TM\rightarrow  T^*M, \qquad v_x\in T_xM\mapsto \flat (v_x):=\iota_{v_x}\omega_x+(\iota_{v_x}\eta_x)\eta_x\in T_x^*M,\qquad \forall x\in M,
\]
is a vector bundle isomorphism. 

The other way around, given a closed $\widehat \omega\in\Omega^2(M)$ and a closed $\widehat \eta\in\Omega^1(M)$, then $\widehat\omega$ and $\widehat\eta$ give rise to a cosymplectic manifold $(M,\widehat\omega,\widehat\eta)$, if the map
\[
\widehat\flat :TM\rightarrow  T^*M, \qquad v_x\in T_xM\mapsto \widehat\flat (v_x):=\iota_{v_x}\widehat\omega_x+(\iota_{v_x}\widehat\eta_x)\widehat\eta_x\in T_x^*M,\qquad \forall x\in M,
\]
is a vector bundle isomorphism (cf. \cite{Al89}). Hereafter, $(M,\omega,\eta)$ will stand for a cosymplectic
manifold.

Given $(M,\omega,\eta)$, every $f\in C^\infty(M)$ gives rise to three vector fields:

\begin{itemize}
    \item A {\it gradient vector field}, namely
\begin{equation}
\label{Eq::GradVecField}
\nabla f:=\flat^{-1}(df),
\end{equation}
which amounts to saying that $\iota_{\nabla f}\omega=df-(Rf)\eta$ and $\iota_{\nabla f}\eta=Rf$.

    \item A {\it Hamiltonian vector field}, $X_f$, given by
\begin{equation}
\label{Eq::HamVecField}
    X_f:=\flat^{-1}(df-(Rf)\eta),
\end{equation}
    which is equivalent to $\iota_{X_{f}}\omega=df-(Rf)\eta$ and $\iota_{X_f}\eta=0$.
    \item An {\it evolution vector field} 
    \begin{equation}
    \label{Eq::EvVecField}
        E_f:=R+X_f.
    \end{equation}
\end{itemize}

In Darboux coordinates for $(M,\omega,\eta)$ around a point $x\in M$, the vector fields \eqref{Eq::GradVecField}, \eqref{Eq::HamVecField}, and \eqref{Eq::EvVecField} read
\[
\nabla f=\frac{\partial f}{\partial t}\frac{\partial}{\partial t}+\sum_{i=1}^n\left(\frac{\partial f}{\partial p_i}\frac{\partial}{\partial q^i}-\frac{\partial f}{\partial q^i}\frac{\partial}{\partial p_i}\right),\qquad
X_f=\sum_{i=1}^n\left(\frac{\partial f}{\partial p_i}\frac{\partial}{\partial q^i}-\frac{\partial f}{\partial q^i}\frac{\partial}{\partial p_i}\right),
\]
and
\[
E_f=\frac{\partial}{\partial t}+\sum_{i=1}^n\left(\frac{\partial f}{\partial p_i}\frac{\partial}{\partial q^i}-\frac{\partial f}{\partial q^i}\frac{\partial}{\partial p_i}\right).
\]

The integral curves of $E_f$ are given, in Darboux coordinates, by the solutions of
\begin{subequations}
\begin{equation}
\label{Eq::CosymHamEq}
\frac{dt}{ds}=1,\qquad \frac{dq^i}{ds}=\frac{\partial f}{\partial p_i}(t,q,p),\qquad \frac{dp_i}{ds}=-\frac{\partial f}{\partial q^i}(t,q,p),\qquad i=1,\ldots,n,
\end{equation}
where $(t,q,p)$ stands for $(t,q^1,\ldots,q^n,p_1,\ldots,p_n)$. 

Let $T$ be a one-dimensional manifold and let $(P,\omega)$ be a symplectic manifold. Let us define a cosymplectic manifold on $M=T\times P$ and its related  Hamilton equations. Let $\pi_T:M\rightarrow  T$ and let $\pi_P:M\rightarrow  P$ be the projections onto the first and second factors of $M=T\times P$, respectively. A symplectic form $\omega$ on $P$ gives rise to a closed differential two-form  $\omega_P:=\pi_P^*\omega$ on $M$. Meanwhile, a non-vanishing differential one-form $\eta$ on $T$ gives rise to a closed differential one-form $\eta_T=\pi_T^*\,\eta$ on $M$. Then, $(T\times P,\omega_P,\eta_T)$ becomes a cosymplectic manifold. If not otherwise stated, Darboux coordinates on $(T\times P,\omega_P,\eta_T)$ will be assumed to be of the form $\{t,q^1,\ldots,q^n,p_1,\ldots,p_n\}$, where $t$ is the pull-back to $T\times P$ of a potential of $\eta$, while $q^1,\ldots,q^n,p_1,\ldots,p_n$ are the pull-backs to $M$ of Darboux coordinates for $\omega$ on $P$. It is worth noting that it is common to denote the pull-backs of functions on $T$ and $P$ to $M=T\times P$ in the same way as the initial variables in $T$ and $P$.

If $M=\mathbb{R}\times T^*Q$, $\eta=dt$, and $\omega=\sum_{i=1}^ndq^i\wedge dp_i$, then (\ref{Eq::CosymHamEq}) can be rewritten as
\begin{equation}
\label{Eq::HamCosEq}
    \frac{dq^i}{dt}=\frac{\partial f}{\partial p_i}(t,q,p),\qquad \frac{dp_i}{dt}=-\frac{\partial f}{\partial q^i}(t,q,p),\qquad i=1,\ldots,n.
\end{equation}
\end{subequations} Hence, \eqref{Eq::HamCosEq} retrieves the Hamilton equations for a $t$-dependent Hamiltonian system on $T^*Q$ (see \cite{AM78,LZ21}). 

More generally, given a cosymplectic manifold $(M:=T\!\times\! P,\omega_P,\eta_T)$, we call {\it Hamilton equations} induced by $h\in C^\infty(M)$  the system of differential equations that, locally on each coordinated open $U\subset M$ by Darboux coordinates $\{t,q^1,\ldots,q^n,p_1,\ldots,p_n\}$, takes the form  
\begin{equation}\label{Eq:CoHE}
\frac{dq^i}{dt}=\frac{\partial h}{\partial p_i}(t,q,p),\qquad \frac{dp_i}{dt}=-\frac{\partial h}{\partial q^i}(t,q,p),\qquad i=1,\ldots,n.
\end{equation}

Roughly speaking, \eqref{Eq:CoHE} is the system of differential equations for the integral curves of $E_h$ parametrised by points of $T$ described by the coordinate $t$ in Darboux coordinates obtained from coordinates on $T$ and $P$ as indicated previously. Although the variable $t$ is defined up to an additive constant, equations \eqref{Eq:CoHE} are equivalent for each possible variable $t$ of our Darboux coordinates. Hence, our Hamilton equations have a geometrical meaning. The above remains valid even for a case like $T=\mathbb{S}^1$ as far as particular solutions are allowed to match every point of $T$ with several points of $P$ (see Figure \ref{Fig:Exa}).

\begin{figure}[ht]\label{Fig:Exa}
\includegraphics{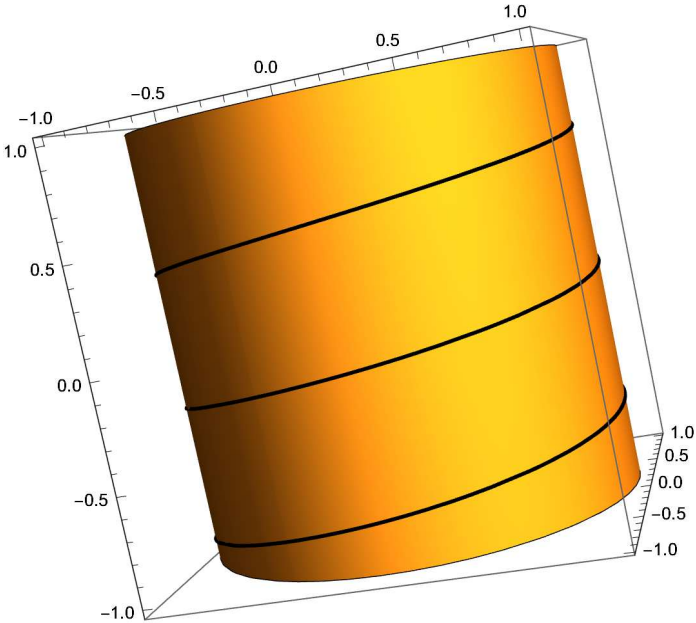}
{\caption{Example of solutions of Hamilton equations on a cosymplectic manifold $(\mathbb{S}^1\times T^*\mathbb{R},\omega_{T^*\mathbb{R}},\eta_{\mathbb{S}^1})$ for $\mathbb{S}^1$ being  the circle of radius one and centred at zero. Just the coordinates of solutions in $\mathbb{S}^1\times \mathbb{R}$ are represented}}
\end{figure}

Meanwhile, the integral curves of $X_f$ on $M$ are given by the solutions of
\[
\frac{dt}{ds}=0,\qquad \frac{dq^i}{ds}=\frac{\partial f}{\partial p_i}(t,q,p),\qquad \frac{dp_i}{ds}=-\frac{\partial f}{\partial q^i}(t,q,p),\qquad i=1,\ldots,n.
\]
It is worth noting that $X_h$ on $M=\mathbb{R}\times P$ can also be considered as a so-called {\it $t$-dependent vector field} \cite{AM78,LZ21}. Then, its integral curves take the form $\mathbb{R}\ni t\mapsto(t,z(t))\in \mathbb{R} \times P$, where $z(t)$ is a solution to \eqref{Eq:CoHE}. This also shows that its solutions have geometric meaning. Note that an analogue could be done for any $(M:=T\times P,\omega_P,\eta_T)$, but this could lead to having solutions of Hamilton equations mapping  every $t\in T$ into several different points of $P$ while being possible locally around every $t_0\in T$ to consider a solution as a union of local sections of $\pi_T:M\rightarrow  T$ whose images do not intersect each other (see again Figure \ref{Fig:Exa}).

Let us finally provide a pair of useful results on cosymplectic geometry.

\begin{proposition}
\label{Prop::gradXR}
The gradient vector field  of $f\in C^\infty(M)$ relative to $(M,\omega,\eta)$ reads $\nabla \,f=X_f+(Rf)R$. If $Rf=0$, then $[R,X_f]=0$.
\end{proposition}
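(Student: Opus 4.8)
The plan is to exploit that the bundle map $\flat:TM\to T^*M$, $v_x\mapsto \iota_{v_x}\omega + (\iota_{v_x}\eta)\eta$, is a vector bundle isomorphism. Hence any vector field on $M$ is uniquely determined by its image under $\flat$, equivalently by the pair of contractions $\iota_{(\cdot)}\omega$ and $\iota_{(\cdot)}\eta$. Both claims reduce to computing these contractions and invoking injectivity of $\flat$.

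For the first identity I would apply $\flat$ directly to the candidate right-hand side. Using $\iota_{X_f}\omega = df-(Rf)\eta$, $\iota_{X_f}\eta=0$, and the Reeb conditions $\iota_R\omega=0$, $\iota_R\eta=1$, I obtain
\[
\flat\big(X_f+(Rf)R\big) = \big(df-(Rf)\eta\big) + (Rf)\,\eta = df .
\]
Since $\flat(\nabla f)=df$ by the definition \eqref{Eq::GradVecField} of the gradient vector field, injectivity of $\flat$ yields $\nabla f = X_f+(Rf)R$ at once.

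For the second claim, since $\flat$ is an isomorphism it suffices to show $\iota_{[R,X_f]}\omega=0$ and $\iota_{[R,X_f]}\eta=0$. I would use the commutator--contraction identity $\iota_{[R,X_f]}=\mathcal{L}_R\iota_{X_f}-\iota_{X_f}\mathcal{L}_R$. First I record the two auxiliary facts $\mathcal{L}_R\omega=0$ and $\mathcal{L}_R\eta=0$, each an immediate consequence of Cartan's magic formula together with $d\omega=0$, $d\eta=0$, and the Reeb conditions (so $\mathcal{L}_R\omega = d\iota_R\omega = 0$ and $\mathcal{L}_R\eta = d\iota_R\eta = d(1)=0$). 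Contracting with $\omega$, the term $\iota_{X_f}\mathcal{L}_R\omega$ vanishes, while $\mathcal{L}_R\iota_{X_f}\omega = \mathcal{L}_R\big(df-(Rf)\eta\big) = d(Rf)-(R(Rf))\eta$, using $\mathcal{L}_R df = d(Rf)$ and $\mathcal{L}_R\eta=0$; the hypothesis $Rf=0$ annihilates both summands. Contracting with $\eta$ is even easier: $\iota_{X_f}\eta=0$ and $\mathcal{L}_R\eta=0$ give $\iota_{[R,X_f]}\eta=0$ directly. Therefore $\flat([R,X_f])=0$ and the isomorphism property forces $[R,X_f]=0$.

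There is no genuine obstacle here; the only point requiring care is the bookkeeping in the second part, namely establishing $\mathcal{L}_R\omega=0$ and $\mathcal{L}_R\eta=0$ and correctly expanding $\mathcal{L}_R((Rf)\eta)=(R(Rf))\eta$. As an alternative one may argue in cosymplectic Darboux coordinates, where $R=\partial/\partial t$ and $Rf=\partial f/\partial t$: the condition $Rf=0$ makes the coefficients of $X_f$ independent of $t$, whence $[\partial/\partial t,X_f]=0$ locally, which suffices because $R$ and $X_f$ are globally defined and the identity is coordinate-free. In passing, the same intrinsic computation performed without the hypothesis gives the general relation $[R,X_f]=X_{Rf}$, of which the stated result is the special case $Rf=0$.
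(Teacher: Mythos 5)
Your proof is correct and follows essentially the same route as the paper: both parts reduce to computing the contractions with $\omega$ and $\eta$ and invoking the isomorphism $\flat$ (equivalently the splitting $TM=\ker\omega\oplus\ker\eta$), with Cartan calculus handling the bracket. The only cosmetic differences are that you verify $\flat(X_f+(Rf)R)=df$ directly rather than decomposing $\nabla f=X_f+Y$, and you place the Lie derivative on $R$ instead of $X_f$ in the commutator identity; your closing remark that the general relation is $[R,X_f]=X_{Rf}$ is also correct.
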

\begin{proof}
Using the definition of Hamiltonian and gradient vector field, we have
\[
\iota_{\nabla f}\,\omega=df -(Rf)\eta=\iota_{X_f}\omega\quad\Rightarrow  \quad \iota_{\nabla f-X_f}\omega=0.
\]
Therefore, $\nabla f=X_f+Y$ for some vector field $Y$ on $M$ such that $\iota_Y\omega=0$. Hence,
\[
\iota_{\nabla f}\eta=\iota_{X_f}\eta+\iota_Y\eta=Rf.
\]
Since $\iota_{X_f}\eta=0$ and $\ker\eta\oplus\ker\omega=TM$, then $Y=(Rf)R$ and $\nabla f=X_f+(Rf)R$.

Moreover, if $Rf=0$, then
\[
\iota_{[X_f,R]}\omega=\mathcal{L}_{X_f}\iota_R\omega-\iota_R\mathcal{L}_{X_f}\omega=-\iota_Rd\iota_{X_f}\omega=\iota_Rd(df-(Rf)\eta)=0,
\]
and
\[
\iota_{[X_f,R]}\eta=\mathcal{L}_{X_f}\iota_R\eta -\iota_R\mathcal{L}_{X_f}\eta=-\iota_Rd\iota_{X_f}\eta=0.
\]
Since $TM=\ker\eta\oplus\ker\omega$, then $[X_f,R]=0$.
\end{proof}

Every cosymplectic manifold $(M,\omega,\eta)$ gives rise to a Poisson bracket $\{\cdot,\cdot\}_{\omega,\eta}\!:\!C^\infty(M)\times C^\infty(M)\rightarrow  C^\infty(M)$ of the form
\begin{equation}
\label{Eq::PoissonStructure}
\{f,g\}_{\omega,\eta}:=\omega(\nabla \,f,\nabla \,g)=\omega(X_f,X_g),\qquad \forall f,g\in C^\infty(M),
\end{equation}
where the last equality is a consequence of Proposition \ref{Prop::gradXR} and $\iota_R\,\omega=0$. As in the symplectic case \begin{equation}
\label{Eq::AntiMorphism}
X_{\{f,g\}_{\omega,\eta}}=-[X_f,X_g],\qquad \forall f,g\in C^\infty(M).
\end{equation} 

It is worth noting that the Poisson bivector associated with the Poisson bracket $\{\cdot,\cdot\}_{\omega,\eta}$ is given by
\[
\Lambda_{\omega,\eta}(x)(\alpha_x,
\beta_x)=\{f,g\}(x)=\omega_{x}(X_f,X_g),\qquad \forall x\in M,
\]
where $df(x)=\alpha_x$ and $dg(x)=\beta_x$ are elements of $T^*_xM$ for certain $f,g\in C^\infty(M)$.
In Darboux coordinates, the Poisson bivector $\Lambda_{\omega,\eta}$ reads\begin{equation}
\label{eq:PoiCosym}
\Lambda_{\omega,\eta}=\sum_{i=1}^n\frac{\partial}{\partial x^i}\wedge \frac{\partial}{\partial p_i}.
\end{equation}
It is remarkable that $\mathcal{L}_R\Lambda_{\omega,\eta}=0$. 

The space of Hamiltonian vector fields relative to a cosymplectic manifold $(M,\omega,\eta)$, let us say ${\rm Ham}(M,\omega,\eta)$, is a Lie algebra. Moreover, there exists a homomorphism of Lie algebras $f\in C^\infty(M)\mapsto -X_f\in {\rm Ham}(M,\omega,\eta)$. Note that $X_f$ is a Hamiltonian vector field relative to the Poisson bracket $\{\cdot,\cdot\}_{\omega,\eta}$, namely $X_f=\{\cdot,f\}_{\omega,\eta}$, but $E_f$ is never so, while  $\nabla f$ is not Hamiltonian in general (cf. \cite{Va94}).

\subsection{Cosymplectic, symplectic, and Poisson manifolds}
\label{SubSec::CosStructures}

In physical applications, one is mostly interested in cosymplectic manifolds $(T\times P,\omega_P,\eta_T)$, where $T$ is a one-dimensional manifold describing a certain time interval and $P$ is a symplectic manifold. Let us detail some interesting cases of $T$ and $P$. If $Q$ is the configuration manifold of a physical system, then $T^*Q$ can be endowed with its canonical symplectic structure and
its common to set $P=T^*Q$ (see \cite{AM78}). On the other hand, we can also assume $P=TQ$ to be endowed with the symplectic structure induced by a {\it regular Lagrangian} function (see \cite{RM99} for details). Meanwhile, $T$ can be chosen to be $\mathbb{R}$ with its natural variable $t$ and to define $\eta=dt$ on $\mathbb{R}$, which is used to describe a physical system at every time $t\in \mathbb{R}$. Another option is to set $T=\mathbb{S}^1$ with the closed non-degenerate one-form $d\theta$, where $\mathbb{S}^1$ is the unit circle in $\mathbb{R}^2$ with a coordinate given by the angle $\theta$ of each point of $\mathbb{S}^1$ relative to a reference point in it. The model $T=\mathbb{S}^1$ may be employed to study $t$-dependent Hamilton equations with a $t$-dependent periodic Hamiltonian. 

We recall that cosymplectic Darboux coordinates for $(T\times P,\omega_P,\eta_T)$ are assumed to be of the form $\{t,q^1,\ldots,q^n,p_1,\ldots,p_n\}$ so that $\{q^1,
\ldots,q^n,p_1,\ldots,p_n\}$ are the pull-back to $M$ of Darboux coordinates for a symplectic form on $P$, while $t$ is the pull-back to $M$ of a potential of a closed one-form on $T$. 

Although a cosymplectic manifold $(M,
\omega,\eta)$ induces a symplectic manifold $(\mathbb{R}\times M, \widehat{\omega})$ and a Poisson manifold $(M, \{\cdot,\cdot\}_{\omega,\eta})$,  we will show that these approaches are not appropriate to generalise the energy-momentum method on symplectic manifolds to cosymplectic manifolds. 

Let us detail the following lemma \cite{LS93}, which states how symplectic and cosymplectic manifolds are naturally related. A proof is given, as it uses to be absent in many works in the literature.
\begin{lemma}
\label{Lemm::CosymSym}
Let $\omega\in\Omega^2(M), \eta\in\Omega^1(M)$ and let ${\rm pr} :\mathbb{R}\times M\rightarrow  M$ be the canonical projection onto $M$. Let $s$ be the natural coordinate in $\mathbb{R}$ understood as a variable in $\mathbb{R}\times M$ in the natural manner. Then, $(M,\omega,\eta)$ is a cosymplectic manifold if and only if $(\mathbb{R}\times M,{\rm pr}^*\omega+ds\wedge {\rm pr}^*\eta=:\widehat{\omega})$ is a symplectic manifold. Moreover,  ${\rm pr}$ is a Poisson morphism, i.e.
\[
\{f\circ {\rm pr},k\circ {\rm pr}\}_{\widehat{\omega}}=\{f,k\}_{\omega,\eta}\circ {\rm pr},\qquad \forall f,k\in C^\infty(M).
\]
\end{lemma}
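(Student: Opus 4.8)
The statement splits into two independent assertions about $\widehat{\omega}$: that it is closed exactly when $\omega$ and $\eta$ are, and that it is non-degenerate exactly when $\eta\wedge\omega^n$ is a volume form. The guiding principle throughout is that every form on $\mathbb{R}\times M$ decomposes uniquely into a part containing the factor $ds$ and a part free of it, and that $\pr^*$ is injective on forms because $\pr$ is a surjective submersion. I would note at the outset that $(\mathbb{R}\times M,\widehat{\omega})$ can be symplectic only when $\dim M$ is odd, and $(M,\omega,\eta)$ can be cosymplectic only when $\dim M$ is odd, so the equivalence is trivial in even dimension and I may assume $\dim M=2n+1$. For closedness I would compute $d\widehat{\omega}=\pr^*d\omega-ds\wedge\pr^*d\eta$, using $d(ds)=0$; the two summands are respectively $ds$-free and $ds$-containing, so they vanish independently, and injectivity of $\pr^*$ gives $d\widehat{\omega}=0$ iff $d\omega=0$ and $d\eta=0$.

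For non-degeneracy I would use the pointwise fact that a two-form on a $2(n+1)$-dimensional manifold is non-degenerate precisely where its $(n+1)$-st exterior power is non-zero. Expanding $\widehat{\omega}^{n+1}=(\pr^*\omega+ds\wedge\pr^*\eta)^{n+1}$ by the binomial theorem, legitimate since both summands have even degree and hence commute, and using $(ds\wedge\pr^*\eta)^2=0$ together with $(\pr^*\omega)^{n+1}=\pr^*(\omega^{n+1})=0$ (as $\omega^{n+1}$ is a $(2n+2)$-form on the $(2n+1)$-manifold $M$), only the cross term survives:
\[
\widehat{\omega}^{n+1}=(n+1)\,(\pr^*\omega)^n\wedge ds\wedge\pr^*\eta=(n+1)\,ds\wedge\pr^*(\eta\wedge\omega^n).
\]
This is nowhere vanishing iff $\eta\wedge\omega^n$ is a volume form on $M$, which is exactly the non-degeneracy condition in the definition of a cosymplectic manifold. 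Combining the two halves yields the equivalence. I expect this top-power computation, and in particular the degree bookkeeping (the vanishing of $\omega^{n+1}$ and of $(ds\wedge\pr^*\eta)^2$), to be the most delicate point, though it is ultimately routine.

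For the Poisson-morphism claim I would first identify the symplectic Hamiltonian vector field of $F:=f\circ\pr$. Writing $\widehat{X_f}$ for the horizontal lift of $X_f$ to $\mathbb{R}\times M$ (no $\partial_s$ component, $\pr$-related to $X_f$), I claim $X_F=\widehat{X_f}+(Rf\circ\pr)\,\partial_s$. To verify this I would contract the candidate with $\widehat{\omega}$: the horizontal lift gives $\iota_{\widehat{X_f}}\pr^*\omega=\pr^*(\iota_{X_f}\omega)=\pr^*(df-(Rf)\eta)$ and $\iota_{\widehat{X_f}}(ds\wedge\pr^*\eta)=-ds\,\pr^*(\iota_{X_f}\eta)=0$ because $\iota_{X_f}\eta=0$, while the vertical part contributes $\iota_{(Rf\circ\pr)\partial_s}(ds\wedge\pr^*\eta)=(Rf\circ\pr)\,\pr^*\eta$ and $\iota_{(Rf\circ\pr)\partial_s}\pr^*\omega=0$. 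Summing, the two $\pr^*\eta$-terms cancel and $\iota_{X_F}\widehat{\omega}=\pr^*df=dF$, as required; in particular $X_F$ is $\pr$-related to $X_f$. The defining property $\iota_{X_f}\eta=0$ of the cosymplectic Hamiltonian field is precisely what makes this work.

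Finally I would compute the bracket directly from $\{F,K\}_{\widehat{\omega}}=\widehat{\omega}(X_F,X_K)$ for $K:=k\circ\pr$. Since $\pr^*\eta(X_F)=(\iota_{X_f}\eta)\circ\pr=0$ and likewise for $K$, the term $ds\wedge\pr^*\eta$ contributes nothing, so $\widehat{\omega}(X_F,X_K)=\pr^*\omega(X_F,X_K)=\omega(X_f,X_k)\circ\pr=\{f,k\}_{\omega,\eta}\circ\pr$, the last equality using that $X_F,X_K$ are $\pr$-related to $X_f,X_k$. As a cross-check I would redo the whole computation in cosymplectic Darboux coordinates, where $\widehat{\omega}=\sum_{i}dq^i\wedge dp_i+ds\wedge dt$ is the canonical symplectic form and both brackets visibly reduce to $\sum_{i}(\partial_{q^i}f\,\partial_{p_i}k-\partial_{p_i}f\,\partial_{q^i}k)$.
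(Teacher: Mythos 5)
Your proof is correct and follows essentially the same route as the paper's: the same splitting of $d\widehat\omega$ into its $ds$-free and $ds$-containing parts, the same top-power computation $\widehat\omega^{\,n+1}=(n+1)\,ds\wedge{\rm pr}^*(\omega^n\wedge\eta)$, and the same reduction of the bracket identity to the fact that $X_{{\rm pr}^*f}$ is ${\rm pr}$-related to $X_f$. The only difference is cosmetic: you exhibit the explicit formula $X_{{\rm pr}^*f}=\widehat{X_f}+(Rf\circ{\rm pr})\,\partial_s$ and verify it intrinsically, whereas the paper defers that verification to Darboux coordinates.
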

\begin{proof}
Recall that if $(M,\omega,\eta)$ is a cosymplectic manifold and $\dim M=2n+1$, then $\omega^n\wedge\eta$ is a volume form. Since $\om\in\Omega^2(M)$ and $\eta\in\Omega^1(M)$ are closed,
\begin{equation}
\label{Eq::OmHatClosed}
d\widehat\omega=d({\rm pr}^*\omega+ds\wedge {\rm pr}^*\eta)={\rm pr}^*d\om-ds\wedge{\rm pr}^*d\eta=0,
\end{equation}
and $\widehat\omega\in\Omega^2(\mathbb{R}\times M)$ is also closed. Since $\vartheta^{n+1}=0$ for every differential two-form $\vartheta$ on $M$, one has that
\begin{multline}
    \label{Eq::OmHatVol}
\widehat\omega^{n+1}=({\rm pr}^*\omega+ds\wedge{\rm pr}^*\eta)^{n+1}=(n+1)({\rm pr}^*\omega)^n\wedge ds\wedge {\rm pr}^*\eta=(n+1)ds\wedge{\rm pr}^*(\omega^n\wedge\eta),
\end{multline}
is clearly a volume form on $\mathbb{R}\times M$ and it is non-zero. Thus, $\widehat\omega$ is non-degenerate.

Conversely, if $(\mathbb{R}\times M,\widehat\omega)$ is a symplectic manifold, relation \eqref{Eq::OmHatVol} shows that $\omega^n\wedge\eta\neq 0$. Moreover, \eqref{Eq::OmHatClosed} gives that ${\rm pr}^*\omega$ and ${\rm pr}^*\eta$ are closed forms. Since ${\rm pr}$ is a surjective submersion, $d\omega=0$ and $d\eta=0$. Therefore, $(M,\omega,\eta)$ is a cosymplectic manifold. 

Moreover, if $\{\cdot,\cdot\}_{\widehat{\omega}}$ is the Poisson bracket induced by the symplectic form $\widehat{\omega}$, then
\begin{multline*}
{\rm pr}^*\{f,k\}_{\omega,\eta}=-{\rm pr}^*(\iota_{X_f}\iota_{X_k}\omega)=-{\rm pr}^*(\iota_{X_f}dk-(Rk)\iota_{X_f}\eta)\\=-\iota_{X_{{\rm pr}^*f}}{\rm pr}^*dk=-\iota_{X_{{\rm pr}^*f}}d{\rm pr}^*k=\{{\rm pr}^*f,{\rm pr}^*k\}_{\widehat{\omega}},
\end{multline*}
for every $f,k\in C^\infty(M)$,
and ${\rm pr}$ is a Poisson morphism. Note that $X_{{\rm pr}^*f}$ stands for the Hamiltonian vector field on $(\mathbb{R}\times M, \widehat{\omega})$ of the function ${\rm pr}^*f\in C^\infty(\mathbb{R}\times M)$. The fact that ${\rm pr}_* X_{{\rm pr}^*f}=X_f$ follows, for instance, from writing $X_{{\rm pr}^*f}$ in Darboux coordinates for $\widehat{\omega}$ obtained by adding $s$ to the pull-back to $\mathbb{R}\times M$ of some Darboux coordinates for $(M,\omega,\eta)$.
\end{proof}

Let us show that the vector fields $\nabla f$, $X_f$, and $E_f$ on $M$ cannot, straightforwardly, be considered as Hamiltonian vector fields relative to the symplectic form $\widehat\om$ on $\mathbb{R}\times M$ induced by the cosymplectic manifold $(M,\om,\eta)$. Let us comment on this fact by relating $f$, $\nabla f$, $X_f$, and $E_f$ to natural mathematical structures on $\mathbb{R}\times M$, e.g. let $\tilde{f}:={\rm pr}^*f$ and let $\tl F_g$, $\tl X_{f}$, and $\tl E_{f}$ be the vector fields on $\mathbb{R}\times M$ obtained by considering the isomorphism $T_{(s,x)}(\mathbb{R}\times M)\simeq T_s\mathbb{R} \oplus T_x M$ for every $s\in \mathbb{R}$ and $x\in M$. Equivalently, one can define $\tl F_g$, $\tl X_{f}$, and $\tl E_{f}$ to be the only vector fields on $\mathbb{R}\times M$ projecting onto $\nabla f$, $X_f$, and $E_f$ via ${\rm pr}_*$, respectively, and satisfying $\iota_{\tilde{F}_g}ds=\iota_{\tilde{X}_{f}}ds=\iota_{\tilde{E}_{f}}ds=0$. Then,
\begin{multline*}
d\iota_{\tilde{F}_g}\,\,\ho=d(\iota_{\tilde{F}_g}\,{\rm pr}^*\omega - (\iota_{\tilde{F}_g}\,{\rm pr}^*\eta)ds)=d({\rm pr}^*(\iota_{\nabla f}\,\omega )-{\rm pr}^*( Rf) ds)\\={\rm pr}^*(d(df-(Rf)\eta))-{\rm pr}^*(d(Rf))\wedge ds=-{\rm pr}^*(d(Rf))\wedge( ds+ {\rm pr}^*\eta).
\end{multline*}
Therefore, $\tilde{F}_g$ is not, in general, a Hamiltonian vector field on $\mathbb{R}\times M$ relative to $\widehat\om$. Similarly,
\begin{multline*}
d\iota_{\tilde{X}_{f}}\,\widehat{\omega}=d(\iota_{\tilde{X}_{f}}\,{\rm pr}^*\omega-(\iota_{\tilde{X}_{f}}{\rm pr}^*\eta)ds)=d({\rm pr}^*(\iota_{X_f}\omega)-{\rm pr}^*(\iota_{X_f}\eta)ds)\\={\rm pr}^*(d\iota_{X_f}\omega)-{\rm pr}^*(d\iota_{X_f}\eta)\wedge ds=-{\rm pr}^*(d(Rf)\wedge\eta),
\end{multline*}
and $\tilde{X}_{f}$ is not, neither, a Hamiltonian vector field on $\mathbb{R}\times M$ in general. Finally,

\begin{equation*}
d\iota_{\tilde{E}_{f}}\,\widehat{\omega}=d(\iota_{\tilde{R}}\,\ho+\iota_{\tilde{X}_{f}}\,\widehat{\omega})=d\iota_{\tilde{X}_{f}}\,\ho=-{\rm pr}^*(d(Rf)\wedge\eta),
\end{equation*}
where $\tilde{R}$ is the only vector field on $\mathbb{R}\times M$ pojectable onto $M$ via ${\rm pr}_*$ and satisfying $\iota_{\tilde{R}} ds=0$.
Then, $\tilde{E}_{f}$ is not, in general,  a Hamiltonian vector field on $\mathbb{R}\times M$.

Notwithstanding, if $d(Rf)=0$, then $\nabla f$, $X_f$, $E_f$ give rise naturally to Hamiltonian vector fields $\tilde{F}_g$, $\tilde{X}_{f}$, and $\tilde{E}_{f}$, respectively, relative to the symplectic structure on $\mathbb{R}\times M$. However, in general, the latter does not need to be true. The  condition $Rf=0$, which appears in a different disguise in cosymplectic theory \cite{Al89} may be used to define, on cosymplectic manifolds, an analogue of the geometric structures and techniques appearing in symplectic manifolds \cite{LS93}. 

It is remarkable that there exist other methods to consider some of the above stressed vector fields in $M$ as Hamiltonian vector fields on $\mathbb{R}\times M$  (see for instance \cite{LS93} or the proof of Lemma \ref{Lemm::CosymSym}). Nevertheless, these methods use to change the properties of  vector fields on $M$ in such a manner that they may make them harder to study, e.g. some methods can turn a vector field on $M$ with zeros into one without them in $\mathbb{R}\times M$, which can give rise to problems to study stability. For instance, $(Rf)\frac{\partial}{\partial s}+X_f$ is a Hamiltonian vector field on $\mathbb{R}\times M$ with respect to $\widehat{\omega}$ and projecting onto $X_f$ relative to ${\rm pr}_*$. The vector field $(Rf)\frac{\partial}{\partial s}+X_f$ has different stability properties than $X_f$, e.g. it may have not equilibrium points at all while $X_f$ has, and thus introduces new difficulties in the stability analysis of the latter.

Finally, we would like to recall that studying a problem defined on a manifold $M$ through a related problem in a manifold of larger dimension may turn the problem on $M$ harder to analyse (see \cite{LX22} for instance) and it goes, at the every end, against the original idea in the foundations of differential geometry: structures must be studied in terms of the properties of the manifold where structures are naturally defined (unless one has a very good reason to do otherwise like in \cite{GG22}). To see the inconvenience of such approaches, let us pay attention to \cite{LS15, LS20}, which contain a classification of finite-dimensional Lie algebras of Hamiltonian vector fields on $\mathbb{R}^2$ relative to Jacobi structures around the so-called {\it generic points} of the Lie algebras. The classification was accomplished thanks to the fact that all finite-dimensional Lie algebras of vector fields around generic points are classified in $\mathbb{R}^2$ \cite{GO92}. One could relate the Jacobi structure on $\mathbb{R}^2$ to a homogeneous Poisson structure on $\mathbb{R}^3$ and look for Lie algebras of Hamiltonian vector fields relative to a homogeneous Poisson bracket on $\mathbb{R}^3$ that could be projected onto $\mathbb{R}^2$, which would recover the Lie algebras of Hamiltonian vector fields we are looking for \cite{BG17}. This approach makes things so difficult that solving the problem in this way is nowadays impossible: no full classification of finite-dimensional Lie algebras of vector fields on $\mathbb{R}^3$ is known\footnote{Only partial results due to Sophus Lie are known (see \cite{BBHLS15} and references therein). } and, even if it were known, it would lead to a more complicated approach than the approach one in \cite{LS15, LS20}. In fact, it is worth noting that there exist 28 non-trivial classes of locally diffeomorphic finite-dimensional Lie algebras of vector fields on $\mathbb{R}^2$, and an analogous 	classification on $\mathbb{R}^3$ would be much more complicated and harder to obtain (we refer to \cite{LS20,GO92} for further details).

Note that each cosymplectic Hamiltonian vector field is
Hamiltonian relative to the Poisson bracket associated with its cosymplectic manifold. Nevertheless, gradient vector fields are not in general Hamiltonian, while evolution vector fields are never   Hamiltonian relative to the Poisson bracket of cosymplectic manifolds. These facts along with others to be described in the following sections will make the Poisson bracket associated with cosymplectic structures inappropriate, by itself, to analyse the problems to be studied hereafter.  Moreover, we will not be able to use in the following sections either the classical energy-momentum method \cite{MS88} nor the Casimir-energy method, designed for studying the stability of relative equilibrium points of Hamiltonian systems on Poisson manifolds \cite{MS88}. In particular, the restricted circular three-body problem analysed in Section \ref{Sec::RDTP} will show the necessity of the new techniques of our work to study certain problems and how even the time-dependent energy-momentum method in \cite{LZ21} is not enough to analyse some types of problems studied with the new methods of our present article.

\section{Momentum maps}\label{Sec::AMomentumMap}

Let us recall the results needed to understand the cosymplectic Marsden--Weinstein reduction  \cite{Al89}. This reduction plays a crucial role in our cosymplectic energy-momentum method since its related relative equilibrium points, which are defined in Section \ref{Sec::CharStrongEqP} and are a key in our study, are points projecting, in a certain sense, to equilibrium points of a reduced Hamiltonian appearing via the cosymplectic  Marsden--Weinstein reduction. Although some of the following results can be found in the literature, we prove them here to keep our work self-contained and because some proofs are, in general, not available. For instance, the classical works \cite{Al89,Li62} are in French and the manuscripts of \cite{Al89} available online are illegible at relevant places. Other results given next are generalisations to the cosymplectic realm of standard results in symplectic geometry. Let us start by defining momentum maps for cosymplectic manifolds.

\begin{definition}
\label{Def:MomentumMap}
Let $\Phi:G\times M\rightarrow M$ be a cosymplectic action such that $\iota_{\xi_M}\eta=0$. A {\it cosymplectic momentum map} for a Lie group action $\Phi:G\times M\rightarrow  M$ relative to a cosymplectic manifold $(M,\omega,\eta)$ is a map $\B J^\Phi:M\rightarrow  \mathfrak{g}^*$ such that
\begin{equation}
\label{Eq::CosMomMap}
\iota_{\xi_M}\omega=d\langle \B J^\Phi,\xi\rangle:=dJ_\xi,\qquad RJ_\xi=0,\qquad \forall\xi\in\mathfrak{g} .
\end{equation}
\end{definition}
In the literature, it is frequently assumed that the cosymplectic momentum map is $\Ad^*$-equivariant.

\begin{definition}
    A momentum map $\mathbf{J}^\Phi:M\rightarrow  \mathfrak{g} ^*$ is {\it ${\rm Ad}^*$-equivariant} if  
    \[
    \mathbf{J}^\Phi\circ\Phi_g={\rm Ad}^*_{g^{-1}}\circ \mathbf{J}^\Phi,\qquad  \forall g\in G.
    \]
    In other words, the following diagram commutes
\\
\begin{center}
    \begin{tikzcd}
    M
    \arrow[r,"\mathbf{J}^\Phi"]
    \arrow[d,"\Phi_g"]& \mathfrak{g} ^*
    \arrow[d,"{\rm Ad}^*_{g^{-1}}"]\\
    M
    \arrow[r,"\mathbf{J}^\Phi"]&
    \mathfrak{g} ^*
    \end{tikzcd},
    \end{center}
    for every $g\in G$ and ${\rm Ad}_{g^{-1}}^*$ being the transpose of ${\rm Ad}_{g^{-1}}$.
\end{definition}
Instead of saying ``cosymplectic momentum map”, we will use the term momentum map when it is clear from the context that we mean a momentum map relative to a cosymplectic manifold.
Note that $RJ_\xi=0$ is a condition required to apply the cosymplectic reduction theorem to be introduced in Section \ref{Sec::CosymplReduction}. This condition may be considered very restrictive but one can find interesting examples of $t$-dependent or dissipative Hamiltonian systems satisfying it that can be studied via our cosymplectic energy-momentum method to be introduced from Section \ref{Sec::Stab}.

Note that if a Lie group action $\Phi: G\times M\rightarrow  M$ admits a momentum map relative to $\Phi$ and $(M,\omega,\eta)$, then $\Phi$ is a cosymplectic Lie group action (assuming $G$ to be connected). Nevertheless, not every cosymplectic Lie group action on $M$ admits a momentum map, e.g. the flow of a Reeb vector field $R$ relative to $(M,\omega,\eta)$ is a cosymplectic Lie group action, but the flow does not admit a momentum map relative to its associated cosymplectic manifold because $\iota_R\eta=1\neq 0$.

In view of \eqref{Eq::CosMomMap}, the Reeb vector field, $R$, corresponding to $(M,\omega,\eta)$ is always tangent to the level sets of any momentum map $\mathbf{J}^\Phi$ related to $(M,\omega,\eta)$. Nevertheless, $R$ is never tangent to the orbits of $\Phi$ since, in that case, one should have $\iota_R\,\eta=0$, which is in contradiction with the definition of $R$.

Given the canonical one- and two-forms on $T^*Q$, given by $\theta_{T^*Q}$ and $\omega_{T^*Q}$ respectively, one can define canonical one- and two-forms on $T\times T^*Q$ given by 
\[
\theta_{T\times T^*Q}:=\pi_{T^*Q}^*\theta_{T^*Q},\qquad \omega_{T\times T^*Q}:=\pi^*_{T^*Q}\omega_{T^*Q},
\]
where $\pi_{T^*Q}:T\times T^*Q\rightarrow  T^*Q$ is the canonical projection onto $T^*Q$. Then, $\omega_{T\times T^*Q}=-d\theta_{T\times T^*Q}$, which turns $\omega_{T\times T^*Q}$ into a closed differential form, while $\ker\omega_{T\times T^*Q}$ is a distribution of rank $1$. In consequence, one has a canonical cosymplectic manifold
\begin{equation}
\label{Eq::CanonicalCosStructure}
(T\times T^*Q,\omega_{T\times T^*Q},\eta_{T\times T^*Q}).
\end{equation}
where $\eta_{T\times T^*Q}$ is the pull-back to $T\times T^*Q$ of a closed non-vanishing one-form $\eta$ on $T$. When $T=\mathbb{R}$, we will assume $\eta=dt$, where $t$ is the natural variable in $\mathbb{R}$.

Let us recall that every Lie group action $\Phi:G\times Q\rightarrow  Q$ gives rise to a canonical lift $\widehat\Phi:G\times  T^*Q\rightarrow  T^*Q$ given by 
\[
\<\widehat\Phi_g(\alpha_q),v_{\Phi_g(q)}\>:=\<\alpha_q,T_{\Phi_{g}(q)}\Phi_{g^{-1}}(v_{\Phi_g(q)})\>,\,\forall g\in G,\, \forall q\in Q,\, \forall \alpha_q\in T_q^*Q,\,\forall v_{\Phi_g(q)}\in T_{\Phi_g(q)}Q.
\]
Therefore, we can define Lie group actions $\Psi:G\times T \times Q\rightarrow  T\times Q$ and $\widehat\Psi:G\times T\times T^*Q\rightarrow  T\times T^*Q$ in the following manner
\begin{equation}
    \label{Eq::LiftRAction}
    \Psi:G\times T \times Q\ni(g,t,q)\mapsto(t,\Phi_g(q))\in T\times Q
\end{equation}
and
\begin{equation}
\label{Eq::LiftCosAction}
\widehat\Psi:G\times T\times T^*Q\ni(g,t,\alpha_q)\mapsto (t,\widehat \Phi_g(\alpha_q))\in T\times T^*Q.
\end{equation}
If the Lie group action $\widehat\Phi$ is a symplectic Lie group action relative to the canonical symplectic structure on $T^*Q$, namely it leaves the symplectic form invariant, and it does not change the coordinate $t$, the action $\widehat\Psi$ is a cosymplectomorphism relative to the cosymplectic manifold \eqref{Eq::CanonicalCosStructure}.

\begin{proposition}
\label{Prop::LiftSymAc}
Every Lie group action $\Phi:G\times Q\rightarrow  Q$ is such that its lift, $\widehat\Psi$, given by \eqref{Eq::LiftCosAction}, admits an  ${\rm Ad}^*$-equivariant cosymplectic momentum map ${\bf J}^{\widehat\Psi}:T \times T^*Q\rightarrow  \mathfrak{g}^{*}$ such that
\begin{equation}
\label{Eq::CanMomMap}
\langle {\bf J}^{\widehat\Psi},\xi\rangle=\iota_{\xi_{T\times T^*Q}}\theta_{T \times T^*Q},\qquad \quad\forall \xi\in\mathfrak{g},
\end{equation}
relative to the canonical cosymplectic structure $(T \times T^*Q,\omega_{T\times T^*Q},\eta_{T \times T^*Q})$.
\end{proposition}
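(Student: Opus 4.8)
The plan is to reduce the whole statement to the classical symplectic theorem that cotangent lifts carry $\mathrm{Ad}^*$-equivariant momentum maps, exploiting the fact that both canonical forms on $T\times T^*Q$ are pull-backs along $\pi_{T^*Q}$ of the canonical forms on $T^*Q$. First I would compute the fundamental vector fields of $\widehat\Psi$. Since $\widehat\Psi_g(t,\alpha_q)=(t,\widehat\Phi_g(\alpha_q))$ fixes the $T$-coordinate, differentiating at the identity gives $(\xi_{T\times T^*Q})_{(t,\alpha_q)}=(0,(\xi_{T^*Q})_{\alpha_q})$; in particular $\xi_{T\times T^*Q}$ is $\pi_{T^*Q}$-related to the fundamental vector field $\xi_{T^*Q}$ of the cotangent lift $\widehat\Phi$ and carries no component along $T$. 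The latter immediately yields $\iota_{\xi_{T\times T^*Q}}\eta_{T\times T^*Q}=0$, because $\eta_{T\times T^*Q}$ is pulled back from $T$; moreover, cotangent lifts are symplectomorphisms, so $\widehat\Phi$ preserves $\omega_{T^*Q}$ and, $t$ being fixed, $\widehat\Psi$ preserves both $\omega_{T\times T^*Q}$ and $\eta_{T\times T^*Q}$. Thus $\widehat\Psi$ is a cosymplectic action meeting the hypotheses of Definition \ref{Def:MomentumMap}.

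Next I would verify the two defining conditions for the candidate $J_\xi:=\iota_{\xi_{T\times T^*Q}}\theta_{T\times T^*Q}$. On $T^*Q$, the cotangent lift preserves the canonical one-form, $\widehat\Phi_g^*\theta_{T^*Q}=\theta_{T^*Q}$, whence $\mathcal{L}_{\xi_{T^*Q}}\theta_{T^*Q}=0$; Cartan's magic formula together with $\omega_{T^*Q}=-d\theta_{T^*Q}$ then gives the classical identity $\iota_{\xi_{T^*Q}}\omega_{T^*Q}=d(\iota_{\xi_{T^*Q}}\theta_{T^*Q})$. Since $\xi_{T\times T^*Q}$ is $\pi_{T^*Q}$-related to $\xi_{T^*Q}$ and $\omega_{T\times T^*Q}=\pi_{T^*Q}^*\omega_{T^*Q}$, $\theta_{T\times T^*Q}=\pi_{T^*Q}^*\theta_{T^*Q}$, applying the identity $\iota_X\pi^*\alpha=\pi^*(\iota_Y\alpha)$ for $\pi$-related $X,Y$ transfers this to $\iota_{\xi_{T\times T^*Q}}\omega_{T\times T^*Q}=dJ_\xi$, with $J_\xi=\pi_{T^*Q}^*(\iota_{\xi_{T^*Q}}\theta_{T^*Q})$, which is exactly \eqref{Eq::CanMomMap}. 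For the condition $RJ_\xi=0$ I would note that $\ker\omega_{T\times T^*Q}=\ker T\pi_{T^*Q}$, the tangent distribution to the $T$-fibres (both are of rank one and one is contained in the other), so the Reeb field $R$ is tangent to those fibres; since $J_\xi$ is a pull-back from $T^*Q$ and hence constant along them, $RJ_\xi=dJ_\xi(R)=0$. This produces the momentum map $\mathbf{J}^{\widehat\Psi}=\mathbf{J}^{\widehat\Phi}\circ\pi_{T^*Q}$.

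Finally, for $\mathrm{Ad}^*$-equivariance I would use the intertwining $\pi_{T^*Q}\circ\widehat\Psi_g=\widehat\Phi_g\circ\pi_{T^*Q}$ together with the equivariance of $\mathbf{J}^{\widehat\Phi}$, so that $\mathbf{J}^{\widehat\Psi}\circ\widehat\Psi_g=\mathbf{J}^{\widehat\Phi}\circ\widehat\Phi_g\circ\pi_{T^*Q}=\mathrm{Ad}^*_{g^{-1}}\circ\mathbf{J}^{\widehat\Phi}\circ\pi_{T^*Q}=\mathrm{Ad}^*_{g^{-1}}\circ\mathbf{J}^{\widehat\Psi}$. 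The equivariance of $\mathbf{J}^{\widehat\Phi}$ itself is the step I expect to be the main obstacle if one does not wish to invoke the symplectic theorem as a black box: it must be proved from the equivariance of fundamental vector fields $(\widehat\Phi_g)_*\xi_{T^*Q}=(\mathrm{Ad}_g\xi)_{T^*Q}$ combined with $\widehat\Phi_g^*\theta_{T^*Q}=\theta_{T^*Q}$, via the computation $\langle\mathbf{J}^{\widehat\Phi}(\widehat\Phi_g\alpha),\mathrm{Ad}_g\xi\rangle=\theta_{\widehat\Phi_g\alpha}\big(T\widehat\Phi_g(\xi_{T^*Q}(\alpha))\big)=(\widehat\Phi_g^*\theta_{T^*Q})_\alpha(\xi_{T^*Q}(\alpha))=\langle\mathbf{J}^{\widehat\Phi}(\alpha),\xi\rangle$, after which replacing $\xi$ by $\mathrm{Ad}_{g^{-1}}\xi$ yields the claim. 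The sign and inverse conventions for the left action have to be tracked carefully in that chain; everything else is a routine transfer of the symplectic computation through the submersion $\pi_{T^*Q}$, with the only genuinely cosymplectic input being the identification of $\ker\omega_{T\times T^*Q}$ used to check $RJ_\xi=0$.
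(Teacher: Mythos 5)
Your proof is correct and follows essentially the same route as the paper's: both reduce the claim to the classical statement that the cotangent lift $\widehat\Phi$ on $(T^*Q,\omega_{T^*Q})$ admits the ${\rm Ad}^*$-equivariant momentum map $\langle \mathbf{J}^{\widehat\Phi},\xi\rangle=\iota_{\xi_{T^*Q}}\theta_{T^*Q}$, and then transfer it through the submersion $\pi_{T^*Q}$ using that $\omega_{T\times T^*Q}$, $\eta_{T\times T^*Q}$, $\theta_{T\times T^*Q}$, and the fundamental vector fields are all basic with respect to it (which also yields $RJ_\xi=0$). The only difference is one of detail: the paper cites the symplectic result as a black box, whereas you additionally verify the cosymplectic-action hypotheses and spell out the equivariance computation on $T^*Q$.
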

\begin{proof}
Since $\mathbf{J}^{\widehat\Psi}$, $\xi_{T\times T^*Q}$, and $\theta_{T\times T^*Q}$ are invariant relative to the Lie derivative with respect to the Reeb vector field $R$ of the considered cosymplectic manifold, one has that \eqref{Eq::CanMomMap} amounts to the pull-back 
 via $\pi_{T^*Q}:T\times T^*Q\rightarrow  T^*Q$ of
\[
\<\mathbf{J}^{\widehat\Phi},\xi\>=\iota_{\xi_{T^*Q}}\theta_{T^*Q},\qquad \forall \xi\in \mathfrak{g},
\]
which is a well-defined ${\rm Ad}^*$-equivariant momentum map on $T^*Q$ (see \cite{AM78}).
\end{proof}

To simplify the notation, the cosymplectic manifold $(M,\omega,\eta)$ will be frequently denoted by $M_\eta^\omega$. 

\begin{definition}
A triple $(M^\omega_\eta,h,{\bf J}^\Phi)$ is called a {\it $G$-invariant cosymplectic Hamiltonian system} if it consists of a cosymplectic manifold $(M,\om,\eta)$, an associated cosymplectic Lie group action $\Phi:G\times M\rightarrow  M$ such that $\Phi_g^*h=h$ for every $g\in G$, and a momentum map $\mathbf{J}^\Phi$ related to $\Phi$. An {\it ${\rm Ad}^*$-equivariant $G$-invariant cosymplectic Hamiltonian system} $(M^\omega_\eta,h,{\bf J}^\Phi)$ is a $G$-invariant cosymplectic Hamiltonian system whose momentum map is ${\rm Ad}^*$-equivariant.
\end{definition}

\section{General momentum maps}
\label{Sec::AGeneralMomentumMap}

This section devises the theory of non-${\rm Ad}^*$-equivariant momentum maps on cosymplectic manifolds. Essentially, our results provide a rather immediate extension of the well-established theory for general momentum maps in symplectic manifolds (see \cite[p. 278]{AM78}). It also provides a slight adaptation of the results by Albert in \cite{Al89}. Recall that all manifolds are assumed to be connected unless otherwise stated.

\begin{proposition}
Let $(M^\omega_\eta, h,{\bf J}^\Phi)$ be a $G$-invariant cosymplectic Hamiltonian system and let us define the functions on $M$ of the form
\[
\psi_{g,\xi}:M\ni x\mapsto { J}_{\xi}(\Phi_g(x))-{ J}_{{\rm Ad}_{g^{-1}}\xi}(x)\in\mathbb{R},\qquad \forall g\in G,\qquad \forall \xi\in \mathfrak{g}.
\]
Then, $\psi_{g,\xi}$ is constant on $M$ for every $g\in G$ and $\xi\in \mathfrak{g}$. Moreover, the function $\sigma:G\ni g\mapsto \sigma(g)\in\mathfrak{g} ^*$ such that $\<\sigma(g),\xi\>:=\psi_{g,\xi}$  for all $\xi\in \mathfrak{g} $ satisfies
\begin{equation}
\label{Eq::SigmaProperty}
\sigma(gg')=\sigma(g)+{\rm Ad}_{g^{-1}}^*\sigma(g'),\qquad \forall g,g'\in G.
\end{equation}
\end{proposition}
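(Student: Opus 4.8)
The plan is to establish the constancy of $\psi_{g,\xi}$ by showing that its differential vanishes (exploiting that $M$ is connected), and then to deduce the cocycle identity \eqref{Eq::SigmaProperty} by evaluating the resulting \emph{constant} functions at conveniently chosen points. The single geometric input I would recall first is the equivariance of fundamental vector fields under a Lie group action: differentiating the identity $\Phi(g\exp(s\xi)g^{-1},\cdot)=\Phi(\exp(s\,{\rm Ad}_g\xi),\cdot)$ at $s=0$ yields $(\Phi_g)_*\xi_M=({\rm Ad}_g\xi)_M$, and hence $(\Phi_{g^{-1}})_*\xi_M=({\rm Ad}_{g^{-1}}\xi)_M$.

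With this in hand I would compute the differential of $J_\xi\circ\Phi_g=\Phi_g^*J_\xi$ directly. Using that $d$ commutes with pull-back, that $\Phi$ is cosymplectic so $\Phi_g^*\omega=\omega$, the naturality relation $\varphi^*(\iota_X\alpha)=\iota_{(\varphi^{-1})_*X}(\varphi^*\alpha)$ with $\varphi=\Phi_g$, and the defining property $\iota_{\xi_M}\omega=dJ_\xi$ of the momentum map, I obtain
\begin{equation*}
d(J_\xi\circ\Phi_g)=\Phi_g^*(\iota_{\xi_M}\omega)=\iota_{(\Phi_{g^{-1}})_*\xi_M}(\Phi_g^*\omega)=\iota_{({\rm Ad}_{g^{-1}}\xi)_M}\omega=dJ_{{\rm Ad}_{g^{-1}}\xi}.
\end{equation*}
Therefore $d\psi_{g,\xi}=d(J_\xi\circ\Phi_g)-dJ_{{\rm Ad}_{g^{-1}}\xi}=0$, and since $M$ is connected, $\psi_{g,\xi}$ is constant. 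Because $J_\xi=\langle\mathbf{J}^\Phi,\xi\rangle$ and $\xi\mapsto{\rm Ad}_{g^{-1}}\xi$ are linear, $\psi_{g,\xi}$ is linear in $\xi$, so it legitimately defines $\sigma(g)\in\mathfrak{g}^*$ via $\langle\sigma(g),\xi\rangle=\psi_{g,\xi}$.

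For the cocycle property I would use precisely that each $\psi$ is now a constant function, hence may be evaluated at whatever point is convenient. Setting $\zeta:={\rm Ad}_{g^{-1}}\xi$, and using $\Phi_{gg'}=\Phi_g\circ\Phi_{g'}$ together with ${\rm Ad}_{(gg')^{-1}}={\rm Ad}_{g'^{-1}}{\rm Ad}_{g^{-1}}$, I telescope by evaluating at the point $\Phi_{g'}(x)$:
\begin{align*}
\langle\sigma(gg'),\xi\rangle &= J_\xi(\Phi_g(\Phi_{g'}(x)))-J_{{\rm Ad}_{g'^{-1}}\zeta}(x)\\
&=\big(\psi_{g,\xi}+J_\zeta(\Phi_{g'}(x))\big)-J_{{\rm Ad}_{g'^{-1}}\zeta}(x)\\
&=\psi_{g,\xi}+\psi_{g',\zeta}=\langle\sigma(g),\xi\rangle+\langle\sigma(g'),{\rm Ad}_{g^{-1}}\xi\rangle,
\end{align*}
where the second line substitutes $J_\xi(\Phi_g(\Phi_{g'}(x)))=\psi_{g,\xi}+J_{{\rm Ad}_{g^{-1}}\xi}(\Phi_{g'}(x))$ (constancy of $\psi_{g,\xi}$ at $\Phi_{g'}(x)$), and the third uses $J_\zeta(\Phi_{g'}(x))-J_{{\rm Ad}_{g'^{-1}}\zeta}(x)=\psi_{g',\zeta}$. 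Recognising $\langle\sigma(g'),{\rm Ad}_{g^{-1}}\xi\rangle=\langle{\rm Ad}_{g^{-1}}^*\sigma(g'),\xi\rangle$, with ${\rm Ad}_{g^{-1}}^*$ the transpose of ${\rm Ad}_{g^{-1}}$, and since $\xi\in\mathfrak{g}$ is arbitrary, the identity \eqref{Eq::SigmaProperty} follows.

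The only genuinely delicate point is the bookkeeping of conventions: one must track that the equivariance $(\Phi_g)_*\xi_M=({\rm Ad}_g\xi)_M$ produces ${\rm Ad}_{g^{-1}}$ (not ${\rm Ad}_g$) in the differential computation, and that the pairing $\langle\sigma(g'),{\rm Ad}_{g^{-1}}\xi\rangle$ matches exactly the transpose ${\rm Ad}_{g^{-1}}^*$ appearing in \eqref{Eq::SigmaProperty}. I would emphasise that neither the condition $RJ_\xi=0$ nor $\iota_{\xi_M}\eta=0$ is needed here; the argument rests solely on $\Phi_g^*\omega=\omega$, the relation $\iota_{\xi_M}\omega=dJ_\xi$, and the connectedness of $M$, so it is a verbatim transcription of the symplectic argument (cf.\ \cite[p.~278]{AM78}) into the cosymplectic setting.
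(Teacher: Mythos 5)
Your proof is correct and follows essentially the same route as the paper: the constancy of $\psi_{g,\xi}$ via $d(J_\xi\circ\Phi_g)=\Phi_g^*(\iota_{\xi_M}\omega)=\iota_{({\rm Ad}_{g^{-1}}\xi)_M}\omega=dJ_{{\rm Ad}_{g^{-1}}\xi}$ together with connectedness, and the cocycle identity by the same telescoping insertion (which you carry out on the scalar pairings at a chosen point, whereas the paper does it with the $\mathfrak{g}^*$-valued constant maps $\sigma(g)=\mathbf{J}^\Phi\circ\Phi_g-{\rm Ad}_{g^{-1}}^*\mathbf{J}^\Phi$ — a purely cosmetic difference). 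Your added remarks on linearity in $\xi$ and on which hypotheses are actually used are accurate and, if anything, slightly more careful than the paper's own write-up.
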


\begin{proof}
To show that each $\psi_{g,\xi}$ is constant on $M$, note that
\begin{multline*}
d\psi_{g,\xi}=d[J_\xi\circ \Phi_g]                          -d{ J}_{{\rm Ad}_{g^{-1}}\xi}=\Phi_g^*(\iota_{\xi_M}\omega)-\iota_{({\rm Ad}_{g^{-1}}\xi)_M}\omega
\\=\Phi_g^*(\iota_{\xi_M}\omega)-\iota_{\Phi_{g^{-1}*}\xi_M}\omega=\iota_{\Phi_{g^{-1}*}\xi_M}\Phi_g^*\omega-\iota_{\Phi_{g^{-1}*}\xi_M}\omega=\iota_{\Phi_{g^{-1}*}\xi_M}\omega-\iota_{\Phi_{g^{-1}*}\xi_M}\omega=0,
\end{multline*}
where we have used that $\Phi$ is a cosymplectic Lie group action and the fact that $({\rm Ad}_{g}\xi)_M=\Phi_{g*}\xi_M$ for every $g\in G$ and every $\xi\in\mathfrak{g}$ (see \cite{AM78}). Hence, each $\psi_{g,\xi}$ is constant on every connected component of $M$. Since $M$ is assumed to be connected, the functions $\psi_{g,\xi}$ are constant.

To study simultaneously the mappings $\{\psi_{g,\xi}\}_{g\in G,\xi\in\mathfrak{g} }$, let us rewrite $\psi_{g,\xi}$ as follows
\begin{align*}
\psi_{g,\xi}(x)&={J}_\xi(\Phi_g(x))-{ J}_{{\rm Ad}_{g^{-1}}\xi}(x)=\lv {\bf J}^\Phi(\Phi_g(x)),\xi\rv - \lv {\bf J}^\Phi(x),{\rm Ad}_{g^{-1}}\xi\rv\\&=\lv {\bf J}^\Phi(\Phi_g(x)),\xi\rv- \lv {\rm Ad}_{g^{-1}}^*{\bf J}^\Phi(x),\xi\rv=\lv {\bf J}^\Phi(\Phi_g(x))-{\rm Ad}_{g^{-1}}^*{\bf J}^\Phi(x),\xi\rv,
\end{align*}
for all $x\in M$. Since $\lv \sigma(g),\xi\rv =\psi_{g,\xi}$ is constant on $M$ for every $g\in G$ and $\xi\in \mathfrak{g}$, one has  that $\sigma$ is given by
\begin{equation}\label{Eq:SpeSigma}
\sigma:G\ni g\mapsto {\bf J}^\Phi\circ \Phi_g -{\rm Ad}_{g^{-1}}^*{\bf J}^\Phi\in\mathfrak{g}^*,
\end{equation}
and every $\sigma(g)$, with $g\in G$, is constant on $M$. 

A simple calculation shows that, as all the $\psi_{g,\xi}$ are constant, one has
\begin{align*}
\sigma(gg')=&{\bf J}^\Phi\circ \Phi_{gg'}-{\rm Ad}_{{(gg')}^{-1}}^*{\bf J}^\Phi={\bf J}^\Phi\circ\Phi_g\circ \Phi_{g'}-{\rm Ad}_{g^{-1}}^*{\rm Ad}_{g'^{-1}}^*{\bf J}^\Phi\\=&{\bf J}^\Phi\circ \Phi_g\circ\Phi_{g'}-{\rm Ad}^*_{g^{-1}}{\bf J}^\Phi\circ\Phi_{g'}+{\rm Ad}^*_{g^{-1}}{\bf J}^\Phi\circ \Phi_{g'}-{\rm Ad}_{g^{-1}}^*{\rm Ad}_{g^{'-1}}^*{\bf J}^\Phi\\=&{\bf J}^\Phi\circ \Phi_g\!-\!{\rm Ad}^*_{g^{-1}}{\bf J}^\Phi \!+\!{\rm Ad}^*_{g^{-1}}({\bf J}^\Phi\circ\Phi_{g'}\!-\!{\rm Ad}_{g^{'-1}}^*{\bf J}^\Phi\!)=\!\sigma(g)\!+\!{\rm Ad}_{g^{-1}}^*\sigma(g')
\end{align*}
for every $g,g'\in G$, which proves \eqref{Eq::SigmaProperty}.
\end{proof}

As in the symplectic case, the map 
 (\ref{Eq:SpeSigma}) is called the {\it co-adjoint cocycle} associated with the cosymplectic momentum map ${\bf J}^\Phi$ on $M$. As in the symplectic case, ${\bf J}^\Phi$ is ${\rm Ad}^*$-equivariant momentum map if and only if $\sigma=0$. Roughly speaking, $\sigma$ measures the lack of ${\rm Ad}^*$-equivariance of a cosymplectic momentum map.

A map $\sigma:G\rightarrow \mathfrak{g}^*$ is a {\it coboundary} if there exists $\mu\in\mathfrak{g}^*$ such that
\begin{equation}
\label{CEq:on}
\sigma(g)=\mu-{\rm Ad}_{g^{-1}}^*\mu,\qquad \forall g\in G.
\end{equation}
 
Every coboundary satisfies \eqref{Eq::SigmaProperty} and it is therefore called a \textit{co-adjoint cocycle}. The space of co-adjoint cocycles admits an equivalence relation, whose equivalence classes are called {\it cohomology classes}, given by setting that two co-adjoint cocycles belong to the same cohomology class if their difference is a coboundary. The following proposition shows that every cosymplectic action admitting a cosymplectic momentum map induces a well-defined cohomology class $[\sigma]$. Note that these results could be straightforwardly adapted to the symplectic case.
\begin{proposition}\label{Prop:CoAdjCocycles}
Let $\Phi:G\times M \rightarrow  M$ be a cosymplectic Lie group action relative to $(M,\omega,\eta)$. If ${\bf J}_1^\Phi$ and ${\bf J}_2^\Phi$ are two  momentum maps related to $\Phi$ with co-adjoint cocycles $\sigma_1$ and $\sigma_2$, respectively, then $[\sigma_1]=[\sigma_2]$.
\end{proposition}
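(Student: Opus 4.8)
The plan is to exploit the fact that two momentum maps for the same action can differ only by a constant element of $\mathfrak{g}^*$, and that this constant immediately exhibits $\sigma_1-\sigma_2$ as a coboundary. The whole argument is a direct comparison of the defining relations together with the explicit cocycle formula \eqref{Eq:SpeSigma}.

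First I would compare the two momentum maps componentwise. By the defining condition \eqref{Eq::CosMomMap}, for every $\xi\in\mathfrak{g}$ one has $dJ_{1,\xi}=\iota_{\xi_M}\omega=dJ_{2,\xi}$, hence $d(J_{1,\xi}-J_{2,\xi})=0$. Since $M$ is connected (a standing assumption of the paper), $J_{1,\xi}-J_{2,\xi}$ is a constant for each $\xi$. Because $\xi\mapsto J_{i,\xi}=\langle \mathbf{J}_i^\Phi,\xi\rangle$ is linear, the assignment $\xi\mapsto J_{1,\xi}-J_{2,\xi}$ is linear as well and therefore defines a fixed covector $\mu\in\mathfrak{g}^*$ with $\mathbf{J}_1^\Phi-\mathbf{J}_2^\Phi=\mu$ as a constant map on $M$. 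The remaining condition $RJ_\xi=0$ is automatically compatible, since $R$ annihilates constants.

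Next I would substitute this into the explicit form \eqref{Eq:SpeSigma} of the two co-adjoint cocycles. As $\mu$ is a constant map, $\mu\circ\Phi_g=\mu$, so subtracting the two cocycles gives $\sigma_1(g)-\sigma_2(g)=(\mathbf{J}_1^\Phi-\mathbf{J}_2^\Phi)\circ\Phi_g-{\rm Ad}^*_{g^{-1}}(\mathbf{J}_1^\Phi-\mathbf{J}_2^\Phi)=\mu-{\rm Ad}^*_{g^{-1}}\mu$ for every $g\in G$. This is precisely the form \eqref{CEq:on} of a coboundary, whence $\sigma_1-\sigma_2$ is a coboundary and $[\sigma_1]=[\sigma_2]$. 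The computation is routine; the only point requiring care is the passage from the pointwise statement that each $J_{1,\xi}-J_{2,\xi}$ is locally constant to the existence of a single constant covector $\mu$, which rests on the connectedness of $M$ and on linearity in $\xi$, so I do not anticipate any genuine obstacle.
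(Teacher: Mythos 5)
Your proof is correct and follows essentially the same route as the paper: you show that $\mathbf{J}_1^\Phi-\mathbf{J}_2^\Phi$ is a constant covector $\mu\in\mathfrak{g}^*$ by differentiating the defining relation and using connectedness of $M$, and then subtract the explicit cocycle formulas to exhibit $\sigma_1-\sigma_2$ as the coboundary $\mu-{\rm Ad}^*_{g^{-1}}\mu$. The paper's proof is the same argument in slightly more compressed form.
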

\begin{proof}
From the definition of a co-adjoint cocycle for a cosymplectic momentum map,
\[
\lv\sigma_1(g)-\sigma_2(g),\xi\rv=\lv {\bf J}_1^\Phi\circ\Phi_g-{\bf J}_2^\Phi\circ\Phi_g,\xi\rv-\lv {\rm Ad}^*_{g^{-1}}({\bf J}_1^\Phi-{\bf J}_2^\Phi),\xi\rv,
\]
for all $g\in G$ and $\xi\in\mathfrak{g}$.
However, ${\bf J}_1^\Phi-{\bf J}_2^\Phi$ takes a constant value $\mu\in\mathfrak{g}^*$, since ${\bf J}_1^\Phi$ and ${\bf J}_2^\Phi$  are cosymplectic momentum maps for the same Lie group action. Indeed,
\[
d\lv {\bf J}^\Phi_1 -{\bf J}^\Phi_2,\xi\rv=dJ_{1,\xi}-dJ_{2,\xi}=\iota_{\xi_M}\om-\iota_{\xi_M}\omega=0,\qquad \forall\xi\in\mathfrak{g}.
\]
Thus, $({\bf J}^\Phi_1-{\bf J}^\Phi_2)\circ \Phi_g={\bf J}^\Phi_1-{\bf J}^\Phi_2$ for every $g\in G$ and then 
\[
\sigma_1(g)-\sigma_2(g)=\mu-{\rm Ad}^*_{g^{-1}}\mu,\qquad \forall g\in G.
\]
\end{proof}
Proposition \ref{Prop:CoAdjCocycles} yields that a cosymplectic Lie group action has an ${\rm Ad}^*$-equivariant momentum map if and only if  it has an associated coboundary. Indeed, if a cosymplectic Lie group action has an ${\rm Ad}^*$-equivariant momentum map ${\bf J}^\Phi_2$ relative to $(M,\om,\eta)$, then its associated co-adjoint cocycle satisfies $\sigma_2=0$, and any other momentum map ${\bf J}^\Phi_1$ for the same action is such that its co-adjoint cocycle, let us say $\sigma_1$, satisfies $[\sigma_1]=[\sigma_2]=0$, and $\sigma_1$ becomes a coboundary. Moreover, if $\sigma_1$ is a coboundary induced by $\mu\in \mathfrak{g}^*$, then the momentum map
\[
\B J^\Phi :=\B J^\Phi_1 -\mu,
\]
is an ${\rm Ad}^*$-equivariant momentum map for the same cosymplectic Lie group action of ${\bf J}_1^\Phi$, where $\mu\in\mathfrak{g}^*$ satisfies that $\sigma_1(g)=\mu-{\rm Ad}^*_{g^{-1}}\mu$ for every $g\in G$. In fact,
\[
\langle \B J^\Phi,\xi\rangle=\langle \B J^\Phi_1,\xi\rangle-\langle \mu,\xi\rangle=J_{1,\xi}-\langle \mu,\xi\rangle,\qquad \forall \xi\in\mathfrak{g},
\]
and
\[
\sigma(g)=\B J^\Phi\circ\Phi_g-{\rm Ad}^*_{g^{-1}}\B J^\Phi=\sigma_1(g)+{\rm Ad}_{g^{-1}}^*\mu-\mu=0,
\]
for every $g\in G$. Therefore, the result follows.

To summarise, if a co-adjoint cocycle of a given momentum map is a coboundary, then we can construct an ${\rm Ad}^*$-equivariant momentum map. However, the following proposition shows that for every momentum map there exists a Lie group action $\Delta:G\times \mathfrak{g} ^*\rightarrow  \mathfrak{g} ^*$ such that the momentum map becomes {\it $\Delta$-equivariant}, namely, one has that for every $g\in G$ the following diagram is commutative\begin{center}
 \begin{tikzcd}
   M 
   \arrow{r}{\Phi_g}
   \arrow{d}{\B J^\Phi}
   &
   M
   \arrow{d}{\B J^\Phi}
   \\
   \mathfrak{g}^*
   \arrow{r}{\Delta_g}
   &
   \mathfrak{g} ^*.
 \end{tikzcd}
\end{center}

\begin{proposition}\label{Prop::GenEqJ}
Let ${\bf J}^\Phi:M\rightarrow  \mathfrak{g}^*$ be a momentum map for a cosymplectic Lie group action $\Phi:G\times M\rightarrow  M$ with co-adjoint cocycle $\sigma$. Then,
\begin{enumerate}
    \item the map $\Delta:G\times \mathfrak{g}^*\ni(g,\mu)\mapsto \Delta_g(\mu):={\rm Ad}_{g^{-1}}^*\mu +\sigma(g)\in\mathfrak{g}^*$ is a Lie group action,
    \item the momentum map ${\bf J}^\Phi$ is $\Delta$-equivariant\,.

\end{enumerate}
\end{proposition}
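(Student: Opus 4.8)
The plan is to verify the two group-action axioms for $\Delta$ directly and then to read off the equivariance from the very definition of the co-adjoint cocycle $\sigma$. Throughout I would use that, by the previous proposition, $\sigma$ is a well-defined smooth map $G\to\mathfrak{g}^*$ given by $\sigma(g)={\bf J}^\Phi\circ\Phi_g-{\rm Ad}^*_{g^{-1}}{\bf J}^\Phi$, that each value $\sigma(g)$ is constant on $M$, and that $\sigma$ obeys the cocycle identity \eqref{Eq::SigmaProperty}.

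First I would settle that $\Delta$ is a Lie group action. Smoothness of $\Delta:G\times\mathfrak{g}^*\to\mathfrak{g}^*$ follows since $(g,\mu)\mapsto{\rm Ad}^*_{g^{-1}}\mu$ is smooth (the co-adjoint action composed with inversion) and $\sigma$ is smooth. For the identity element, setting $g=g'=e$ in \eqref{Eq::SigmaProperty} gives $\sigma(e)=\sigma(e)+{\rm Ad}^*_{e^{-1}}\sigma(e)=2\sigma(e)$, hence $\sigma(e)=0$; together with ${\rm Ad}^*_{e^{-1}}={\rm id}_{\mathfrak{g}^*}$ this yields $\Delta_e={\rm id}_{\mathfrak{g}^*}$. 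The crucial step is the composition law $\Delta_{gg'}=\Delta_g\circ\Delta_{g'}$. I would compute, for every $\mu\in\mathfrak{g}^*$,
\begin{align*}
\Delta_g(\Delta_{g'}(\mu))&={\rm Ad}^*_{g^{-1}}\big({\rm Ad}^*_{g'^{-1}}\mu+\sigma(g')\big)+\sigma(g)\\
&={\rm Ad}^*_{g^{-1}}{\rm Ad}^*_{g'^{-1}}\mu+{\rm Ad}^*_{g^{-1}}\sigma(g')+\sigma(g),
\end{align*}
and then invoke the composition property ${\rm Ad}^*_{g^{-1}}{\rm Ad}^*_{g'^{-1}}={\rm Ad}^*_{(gg')^{-1}}$ of the co-adjoint action already used in the previous proposition, together with the cocycle identity $\sigma(gg')=\sigma(g)+{\rm Ad}^*_{g^{-1}}\sigma(g')$, to collapse the right-hand side to ${\rm Ad}^*_{(gg')^{-1}}\mu+\sigma(gg')=\Delta_{gg'}(\mu)$. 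This is exactly the point where the cocycle condition earns its keep, so I expect the bookkeeping of the order of the factors in ${\rm Ad}^*$ to be the main (though mild) obstacle; everything else is routine.

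Finally, part 2 is almost immediate. For every $g\in G$ and $x\in M$, using that $\sigma(g)$ does not depend on the chosen base point, I would write
\[
\Delta_g\big({\bf J}^\Phi(x)\big)={\rm Ad}^*_{g^{-1}}{\bf J}^\Phi(x)+\sigma(g)={\rm Ad}^*_{g^{-1}}{\bf J}^\Phi(x)+{\bf J}^\Phi(\Phi_g(x))-{\rm Ad}^*_{g^{-1}}{\bf J}^\Phi(x)={\bf J}^\Phi(\Phi_g(x)),
\]
so that $\Delta_g\circ{\bf J}^\Phi={\bf J}^\Phi\circ\Phi_g$ for all $g\in G$, which is precisely the commutativity of the stated square, i.e. the $\Delta$-equivariance of ${\bf J}^\Phi$. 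The only thing worth emphasizing is that this uses the $x$-independence of $\sigma$ established previously, after which no further argument is needed.
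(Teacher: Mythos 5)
Your proposal is correct and follows essentially the same route as the paper's own proof: verify $\sigma(e)=0$ to get the identity axiom, use the cocycle identity \eqref{Eq::SigmaProperty} together with ${\rm Ad}^*_{g^{-1}}{\rm Ad}^*_{g'^{-1}}={\rm Ad}^*_{(gg')^{-1}}$ for the composition law, and read off equivariance directly from the definition of $\sigma$ and its $x$-independence. The only (harmless) differences are that you derive $\sigma(e)=0$ from the cocycle identity rather than from the explicit formula \eqref{Eq:SpeSigma}, and that you note smoothness explicitly.
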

\begin{proof}
First, since $\sigma(e)=0$, one has $
\Delta(e,\mu)={\rm Ad}^*_{e^{-1}}\mu+\sigma(e)=\mu$ and,
\begin{align*}
\Delta(g,\Delta(g',\mu))&={\rm Ad}_{g^{-1}}^*({\rm Ad}_{g'^{-1}}^*\mu +\sigma(g'))+\sigma(g)={\rm Ad}_{g^{-1}}^*{\rm Ad}_{g^{'-1}}^*\mu +{\rm Ad}_{g^{-1}}^*\sigma(g')+\sigma(g)\\&={\rm Ad}^*_{(gg')^{-1}}\mu +{\rm Ad}_{g^{-1}}^*\sigma(g')+\sigma(g)={\rm Ad}^*_{(gg')^{-1}}\mu +\sigma(gg')=\Delta(gg',\mu).
\end{align*}
Thus, $\Delta$ is a Lie group action on $\mathfrak{g}^*$, which proves $1$.
Second, from the definition of $\Delta$ and $\sigma$, we get
\[
\Delta_g\circ \B J^\Phi={\rm Ad}_{g^{-1}}^*\B J^\Phi+\sigma(g)=\B J^\Phi\circ \Phi_g,\qquad \forall g\in G,
\]
which shows that $\mathbf{J}^\Phi$ is $\Delta$-equivariant.
\end{proof}
Proposition \ref{Prop::GenEqJ} ensures that a general cosymplectic momentum map $\B J^\Phi$ gives rise to an equivariant momentum map relative to a new action $\Delta: G\x\mathfrak{g}^*\rightarrow \mathfrak{g}^*$, called an {\it affine action}. If ${\bf J}^\Phi$ is ${\rm Ad}^*$-invariant, then $\sigma=0$ and $\Delta={\rm Ad}^*$ becomes the co-adjoint action of $G$. In the following theorem, we shall analyse the commutation relations between the functions $\{J_\xi\}_{\xi\in \mathfrak{g}}$ associated with a cosymplectic momentum map $\B J^\Phi$.
\begin{theorem}
\label{Th::SigmabracketCo}
Let $\Phi:G\times M\rightarrow  M$ be a cosymplectic Lie group action relative to $(M,\omega,\eta)$ with a cosymplectic momentum map $\B J^\Phi:M\rightarrow \mathfrak{g}^*$ and let $\sigma:G\rightarrow  \mathfrak{g}^*$ be the co-adjoint cocycle of $\B J^\Phi$. Let us define
\begin{equation*}
\sigma_\tau:G\ni g\mapsto \langle\sigma(g),\tau\rangle\in\mathbb{R},\quad \Sigma:\mathfrak{g}\times\mathfrak{g}\ni(\xi_1,\xi_2)\mapsto T_e\sigma_{\xi_2}(\xi_1) \in\mathbb{R}, \quad\forall\tau\in\mathfrak{g}.
\end{equation*}
Then,
\begin{enumerate}
    \item the map $\Sigma$ is a skew-symmetric bilinear form on $\mathfrak{g}$ satisfying the following identity
    \[
    \Sigma(\xi,[\zeta,\nu])+\Sigma(\nu,[\xi,\zeta])+\Sigma(\zeta,[\nu,\xi])=0,\qquad \forall \xi,\zeta,\nu\in \mathfrak{g},
    \]
    \item $\Sigma(\xi,\nu)=\{J_\nu,J_\xi\}_{\omega,\eta}-J_{[\nu,\xi]}$ for all $\xi,\nu\in\mathfrak{g}$.
\end{enumerate}
\end{theorem}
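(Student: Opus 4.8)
The plan is to prove statement $2$ first by a direct infinitesimal computation at the identity, and then to deduce all of statement $1$ from it, since bilinearity, skew-symmetry, and the cocycle identity all reduce to formal manipulations once the formula $\Sigma(\xi,\nu)=\{J_\nu,J_\xi\}_{\omega,\eta}-J_{[\nu,\xi]}$ is available.

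For statement $2$, I would start from the fact, established in the previous proposition, that $\sigma(g)$ is constant on $M$, so that for any fixed $x\in M$ one may write $\sigma_\nu(g)=\langle\mathbf{J}^\Phi(\Phi_g(x))-{\rm Ad}^*_{g^{-1}}\mathbf{J}^\Phi(x),\nu\rangle=J_\nu(\Phi_g(x))-J_{{\rm Ad}_{g^{-1}}\nu}(x)$, which is precisely $\psi_{g,\nu}(x)$. Differentiating along $g=\exp(s\xi)$ at $s=0$ splits into two contributions. The first is $\frac{d}{ds}\big|_{0}J_\nu(\Phi_{\exp(s\xi)}(x))=(\xi_M J_\nu)(x)=\iota_{\xi_M}dJ_\nu=\iota_{\xi_M}\iota_{\nu_M}\omega=\omega(\nu_M,\xi_M)$, where I use the defining relation $dJ_\nu=\iota_{\nu_M}\omega$. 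Here the cosymplectic input is essential: because a cosymplectic momentum map satisfies $\iota_{\nu_M}\eta=0$ and $RJ_\nu=0$, the isomorphism $\flat$ forces $\nu_M=X_{J_\nu}$, so this term equals $\omega(X_{J_\nu},X_{J_\xi})=\{J_\nu,J_\xi\}_{\omega,\eta}$. The second contribution is $-\frac{d}{ds}\big|_0 J_{{\rm Ad}_{\exp(-s\xi)}\nu}(x)$; using linearity of $\zeta\mapsto J_\zeta$ and $\frac{d}{ds}\big|_0{\rm Ad}_{\exp(-s\xi)}\nu=-[\xi,\nu]=[\nu,\xi]$, this equals $-J_{[\nu,\xi]}(x)$. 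Adding the two gives the claimed formula, and independence of the right-hand side from $x$ is automatic since $\Sigma(\xi,\nu)$ is a number.

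From this formula statement $1$ follows formally. Bilinearity is immediate from the definition, as $T_e\sigma_\nu$ is linear in $\xi$ while $\sigma_\nu$ is linear in $\nu$; skew-symmetry is read off from antisymmetry of the Poisson bracket together with $[\xi,\nu]=-[\nu,\xi]$, giving $\Sigma(\nu,\xi)=\{J_\xi,J_\nu\}_{\omega,\eta}-J_{[\xi,\nu]}=-\Sigma(\xi,\nu)$. For the cocycle identity I would substitute statement $2$ into each of the three terms. The three $J$-terms sum to $-J_{[[\zeta,\nu],\xi]+[[\xi,\zeta],\nu]+[[\nu,\xi],\zeta]}=0$ by the Jacobi identity in $\mathfrak{g}$. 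For the three Poisson-bracket terms I would rewrite $J_{[\zeta,\nu]}=\{J_\zeta,J_\nu\}_{\omega,\eta}-\Sigma(\nu,\zeta)$ (and its cyclic analogues); since $\Sigma(\nu,\zeta)$ is constant its bracket against anything vanishes, leaving $\{\{J_\zeta,J_\nu\}_{\omega,\eta},J_\xi\}_{\omega,\eta}+\{\{J_\xi,J_\zeta\}_{\omega,\eta},J_\nu\}_{\omega,\eta}+\{\{J_\nu,J_\xi\}_{\omega,\eta},J_\zeta\}_{\omega,\eta}$, which is zero by the Jacobi identity for $\{\cdot,\cdot\}_{\omega,\eta}$.

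The main obstacle is the correct handling of the cosymplectic specifics in statement $2$: one must justify replacing the fundamental field $\nu_M$ by the Hamiltonian field $X_{J_\nu}$ rather than by the gradient field $\nabla J_\nu$, which would carry an extra $(RJ_\nu)R$ term. This is exactly where the two defining conditions $\iota_{\nu_M}\eta=0$ and $RJ_\nu=0$ of a cosymplectic momentum map are indispensable, and it is the point at which the argument genuinely departs from its symplectic counterpart. Beyond that, the computation is essentially sign bookkeeping in the differentiation of the adjoint term, where the appearance of $[\nu,\xi]$ rather than $[\xi,\nu]$ fixes the precise form of the final formula and hence the signs entering the cocycle identity.
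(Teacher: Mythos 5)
Your proof is correct and follows essentially the same route as the paper's: statement 2 is obtained by differentiating $\sigma_\tau$ along $g=\exp(s\xi)$ at $s=0$, identifying $\nu_M$ with $X_{J_\nu}$ via the cosymplectic momentum map conditions and computing the derivative of ${\rm Ad}_{\exp(-s\xi)}\nu$, after which statement 1 is deduced formally from statement 2 using the Jacobi identities for $[\cdot,\cdot]$ and $\{\cdot,\cdot\}_{\omega,\eta}$ together with the constancy of $\Sigma$. The only difference is presentational: you spell out the cocycle identity (separating the $J$-terms from the Poisson-bracket terms) and the skew-symmetry more explicitly than the paper's rather terse argument.
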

\begin{proof}
Let us prove $2$. Taking the tangent map of $\sigma_\tau$ at $e$, we get
\begin{multline*}
     \Sigma(\xi,\tau)=T_e\sigma_\tau(\xi)=\frac{d}{ds}\bigg|_{s=0}\left(\lv \mathbf{J}^\Phi(\Phi_{\exp(s\xi)}(x)),\tau\rv-\lv{\rm Ad}^*_{\exp(-s\xi)}\mathbf{J}^\Phi(x),\tau\rv \right)\\=dJ_\tau(\xi_M)_x-\frac{d}{ds}\bigg|_{s=0}\lv \mathbf{J}^\Phi(x),{\rm Ad}_{\exp(-s\xi)}\tau\rv\\=-(\iota_{\tau_M}\iota_{\xi_M}\omega)_x-\lv \mathbf{J}^\Phi(x),[\tau,\xi]\rv=\{J_\tau,J_\xi\}_{\omega,\eta}(x)-J_{[\tau,\xi]}(x),
\end{multline*}
where the last equality stems from \eqref{Eq::PoissonStructure}. Since $X_{\{J_\tau,J_\xi\}_{\omega,\eta}}=-[X_{J_\tau},X_{J_\xi}]=-[\tau_M,\xi_M]=[\tau,\xi]_M$, the vector fields $X_{\{J_\tau,J_\xi\}_{\omega,\eta}}$ and $X_{J_{[\tau,\xi]}}$ have the same Hamiltonian function up to a constant. Therefore, $\Sigma$ does not depend on $x\in M$, which proves $2.$ Recall, that each $\sigma(g)$ does not depend on the point $x\in M$, which is employed to give a practical expression (\ref{Eq:SpeSigma}) for $\sigma$. 

Meanwhile, $1.$ stems from 
\begin{align*}
-\Sigma(\xi,[\zeta,\nu])=&\{J_\xi,J_{[\zeta,\nu]}\}_{\omega,\eta}-J_{[\xi,[\zeta,\nu]]}=\{J_\xi,\{J_\zeta,J_\nu\}_{\omega,\eta}-\Sigma(\nu,\zeta)\}_{\omega,\eta}-J_{[\xi,[\zeta,\nu]]}
\end{align*}
and the fact that $\Sigma(\nu,\eta)$ is a constant function, while $\{\cdot,\cdot\}_{\omega,\eta}$ and $[\cdot,\cdot]$ are anti-symmetric, bilinear, and satisfy the Jacobi identity.
\end{proof}

Recall that, for an ${\rm Ad}^*$-equivariant momentum map, $\sigma(g)=0$ for every $g\in G$. Thus, $\Sigma(\xi,\tau)=0$ for all $\xi,\tau\in\mathfrak{g}$ if  $\mathbf{J}^\Phi$ is ${\rm Ad}^*$-equivariant.  

The part 2. in Theorem \ref{Th::SigmabracketCo} retrieves, as a particular case, that if  $\mathbf{J}^\Phi$ is an ${\rm Ad}^*$-equivariant cosymplectic momentum map, then there exists a Lie algebra morphism $\mathfrak{g}\ni\xi\mapsto J_\xi\in C^\infty(M)$ .

\section{Cosymplectic Marsden--Weinstein reduction}
\label{Sec::CosymplReduction}

The following propositions are natural extensions to the cosymplectic realm of its analogues in the symplectic case. Proofs of some results can be found in \cite{Al89}, but since they are not widely available, e.g. \cite{Al89} is in not in English and illegible at times, we provide their proofs here.

The following proposition shows that the momentum map ${\bf J}^\Phi: M\rightarrow \mathfrak{g}^*$ related to $(M^\omega_\eta,h,{\bf J}^\Phi)$ is conserved for the dynamics of the vector fields $\nabla h$, $X_h$, and $E_h$. In other words, the flows of $\nabla h$, $X_h$, and $E_h$ leave ${\bf J}^\Phi$ invariant for every $h\in C^\infty(M)$. It is worth noting that Proposition \ref{Prop:InvCos} is, as far as we know, new. Nevertheless, it complements a partial result already given in \cite{LS15}.

\begin{proposition}\label{Prop:InvCos}
Let $(M^\omega_\eta,h,{\bf J}^\Phi)$ be a $G$-invariant cosymplectic Hamiltonian system and let $F:\mathbb{R}\times M\ni(s,m)\mapsto F_s(m)=F(s,m)\in M$ be the flow of $\nabla h$. Then, ${\bf J}^\Phi\circ F_s={\bf J}^\Phi$ for every $s\in \mathbb{R}$. Analogous results apply to the flows of $E_h$ and $X_h$.
\end{proposition}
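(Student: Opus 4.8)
The plan is to reduce the claim to the componentwise infinitesimal invariance of $\mathbf{J}^\Phi$ and then verify that invariance by a short sequence of contraction identities. Since $F_s$ is the flow of $\nabla h$, for each $\xi\in\mathfrak{g}$ one has $\frac{d}{ds}(J_\xi\circ F_s)=(\nabla h\, J_\xi)\circ F_s$, where $J_\xi=\langle \mathbf{J}^\Phi,\xi\rangle$. Hence it suffices to prove $\nabla h\, J_\xi=0$ for every $\xi\in\mathfrak{g}$; as $\mathfrak{g}$ is finite-dimensional and the pairing determines $\mathbf{J}^\Phi$, this yields $J_\xi\circ F_s=J_\xi$ for all $s$ and all $\xi$, that is, $\mathbf{J}^\Phi\circ F_s=\mathbf{J}^\Phi$. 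The same reduction applies verbatim to the flows of $X_h$ and $E_h$.

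First I would compute the action of the Hamiltonian vector field $X_h$ on $J_\xi$, since this is the common building block. Using $\iota_{\xi_M}\omega=dJ_\xi$ from the momentum map definition, $X_h J_\xi=\iota_{X_h}dJ_\xi=\iota_{X_h}\iota_{\xi_M}\omega=\omega(\xi_M,X_h)$. On the other hand, the $G$-invariance $\Phi_g^*h=h$ differentiates at the identity (using that $G$ is connected) to $\xi_M h=0$ for every $\xi$. Writing $dh=\iota_{X_h}\omega+(Rh)\eta$ from \eqref{Eq::HamVecField} and contracting with $\xi_M$, the momentum map condition $\iota_{\xi_M}\eta=0$ kills the second term and gives $\xi_M h=\omega(X_h,\xi_M)$. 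Comparing the two expressions yields $X_h J_\xi=-\xi_M h=0$.

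The remaining two cases then follow immediately from the decompositions already established in the excerpt. By Proposition \ref{Prop::gradXR}, $\nabla h=X_h+(Rh)R$, so $\nabla h\, J_\xi=X_h J_\xi+(Rh)(RJ_\xi)$; the first term vanishes by the previous step and the second by the momentum map condition $RJ_\xi=0$. Likewise, $E_h=R+X_h$ by \eqref{Eq::EvVecField}, whence $E_h J_\xi=RJ_\xi+X_h J_\xi=0$. In each case the componentwise invariance along the corresponding flow, and hence $\mathbf{J}^\Phi\circ F_s=\mathbf{J}^\Phi$, follows as in the first paragraph.

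I do not expect a genuine obstacle here: the argument is a handful of contractions, and the only points demanding care are the bracket and sign conventions and the explicit, separate use of the two momentum map conditions, namely $\iota_{\xi_M}\eta=0$ in the $X_h$ computation and $RJ_\xi=0$ in the gradient and evolution cases. The condition $RJ_\xi=0$, which might look like a purely technical hypothesis, is precisely what makes the gradient case close, so I would take care to flag exactly where it enters.
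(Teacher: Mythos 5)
Your proposal is correct and follows essentially the same route as the paper: both arguments reduce the claim to $\iota_{X_h}dJ_\xi=\iota_{X_h}\iota_{\xi_M}\omega=-\iota_{\xi_M}(dh-(Rh)\eta)=0$ via the $G$-invariance $\xi_Mh=0$ and the condition $\iota_{\xi_M}\eta=0$, and then dispose of $\nabla h$ and $E_h$ using the decompositions $\nabla h=X_h+(Rh)R$ and $E_h=R+X_h$ together with $RJ_\xi=0$. The only difference is cosmetic: you isolate $X_hJ_\xi=0$ as the common building block first, while the paper starts from the gradient case; your explicit flagging of where $RJ_\xi=0$ enters matches the paper's implicit use of $\iota_RdJ_\xi=0$.
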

\begin{proof}
Since $h$ is $G$-invariant, $\xi_Mh=0$ for every $\xi\in\mathfrak{g} $. Therefore,
\begin{equation*}
\frac{d}{ds}\bigg|_{s=0}J_\xi\circ F_s=\iota_{\nabla h}dJ_\xi=\iota_{{X_h}+(Rh)R}dJ_\xi=\iota_{X_h}dJ_\xi=\iota_{X_h}\iota_{\xi_M}\omega=\iota_{\xi_M}((Rh)\eta-dh)=0,
\end{equation*}
for every $\xi\in\mathfrak{g} $. Thus, $J_\xi\circ F_s=J_\xi$, for every $\xi\in\mathfrak{g} $ and every $s\in\mathbb{R}$. This implies that ${\bf J}^\Phi\circ F_s={\bf J}^\Phi$ for all $s\in\mathbb{R}$. 
Meanwhile, if $L$ is the flow of $E_h$, then
\begin{equation*}
\frac{d}{ds}\bigg|_{s=0}J_\xi\circ L_s=\iota_{E_h}dJ_\xi =\iota_{{X_h}+R}\,dJ_\xi=\iota_{X_h}dJ_\xi=\iota_{X_h}\iota_{\xi_M}\omega=\iota_{\xi_M}((Rh)\eta-dh)=0,
\end{equation*}
for every $\xi\in\mathfrak{g} $. Hence, ${\bf J}^\Phi\circ L_s={\bf J}^\Phi$ for every $s\in \mathbb{R}$.  Since $\iota_{X_h+R}dJ_\xi=\iota_{X_h}dJ_\xi$, it follows that ${\bf J}^\Phi\circ K_s={\bf J}^\Phi$ for the diffeomorphisms $K_s$ of the one-parametric group of diffeomorphisms of $X_h$.  
\end{proof}

It's important to note that the theorem above guarantees the conservation of momentum of a $G$-invariant cosymplectic Hamiltonian system.

\begin{remark}Recall that, for $(M:=\mathbb{R}\times P,\omega_P,\eta_T)$, the vector field $X_h$ on $M$ can be considered as a  $t$-dependent vector field on $P$. The integral curves of $X_h$ as a $t$-dependent vector field are the integral curves of $E_h$ of the form $t\mapsto (t,x(t))$. Therefore, Proposition \ref{Prop:InvCos} also applies to the integral curves of $X_h$ as a $t$-dependent vector field, which was proved in \cite{LZ21} .
\end{remark}

The following lemma is a generalisation of a well-known result in symplectic geometry, which is crucial to obtain the cosymplectic Marsden--Weinstein reduction theorem. Recall that a {\it weakly regular value} of ${\bf J}^\Phi:M\rightarrow  \mathfrak{g} ^*$ is a point $\mu\in\mathfrak{g} ^*$ such that ${\bf J}^{\Phi-1}(\mu)$ is a submanifold in $M$ and $T_x{\bf J}^{\Phi-1}(\mu)=\ker T_x{\bf J}^\Phi$ for every $x\in {\bf J}^{\Phi-1}(\mu)$ (see \cite{Al89}).

\begin{lemma} 
\label{Lemm::NonAdPerp}
Let $\mu\in\mathfrak{g} ^*$ be a weak regular value of a cosymplectic momentum map ${\bf J}^\Phi:M\rightarrow  \mathfrak{g}^*$ and let $G^\Delta_\mu$ be the isotropy group at $\mu\in \mathfrak{g}^*$ of the affine action $\Delta :G\times \mathfrak{g}^*\rightarrow  \mathfrak{g}^*$ relative to the co-adjoint cocycle $\sigma:G\rightarrow  \mathfrak{g}^*$ of ${\bf J}^\Phi$, then, for every $x\in {\bf J}^{\Phi-1}(\mu)$, one has 
\begin{enumerate}
\item $T_{x}(G^\Delta_{\mu} x)=T_{x}(G x)\cap T_{x}({\bf J}^{\Phi-1}(\mu))$, 
\item $T_{x}({\bf J}^{\Phi-1}(\mu))=T_{x}(Gx)^{\perp_\omega}$,
\item $\left(T_x\left({\bf J}^{\Phi-1}(\mu)\right)\right)^{\perp_\om}=T_x(Gx)\oplus\<R_x\>$.
\end{enumerate}
\end{lemma}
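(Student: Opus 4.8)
The plan is to establish the three assertions in the order $(2)\to(1)\to(3)$, since the second claim feeds directly into the third, while the first is essentially independent. Throughout I use that weak regularity of $\mu$ gives the identification $T_x(\mathbf{J}^{\Phi-1}(\mu))=\ker T_x\mathbf{J}^\Phi$, and that $T_x(Gx)=\{(\xi_M)_x:\xi\in\mathfrak{g}\}$.

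For the second claim I would argue pointwise. A vector $v\in T_xM$ lies in $T_x(\mathbf{J}^{\Phi-1}(\mu))=\ker T_x\mathbf{J}^\Phi$ iff $\langle T_x\mathbf{J}^\Phi(v),\xi\rangle=dJ_\xi(v)=0$ for all $\xi\in\mathfrak{g}$. By the defining relation $\iota_{\xi_M}\omega=dJ_\xi$ of Definition \ref{Def:MomentumMap}, this reads $\omega_x((\xi_M)_x,v)=0$ for every $\xi$, i.e. $v$ is $\omega$-orthogonal to every vector of $T_x(Gx)$; antisymmetry of $\omega$ then gives exactly $v\in T_x(Gx)^{\perp_\omega}$, proving the equality.

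For the first claim the key tool is the $\Delta$-equivariance of $\mathbf{J}^\Phi$ from Proposition \ref{Prop::GenEqJ}. Differentiating $\mathbf{J}^\Phi(\Phi_{\exp(s\xi)}(x))=\Delta(\exp(s\xi),\mathbf{J}^\Phi(x))=\Delta(\exp(s\xi),\mu)$ at $s=0$ identifies $T_x\mathbf{J}^\Phi((\xi_M)_x)$ with the value at $\mu$ of the fundamental vector field of $\xi$ for the action $\Delta$. Hence a vector $(\xi_M)_x\in T_x(Gx)$ lies in $\ker T_x\mathbf{J}^\Phi=T_x(\mathbf{J}^{\Phi-1}(\mu))$ precisely when that fundamental vector field vanishes at $\mu$, i.e. when $\xi$ belongs to the Lie algebra $\mathfrak{g}^\Delta_\mu$ of the isotropy group $G^\Delta_\mu$. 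Since $T_x(G^\Delta_\mu x)=\{(\xi_M)_x:\xi\in\mathfrak{g}^\Delta_\mu\}$, both inclusions follow at once; note that $T_x\mathbf{J}^\Phi$ applied to a vector depends only on the vector, so there is no ambiguity when several $\xi$ produce the same $(\xi_M)_x$.

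The third claim is where the cosymplectic (rather than symplectic) nature of the setting surfaces, and I expect it to be the main obstacle: $\omega$ is degenerate with $\ker\omega=\langle R_x\rangle$, so the symplectic reflexivity $(V^{\perp_\omega})^{\perp_\omega}=V$ fails and the extra summand $\langle R_x\rangle$ must be produced carefully. My plan is to localise to the hyperplane $N:=\ker\eta_x$. The hypothesis $\iota_{\xi_M}\eta=0$ forces $W:=T_x(Gx)\subseteq N$, and since $R_x\notin N$ while $\ker\omega=\langle R_x\rangle$, the restriction $\omega_N:=\omega_x|_{N\times N}$ is non-degenerate, making $(N,\omega_N)$ a $2n$-dimensional symplectic vector space; write $W^{\perp_N}$ for the $\omega_N$-orthogonal of $W$ inside $N$. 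Decomposing any $v\in T_xM$ as $v=v_0+\eta_x(v)R_x$ with $v_0\in N$ and using $\iota_{R_x}\omega=0$, one checks $\omega_x(v,w)=\omega_N(v_0,w)$ for $w\in W$, whence $W^{\perp_\omega}=W^{\perp_N}\oplus\langle R_x\rangle$. Applying the same decomposition to $U:=W^{\perp_\omega}$, which now contains $R_x$, together with the genuine reflexivity $(W^{\perp_N})^{\perp_N}=W$ inside $N$, yields $(W^{\perp_\omega})^{\perp_\omega}=W\oplus\langle R_x\rangle$; rewriting $W^{\perp_\omega}=T_x(\mathbf{J}^{\Phi-1}(\mu))$ via the second claim gives the third. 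A dimension count serves as a cross-check: since $R_x\notin W$ one gets $\dim W^{\perp_\omega}=2n+1-\dim W$, hence $\dim(W^{\perp_\omega})^{\perp_\omega}=\dim W+1=\dim(W\oplus\langle R_x\rangle)$, and as $W\oplus\langle R_x\rangle$ is manifestly contained in the double orthogonal, equality follows.
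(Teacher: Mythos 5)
Your proposal is correct and follows essentially the same route as the paper: part (2) via the pointwise identity $\omega_x((\xi_M)_x,v)=\langle T_x\mathbf{J}^\Phi(v),\xi\rangle$, part (1) by differentiating the $\Delta$-equivariance relation to identify $T_x\mathbf{J}^\Phi((\xi_M)_x)$ with the fundamental vector field of $\Delta$ at $\mu$, and part (3) via the double $\omega$-orthogonal. Your treatment of (3) is in fact more careful than the paper's, which asserts the double-orthogonal identity with only a brief remark about $R_x$; your reduction to the symplectic hyperplane $\ker\eta_x$, where genuine reflexivity holds, supplies exactly the justification the paper leaves implicit.
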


\begin{proof}
Let us assume that $(\xi_M)_x\in T_x{\bf J}^{\Phi-1}(\mu)$. Since $T_x({\bf J}^{\Phi-1}(\mu))=\ker T_x{\bf J}^\Phi$, then,
\begin{multline*}
(\iota_{\xi_M}dJ_\tau)_x\!=\!\frac{d}{du}\bigg|_{u=0}\!\!\!\!\!J_{\tau}(\Phi(\exp(u\xi),x))\! =\!\left\langle\!\frac{d}{du}\bigg|_{u=0}\!\!\!\!{\bf J}^\Phi(\Phi(\exp(u\xi),x)),\tau\!\right\rangle\!=\!\left\langle\!\frac{d}{du}\bigg|_{u=0}\!\!\!\!\Delta_{\exp(u\xi)}{\bf J}^\Phi(x),\tau\!\right\rangle\!=\!0,
\end{multline*}
for every $\tau\in \mathfrak{g} $ if and only if $\xi\in\mathfrak{g} ^\Delta_\mu$, where $\mathfrak{g}^\Delta_\mu$ is the Lie algebra of $G^\Delta_\mu$. This proves $1$. 

Let us  prove $2$. Since ${\bf J}^\Phi$ is a cosymplectic momentum map, we have
\[
\om_x((\xi_M)_x,v_x)=(dJ_\xi)_x(v_x)=\left\langle T_x{\bf J}^\Phi(v_x),\xi\right\rangle,\qquad \forall x\in M,\quad \forall v_x\in T_xM,\quad \forall \xi\in\mathfrak{g} .
\]
Thus, $v_x\in \ker T_x{\bf J}^\Phi=T_x\left({\bf J}^{\Phi-1}(\mu)\right)$ if and only if $\langle T_x{\bf J}^\Phi(v_x),\xi\rangle=0$ for all $\xi\in\mathfrak{g}$, and $T_x({\bf J}^{\Phi-1}(\mu))=(T_x(Gx))^{\perp_\omega}$  for all  $x\in {\bf J}^{\Phi-1}(\mu)$. 

To prove $3.$, let us take $X=\xi_M+\lambda R$ for any  $\lambda\in \mathbb{R}$. Then, for any $v_x\in \ker T_x{\bf J}^\Phi$, we have
\[
\om_x(X_x,v_x)=(dJ_\xi)_x(v_x)=0,\qquad \forall\xi\in\mathfrak{g} ,
\]
and $ T_x(G x)\oplus\<R_x\>\subset \left(T_x({\bf J}^{\Phi-1}(\mu))\right)^{\perp_\om}$. On the other hand, for every $x\in\mathbf{J}^{\Phi-1}(\mu)$, since $R_x$ does take values in $T_x(\mathbf{J}^{\Phi-1}(\mu))$ but is not tangent to $Gx$, then
\begin{equation*}
\left(T_x({\bf J}^{\Phi-1}(\mu))\right)^{\perp_\omega}=\left(\dim T_x(Gx)\right)^{\perp_\omega\perp_\omega}=T_x(Gx)\oplus\<R_x\>,\qquad \forall x\in {\bf J}^{\Phi-1}(\mu).
\end{equation*}
Thus, statement $3.$ holds.
\end{proof}

The following theorem is a generalisation of the classical Marsden--Weinstein reduction theorem to the cosymplectic realm. It follows the ideas of the proof given in \cite{Al89}.

\begin{theorem}\label{Th:CoSymRed}
Let $\Phi: G\times M\rightarrow  M$ be a cosymplectic Lie group action on the cosymplectic manifold $(M,\omega,\eta)$ associated with a cosymplectic momentum map ${\bf J}^\Phi: M\rightarrow \mathfrak{g}^*$. Assume that $\mu\in\mathfrak{g}^*$ is a weakly regular value of ${\bf J}^\Phi$ and let ${\bf J}^{\Phi-1}(\mu)$ be quotientable, i.e. $M^\Delta_\mu:={\bf J}^{\Phi-1}(\mu)/G^\Delta_{\mu}$ is a manifold and $\pi_\mu:{\bf J}^{\Phi-1}(\mu)\rightarrow  M^\Delta_\mu$ is a submersion. Let $\iota_\mu:{\bf J}^{\Phi-1}(\mu)\hookrightarrow M$ be the natural immersion and let $\pi_\mu:{\bf J}^{\Phi-1}(\mu)\rightarrow  M^\Delta_\mu$ be the canonical projection. Then, there exists a unique cosymplectic manifold $(M_\mu^\Delta,\omega_\mu,\eta_\mu)$ such that
\begin{equation}
\label{Eq::CoSymRed}
\iota_\mu^*\omega=\pi_\mu^*\omega_\mu,\qquad \iota_\mu^*\eta=\pi_\mu^*\eta_\mu.
\end{equation}
\end{theorem}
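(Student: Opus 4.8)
The plan is to construct $\omega_\mu$ and $\eta_\mu$ by descending the pulled-back forms $\iota_\mu^*\omega$ and $\iota_\mu^*\eta$ from ${\bf J}^{\Phi-1}(\mu)$ to the quotient $M^\Delta_\mu$, and then to verify that the resulting triple is cosymplectic by a tangent-space computation built on Lemma \ref{Lemm::NonAdPerp}. First I would record that $G^\Delta_\mu$ preserves the level set: by the $\Delta$-equivariance of ${\bf J}^\Phi$ (Proposition \ref{Prop::GenEqJ}), for $g\in G^\Delta_\mu$ one has ${\bf J}^\Phi\circ\Phi_g=\Delta_g\circ{\bf J}^\Phi$, so $\Phi_g$ restricts to a diffeomorphism $\Phi^\mu_g$ of ${\bf J}^{\Phi-1}(\mu)$ with $\iota_\mu\circ\Phi^\mu_g=\Phi_g\circ\iota_\mu$. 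Hence $(\Phi^\mu_g)^*\iota_\mu^*\omega=\iota_\mu^*\Phi_g^*\omega=\iota_\mu^*\omega$, and likewise for $\eta$, so both forms are $G^\Delta_\mu$-invariant.

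Since the fibres of $\pi_\mu$ are the $G^\Delta_\mu$-orbits, whose tangent spaces are spanned by the fundamental vector fields $\xi_M$ with $\xi\in\mathfrak{g}^\Delta_\mu$ (tangent to the level set by part $1$ of Lemma \ref{Lemm::NonAdPerp}), descent requires horizontality. For $\omega$ this follows from $\iota_{\xi_M}\iota_\mu^*\omega=\iota_\mu^*(\iota_{\xi_M}\omega)=\iota_\mu^*dJ_\xi=d(J_\xi\circ\iota_\mu)=0$, because $J_\xi\circ\iota_\mu=\langle\mu,\xi\rangle$ is constant on the level set; for $\eta$ it is immediate from $\iota_{\xi_M}\eta=0$ in Definition \ref{Def:MomentumMap}. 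Being invariant and horizontal, $\iota_\mu^*\omega$ and $\iota_\mu^*\eta$ are basic, so there exist unique $\omega_\mu\in\Omega^2(M^\Delta_\mu)$ and $\eta_\mu\in\Omega^1(M^\Delta_\mu)$ satisfying \eqref{Eq::CoSymRed}; uniqueness comes from the injectivity of $\pi_\mu^*$ for the surjective submersion $\pi_\mu$. Closedness is then automatic, since $\pi_\mu^*d\omega_\mu=d\iota_\mu^*\omega=\iota_\mu^*d\omega=0$ forces $d\omega_\mu=0$, and similarly $d\eta_\mu=0$.

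The main work is to verify that $(M^\Delta_\mu,\omega_\mu,\eta_\mu)$ is cosymplectic, that is, that $\ker\omega_\mu$ is one-dimensional and $\eta_\mu$ does not vanish on it. Fixing $x\in{\bf J}^{\Phi-1}(\mu)$ and writing $V=T_x(Gx)$, a vector $\bar v=T_x\pi_\mu(v)$ lies in $\ker(\omega_\mu)_{\pi_\mu(x)}$ exactly when $v\in T_x({\bf J}^{\Phi-1}(\mu))\cap\big(T_x({\bf J}^{\Phi-1}(\mu))\big)^{\perp_\omega}$. By parts $2$ and $3$ of Lemma \ref{Lemm::NonAdPerp} this intersection equals $V^{\perp_\omega}\cap(V\oplus\langle R_x\rangle)$. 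Since $R_x$ is tangent to the level set (as $RJ_\xi=0$) while $R_x\notin V$ (fundamental vector fields take values in $\ker\eta$, whereas $\iota_R\eta=1$), and since $V\cap V^{\perp_\omega}=T_x(G^\Delta_\mu x)$ by part $1$, this intersection reduces to $T_x(G^\Delta_\mu x)\oplus\langle R_x\rangle$. Passing to the quotient annihilates $T_x(G^\Delta_\mu x)$, so $\ker(\omega_\mu)_{\pi_\mu(x)}=\langle T_x\pi_\mu(R_x)\rangle$ is one-dimensional. Finally, $\eta_\mu(T_x\pi_\mu(R_x))=(\pi_\mu^*\eta_\mu)(R_x)=(\iota_\mu^*\eta)(R_x)=\eta(R_x)=1\neq0$, which simultaneously shows $\eta_\mu\neq0$ and $\ker\omega_\mu\cap\ker\eta_\mu=\{0\}$, whence $TM^\Delta_\mu=\ker\omega_\mu\oplus\ker\eta_\mu$ and the triple is cosymplectic.

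The delicate point throughout is that $\omega$ is degenerate along the Reeb direction, so the orthogonality bookkeeping is not the clean symplectic one; the three identities of Lemma \ref{Lemm::NonAdPerp}, together with the fact that $R$ is tangent to every level set of ${\bf J}^\Phi$ yet transverse to the $G$-orbits, are precisely what make the Reeb direction survive into $\ker\omega_\mu$ and guarantee that $\eta_\mu$ detects it. I would present the kernel computation as the heart of the argument and treat the descent of the forms and their closedness as routine.
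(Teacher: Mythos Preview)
Your proof is correct and follows essentially the same route as the paper: descend $\iota_\mu^*\omega$ and $\iota_\mu^*\eta$ via invariance and horizontality, verify closedness, and then use the three parts of Lemma \ref{Lemm::NonAdPerp} together with the Reeb direction to establish the nondegeneracy. The only cosmetic difference is that the paper phrases the last step as injectivity of $\flat_\mu$ (showing any $X_\mu$ with $\iota_{X_\mu}\omega_\mu=0$ and $\iota_{X_\mu}\eta_\mu=0$ must vanish), whereas you compute $\ker\omega_\mu$ directly and check $\eta_\mu$ pairs nontrivially with it; the underlying tangent-space computation, reducing to $T_x(G^\Delta_\mu x)\oplus\langle R_x\rangle$ and then killing the first summand under $\pi_\mu$, is identical.
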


\begin{proof}
The quotient space $M^\Delta_\mu={\bf J}^{\Phi-1}(\mu)/G^\Delta_\mu$ is a manifold because $\mathbf{J}^{\Phi-1}(\mu)$ is quotientable. Meanwhile, $\pi_\mu:{\bf J}^{\Phi-1}(\mu)\rightarrow  M^\Delta_\mu$ is a surjective submersion by assumption. Then, $\ker T\pi_\mu$ is a subbundle of $T({\bf J}^{\Phi-1}(\mu))$. From our assumptions, $\Phi_g$ is a cosymplectomorphism for every $g\in G$ and then $\mathcal{L}_{\xi_M}\omega=0$
and $\mathcal{L}_{\xi_M}\eta=0$ for every $\xi\in\mathfrak{g} $. This ensures that $\mathcal{L}_{\xi_{\mathbf{J}^{\Phi-1}(\mu)}}\iota_\mu^*\omega=0$ and $\mathcal{L}_{\xi_{\mathbf{J}^{\Phi-1}(\mu)}}\iota_\mu^*\eta=0$ for every $\xi\in\mathfrak{g} ^\Delta_\mu$, where $\mathfrak{g}^\Delta_\mu$ is the Lie algebra of $G_\mu^\Delta$ and $\xi_{{\bf J}^{\Phi-1}(\mu)}$ is the fundamental vector field of the restriction of the action of $G_\mu^\Delta$ to ${\bf J}^{\Phi-1}(\mu)$ via $\Phi$. Similarly, for every vector field $Y_{{\bf J}^{\Phi-1}(\mu)}$ on ${\bf J}^{\Phi-1}(\mu)$ and  tangent to ${\bf J}^{\Phi-1}(\mu)$, one can consider that there exists some vector field $Y$ on $M$ coinciding with $Y_{{\bf J}^{\Phi-1}(\mu)}$ on ${\bf J}^{\Phi-1}(\mu)$. Then,
\[
\iota_{Y_{{\bf J}^{\Phi-1}(\mu)}}\iota_{\xi_{{\bf J}^{\Phi-1}(\mu)}}\iota_\mu^*\om=\iota_\mu^*(\iota_{Y}\iota_{\xi_M}\om)=\iota_{\mu}^*(\iota_{Y}dJ_\xi)=0,
\]
and
\[
\iota_{Y_{{\bf J}^{\Phi-1}(\mu)}}\iota_{\xi_{{\bf J}^{\Phi-1}(\mu)}}\iota_\mu^*\eta=\iota_\mu^*(\iota_{Y}\iota_{\xi_M}\eta)=0.
\]
These conditions guarantee the existence of $\om_\mu\in\Omega^2(M^\Delta_\mu)$ and $\eta_\mu\in\Omega^1(M^\Delta_\mu)$ satisfying \eqref{Eq::CoSymRed}. Therefore, $\om_\mu$ and $\eta_\mu$ are unique, due to the fact that $\pi^*_\mu$ is injective, and well defined. Note that $\om_\mu$ and $\eta_\mu$ are closed, since $\om$ and $\eta$ are closed and \eqref{Eq::CoSymRed} are satisfied. Recall that $\iota_R dJ_\xi=0$ for every $\xi\in\mathfrak{g} $ due to the definition of ${\bf J}^{\Phi}$ . Hence, $R$ is tangent to ${\bf J}^{\Phi-1}(\mu)$. Thus, there is a vector field $\widetilde{R}$ on ${\bf J}^{\Phi-1}(\mu)$ such that $\widetilde{R}=R|_{{{\bf J}^{\Phi-1}(\mu)}}$. Since $\Phi_{g*}\widetilde{R}=\widetilde{R}$ and $ \mathcal{L}_{\xi_{\mathbf{J}^{\Phi-1}(\mu)}}\widetilde{R}=0$ for every $g\in G^\Delta_\mu$ and $\xi\in\mathfrak{g} ^\Delta_\mu$, there exists a well defined vector field $R_\mu$ on $M^\Delta_\mu$ such that $R_\mu=\pi_{\mu*}\widetilde{R}$. Moreover,
\[
\pi_\mu^*(\iota_{R_\mu}\eta_\mu)=\iota_{\widetilde{R}}\,\pi^*_\mu\eta_\mu=\iota_\mu^*(\iota_R\eta)=1,
\]
and
\[
\pi_\mu^*(\iota_{R_\mu}\om_\mu)=\iota_{\widetilde{R}}\pi^*_\mu\om_\mu=\iota^*_\mu(\iota_R\om)=0.
\]
Thus, $\iota_{R_\mu}\eta_\mu=1$ and $\iota_{R_\mu}\om_\mu=0$. To prove that $\ker\omega_\mu\oplus\ker\eta_\mu=TM^\Delta_\mu$, we will show that 
\[
\flat_\mu:X_\mu\in TM^\Delta_\mu\mapsto \iota_{X_\mu}\om_\mu+(\iota_{X_\mu}\eta_\mu)\eta_\mu\in T^*M^\Delta_\mu
\]
is an isomorphism. To show that $\flat_\mu$ is injective, let us assume that there is a vector field $X_\mu$ taking values in $\ker\flat_\mu$. Since $\iota_{R_\mu}\flat_\mu(X_\mu)=0$, then $\iota_{X_\mu}\eta_\mu=0$ and $\iota_{X_\mu}\omega_\mu=0$. Hence, there exist $\widetilde{X}\in \mathfrak{X}({\bf J}^{\Phi-1}(\mu))$ and $X\in\mathfrak{X}(M)$, such that $\pi_{\mu*}\widetilde{X}=X_\mu$ and $\widetilde{X}=X|_{{\bf J}^{\Phi-1}(\mu)}$. However, $\pi_\mu^*(\iota_{X_\mu}\om_\mu)=\iota_{X}\omega|_{\mathbf{J}^{\Phi-1}(\mu)}=0$ implies that $X$ takes values in $\left(T_x({\bf J}^{\Phi-1}(\mu))\right)^{\perp_\om}=T_x(Gx)\oplus\<R_x\>$ for all $x\in\mathbf{J}^{\Phi-1}(\mu)$. Therefore, $X_x=(\xi_M)_x+\lambda R_x$ for some $\xi\in \mathfrak{g}^\Delta_\mu$ and $\lambda\in\mathbb{R}$ depending on  $x\in\mathbf{J}^{\Phi-1}(\mu)$. Since $\eta_\mu(T_x\pi_\mu X_x)=0$, one gets $\lambda=0$. Then, $(X_{\mu})_{\pi_\mu(x)}=T_x\pi_\mu X_x=T_x\pi_\mu(\xi_M)_x=0$.

Thus, $\ker\flat_\mu=0$ and $\flat_\mu$ is injective. The map $\flat_\mu$ is also surjective as a result of a dimension analysis. Hence, $(M_\mu^\Delta,\om_\mu,\eta_\mu)$ is a cosymplectic manifold. 
\end{proof}

The following result will be interesting for physical applications of our cosymplectic energy-momentum method to be developed in the following sections. Note that the existence of a cosymplectic momentum map for $\Phi:G\times M\rightarrow  M$ relative to $(M:=T\times P,\omega_P,\eta_T)$ implies that $\Phi$ can be restricted to a Lie group action of $G$ on $P$. In fact, this is due to the fact that the fundamental vector fields of $\Phi$ are required to take values in $\ker \eta$.

\begin{corollary}
\label{Cor::De}
Assume the assumptions of Theorem \ref{Th:CoSymRed} to remain valid. Additionally, consider the cosymplectic manifold to be $(T\times P,\omega_P,\eta_T)$. Then, 
\[
{\bf J}^{\Phi-1}(\mu)\simeq T\times \pi_P({\bf J}^{\Phi -1}(\mu)),\qquad M^\Delta_\mu\simeq T\times P^\Delta_\mu,
\]
where $P^\Delta_\mu:=\pi_P(\mathbf{J}^{\Phi-1}(\mu))/G^\Delta_\mu$.
\end{corollary}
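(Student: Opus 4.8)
The plan is to reduce both diffeomorphisms to the two structural consequences of having a cosymplectic momentum map on a product $M=T\times P$: the $G$-action leaves the $T$-factor pointwise fixed, and $\mathbf{J}^\Phi$ does not depend on the $T$-coordinate. First I would fix Darboux coordinates $\{t,q^1,\ldots,q^n,p_1,\ldots,p_n\}$ adapted to the product, so that $\eta_T=dt$, the Reeb field is $R=\partial/\partial t$, and $\ker\eta_T$ is spanned by the $\partial/\partial q^i,\partial/\partial p_i$, i.e. is tangent to the fibres $\{t\}\times P$. By Definition \ref{Def:MomentumMap} every fundamental vector field satisfies $\iota_{\xi_M}\eta_T=0$, hence is tangent to those fibres; its flow therefore fixes the $T$-component of each point. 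Since $G$ is connected, it is generated by such flows, so $\Phi_g(t,z)=(t,\Phi^P_g(z))$ for all $g\in G$, where $\Phi^P\colon G\times P\to P$ is the induced action on $P$ anticipated in the remark preceding the statement.

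Next I would use the second defining property $RJ_\xi=0$. As $R=\partial/\partial t$ and $T$ is connected, this forces each $J_\xi$, and therefore $\mathbf{J}^\Phi$ itself, to be constant along every fibre $T\times\{z\}$; thus $\mathbf{J}^\Phi=\bar{\mathbf{J}}\circ\pi_P$ for a unique $\bar{\mathbf{J}}\colon P\to\mathfrak{g}^*$. Consequently
\[
\mathbf{J}^{\Phi-1}(\mu)=\pi_P^{-1}\big(\bar{\mathbf{J}}^{-1}(\mu)\big)=T\times\bar{\mathbf{J}}^{-1}(\mu),\qquad \bar{\mathbf{J}}^{-1}(\mu)=\pi_P\big(\mathbf{J}^{\Phi-1}(\mu)\big),
\]
which already gives the first claimed diffeomorphism $\mathbf{J}^{\Phi-1}(\mu)\simeq T\times\pi_P(\mathbf{J}^{\Phi-1}(\mu))$.

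For the reduced space I would restrict attention to $G^\Delta_\mu$. By the $\Delta$-equivariance of $\mathbf{J}^\Phi$ (Proposition \ref{Prop::GenEqJ}) and the definition of $G^\Delta_\mu$ as the isotropy group of $\mu$, the subgroup $G^\Delta_\mu$ preserves $\mathbf{J}^{\Phi-1}(\mu)$, and since it acts through $\Phi$ it fixes the $T$-component; hence its action on $T\times\bar{\mathbf{J}}^{-1}(\mu)$ reads $g\cdot(t,z)=(t,\Phi^P_g(z))$, and $G^\Delta_\mu$ preserves $\bar{\mathbf{J}}^{-1}(\mu)=\pi_P(\mathbf{J}^{\Phi-1}(\mu))$. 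Because the $T$-factor is left pointwise fixed, the $G^\Delta_\mu$-orbits in $\mathbf{J}^{\Phi-1}(\mu)$ are exactly the sets $\{t\}\times(G^\Delta_\mu z)$, so the quotient splits as a product and the natural map $T\times(\bar{\mathbf{J}}^{-1}(\mu)/G^\Delta_\mu)\to\mathbf{J}^{\Phi-1}(\mu)/G^\Delta_\mu=M^\Delta_\mu$, $(t,[z])\mapsto[(t,z)]$, is a bijection. Identifying $P^\Delta_\mu=\pi_P(\mathbf{J}^{\Phi-1}(\mu))/G^\Delta_\mu$, this yields $M^\Delta_\mu\simeq T\times P^\Delta_\mu$.

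The steps above are all formal once the two connectedness inputs are in place, so the only point demanding care is the last one: verifying that the orbit space of a trivial-on-$T$ action commutes with the product and that the resulting bijection is a diffeomorphism (not merely a homeomorphism). The quotientability hypothesis of Theorem \ref{Th:CoSymRed} guarantees that $M^\Delta_\mu$ is a smooth manifold and $\pi_\mu$ a submersion; I would transfer this to $\bar{\mathbf{J}}^{-1}(\mu)/G^\Delta_\mu$ by checking in Darboux coordinates that the splitting map and its inverse are smooth, which is immediate because the $t$-coordinate descends unchanged. Thus the expected main obstacle is bookkeeping about smooth structures on quotients rather than any genuinely cosymplectic difficulty; the essential mechanism is that $\iota_{\xi_M}\eta=0$ confines the whole reduction to the symplectic factor $P$, while the connectedness of $G$ and of $T$ is what upgrades the infinitesimal conditions $\iota_{\xi_M}\eta=0$ and $RJ_\xi=0$ to the global product descriptions used above.
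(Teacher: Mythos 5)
Your argument is correct and follows essentially the same route as the paper: use $RJ_\xi=0$ together with the connectedness of $T$ to factor $\mathbf{J}^\Phi$ through $\pi_P$, giving $\mathbf{J}^{\Phi-1}(\mu)=T\times\pi_P(\mathbf{J}^{\Phi-1}(\mu))$, and use $\iota_{\xi_M}\eta=0$ (plus connectedness of $G$) to see that the $G^\Delta_\mu$-action fixes the $T$-component, so the quotient splits as $T\times\big(\pi_P(\mathbf{J}^{\Phi-1}(\mu))/G^\Delta_\mu\big)$. Your closing remarks on transferring the smooth structure to the quotient are a point the paper passes over silently, but they do not change the substance of the proof.
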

\begin{proof}
By Definition \ref{Def:MomentumMap}, one has that $\iota_R dJ_\xi=0$ and $\mathcal{L}_{R}dJ_\xi=0$ for every $\xi\in\mathfrak{g}$. Hence, $dJ_\xi$ is a basic one-form with respect to $\pi_P$. Thus, for each $\xi\in\mathfrak{g} $, there exists $\widetilde{J}_\xi\in C^\infty(P)$ such that $\pi_P^*\widetilde{J}_\xi=J_\xi$. Therefore, there exists $\widetilde{\mathbf{J}}^\Phi:P\rightarrow \mathfrak{g}^*$ such that $\mathbf{J}^\Phi=\widetilde{\mathbf{J}}^\Phi\circ \pi_P$
 and  $\mathbf{J}^{\Phi-1}(\mu)=T\x\widetilde{\mathbf{J}}^{\Phi-1}(\mu)=T\x\pi_P(\mathbf{J}^{\Phi-1}(\mu))$. 
 
 Let $\widetilde{R}$ be the restriction of $R$ to ${\bf J}^{\Phi-1}(\mu)$. Note that $\Phi^\mu:G_\mu^\Delta\times T\x\mathbf{J}^{\Phi-1}(\mu)\rightarrow  T\times \mathbf{J}^{\Phi-1}(\mu)$ is a well-defined Lie group action obtained by restricting the action $\Phi$ of $G^\Delta_\mu$ on $T\times P$ to $T\times \mathbf{J}^{\Phi-1}(\mu)$. Since $\iota_{\xi_{\mathbf{J}^{\Phi-1}(\mu)}}\iota_\mu^*\eta_T=0$ 
  for every $\xi\in \mathfrak{g}^\Delta_\mu$, there exists a Lie group action $\widetilde{\Phi}^\mu:G_\mu^\Delta\x\widetilde{\mathbf{J}}^{\Phi-1}(\mu)\rightarrow  \widetilde{\mathbf{J}}^{\Phi-1}(\mu)$ satisfying $\widetilde{\Phi}^\mu_g\circ\pi_P=\pi_P\circ\Phi^\mu_g$ for every $g\in G_\mu^\Delta$. Hence, $\Phi^\mu_g(t,p)=(t,\widetilde{\Phi}^\mu_g(p))$ for every $t\in T$ and $p\in \widetilde{\bf J}^{\Phi-1}(\mu)$. Thus, $\mathbf{J}^{\Phi-1}(\mu)/G^\Delta_\mu=(T\times \widetilde{\mathbf{J}}^{\Phi-1}(\mu))/G^\Delta_\mu=T\times (\widetilde{\mathbf{J}}^{\Phi-1}(\mu)/G^\Delta_\mu)$.
\end{proof}
 
\begin{proposition} \label{Pr::RedHam}
    Consider the assumptions of Theorem \ref{Th:CoSymRed} to hold for $(M_\eta^\omega,h, {\bf J}^\Phi)$. Then, the restriction of $E_h$ to ${\bf J}^{\Phi-1}(\mu)$ is projectable onto $M^\Delta_\mu={\bf J}^{\Phi-1}(\mu)/G^\Delta_\mu$ and $\pi_{\mu*} (E_h|_{\mathbf{J}^{\Phi-1}(\mu)})=E_{k_\mu}$, where $k_\mu$ is the only function on $M^\Delta_\mu$ such that $\pi^*_\mu k_\mu=\iota_\mu^* h$.  
\end{proposition}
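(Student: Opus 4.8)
The plan is to first show that $E_h$ descends to a well-defined vector field $\bar{E}:=\pi_{\mu*}(E_h|_{\mathbf{J}^{\Phi-1}(\mu)})$ on $M^\Delta_\mu$, and then to identify $\bar{E}$ with $E_{k_\mu}$ by verifying that it satisfies the two contraction conditions characterising $E_{k_\mu}$ through the isomorphism $\flat_\mu$ built in the proof of Theorem \ref{Th:CoSymRed}.

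For projectability I would argue in two steps. First, $E_h$ is tangent to $\mathbf{J}^{\Phi-1}(\mu)$: the computation in Proposition \ref{Prop:InvCos} gives $\iota_{E_h}dJ_\xi=\iota_{X_h}dJ_\xi=\iota_{\xi_M}((Rh)\eta-dh)=0$ for all $\xi\in\mathfrak{g}$, so $E_h$ annihilates the functions $J_\xi$ cutting out the weakly regular level set, and $\widetilde{E}:=E_h|_{\mathbf{J}^{\Phi-1}(\mu)}$ is a genuine vector field on $\mathbf{J}^{\Phi-1}(\mu)$. Second, $E_h$ is $G$-invariant: since $\Phi_g^*\omega=\omega$, $\Phi_g^*\eta=\eta$ and $\Phi_g^*h=h$, the diffeomorphism $\Phi_g$ preserves $R$ (by uniqueness of the Reeb vector field), hence preserves $Rh$, and therefore preserves the defining data $\iota_{E_h}\eta=1$, $\iota_{E_h}\omega=dh-(Rh)\eta$; uniqueness through $\flat$ then forces $\Phi_{g*}E_h=E_h$ for every $g\in G$. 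Restricting to $g\in G^\Delta_\mu$, which acts on $\mathbf{J}^{\Phi-1}(\mu)$ by the $\Delta$-equivariance of $\mathbf{J}^\Phi$, shows $\widetilde{E}$ is $G^\Delta_\mu$-invariant, so it projects to a well-defined $\bar{E}$ with $\pi_{\mu*}\widetilde{E}=\bar{E}$. Working with all $g\in G$ rather than with the infinitesimal condition $[\xi_M,E_h]=0$ conveniently sidesteps any connectedness issue for $G^\Delta_\mu$.

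To identify $\bar{E}$, I would pull everything back by $\pi_\mu$ and use $\iota_\mu^*\omega=\pi_\mu^*\omega_\mu$, $\iota_\mu^*\eta=\pi_\mu^*\eta_\mu$ from Theorem \ref{Th:CoSymRed} together with the injectivity of $\pi_\mu^*$. For the Reeb normalisation,
\begin{equation*}
\pi_\mu^*(\iota_{\bar{E}}\eta_\mu)=\iota_{\widetilde{E}}\,\iota_\mu^*\eta=\iota_\mu^*(\iota_{E_h}\eta)=\iota_\mu^*(1)=1,
\end{equation*}
so $\iota_{\bar{E}}\eta_\mu=1$. For the contraction with $\omega_\mu$, using $\iota_{E_h}\omega=dh-(Rh)\eta$ and $\iota_\mu^*h=\pi_\mu^*k_\mu$,
\begin{equation*}
\pi_\mu^*(\iota_{\bar{E}}\omega_\mu)=\iota_\mu^*(\iota_{E_h}\omega)=d(\iota_\mu^*h)-(\iota_\mu^*(Rh))\,\iota_\mu^*\eta=\pi_\mu^*(dk_\mu)-(\iota_\mu^*(Rh))\,\pi_\mu^*\eta_\mu.
\end{equation*}
The key point is the identity $\iota_\mu^*(Rh)=\pi_\mu^*(R_\mu k_\mu)$: since $R$ is tangent to $\mathbf{J}^{\Phi-1}(\mu)$ and $R_\mu=\pi_{\mu*}\widetilde{R}$ with $\widetilde{R}=R|_{\mathbf{J}^{\Phi-1}(\mu)}$ (as produced in the proof of Theorem \ref{Th:CoSymRed}), one has $\pi_\mu^*(R_\mu k_\mu)=\widetilde{R}(\pi_\mu^* k_\mu)=\widetilde{R}(\iota_\mu^* h)=\iota_\mu^*(Rh)$. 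Substituting yields $\iota_{\bar{E}}\omega_\mu=dk_\mu-(R_\mu k_\mu)\eta_\mu$.

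Finally, $E_{k_\mu}=R_\mu+X_{k_\mu}$ satisfies exactly $\iota_{E_{k_\mu}}\eta_\mu=1$ and $\iota_{E_{k_\mu}}\omega_\mu=dk_\mu-(R_\mu k_\mu)\eta_\mu$, so $\flat_\mu(\bar{E})=\flat_\mu(E_{k_\mu})$; since $\flat_\mu$ is a vector bundle isomorphism, $\bar{E}=E_{k_\mu}$, which is the claim. I expect the main obstacle to be not the projectability, which is essentially handed to us by the invariance results of Sections \ref{Sec::BasicSympl}--\ref{Sec::CosymplReduction}, but the matching of the $(Rh)$- and $(R_\mu k_\mu)$-terms: this correction term is a genuinely cosymplectic feature absent from the symplectic Marsden--Weinstein picture, and everything hinges on the compatibility $\iota_\mu^*(Rh)=\pi_\mu^*(R_\mu k_\mu)$ coming from $R$ being $\pi_\mu$-related to $R_\mu$.
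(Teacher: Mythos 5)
Your proof is correct and reaches the same conclusion by the same overall strategy (tangency, invariance, then identification of the projected field through the contraction conditions and the isomorphism $\flat_\mu$), but it diverges from the paper at the projectability step. The paper argues infinitesimally: it writes $\xi_M=X_{J_\xi}$, uses $RJ_\xi=0$ and Proposition \ref{Prop::gradXR} to get $[\xi_M,R]=0$, and then computes $[\xi_M,E_h]=[\xi_M,X_h]=X_{\{h,J_\xi\}_{\omega,\eta}}=X_{\xi_Mh}=0$ via \eqref{Eq::AntiMorphism}. You instead establish the finite statement $\Phi_{g*}E_h=E_h$ for all $g\in G$ directly from $\Phi_g^*\omega=\omega$, $\Phi_g^*\eta=\eta$, $\Phi_g^*h=h$ and the uniqueness of the solution of $\iota_{E_h}\eta=1$, $\iota_{E_h}\omega=dh-(Rh)\eta$ through $\flat$; as you note, this buys you invariance under all of $G^\Delta_\mu$ without worrying about its connectedness (the paper's standing connectedness hypothesis is on $G$ and on manifolds, not on isotropy subgroups), whereas the bracket computation only directly controls the identity component. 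For the identification step, the paper merely asserts that the identities $\iota_{\pi_{\mu*}(E_h|_{{\bf J}^{\Phi-1}(\mu)})}\omega_\mu=dk_\mu-(R_\mu k_\mu)\eta_\mu$ and $\iota_{\pi_{\mu*}(E_h|_{{\bf J}^{\Phi-1}(\mu)})}\eta_\mu=1$ ``follow from the proof'' of Theorem \ref{Th:CoSymRed}; your explicit verification, and in particular the matching $\iota_\mu^*(Rh)=\pi_\mu^*(R_\mu k_\mu)$ obtained from the $\pi_\mu$-relatedness of $\widetilde{R}$ and $R_\mu$, is precisely the computation being alluded to, and spelling it out fills a real gap in the published argument.
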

\begin{proof}
By Proposition \ref{Prop:InvCos}, the vector field $E_h$ is tangent to ${\bf J}^{\Phi-1}(\mu)$. Moreover, for every $\xi\in \mathfrak{g}_\mu^\Delta$, one has that $\xi_M=X_{J_\xi}$. Since $RJ_\xi=0$, Proposition \ref{Prop::gradXR} entails that $[\xi_M,R]=0$. Therefore, 
\[
[\xi_M,E_h]=[\xi_M,R+X_h]=[\xi_M,X_h],\qquad \forall \xi\in \mathfrak{g}.
\]
Then,  (\ref{Eq::AntiMorphism}) yields
\[
[\xi_M,X_h]=X_{\{h,J_\xi\}}=X_{\xi_Mh}=0.
\]
Hence, $E_h|_{\mathbf{J}^{\Phi-1}(\mu)}$ is projectable onto $M^\Delta_\mu$. By Theorem \ref{Th:CoSymRed}, the differential forms $\iota^*_\mu\omega$ and $\iota^*_\mu\eta$ are also projectable and, it follows  from its proof that
\[
\iota_{\pi_{\mu^*}(E_h|_{{\bf J}^{\Phi-1}(\mu)})}\omega_\mu=dk_\mu-(R_\mu k_\mu)\eta_\mu,\qquad \iota_{\pi_{\mu^*}(E_h|_{{\bf J}^{\Phi-1}}(\mu))}\eta_\mu=1,
\]
with Hamiltonian function $k_\mu\in C^\infty(M^\Delta_\mu)$ given by the condition $\pi_\mu^*k_\mu=\iota_\mu^*h$, whose existence is ensured by the fact that $h$ is invariant relative to $G^\Delta_\mu$. This yields that $\pi_{\mu^*}E_h|_{\mathbf{J}^{\Phi-1}(\mu)}$ is an evolutionary vector field relative to $(M^\Delta_\mu,\omega_\mu,\eta_\mu)$. 
\end{proof}

\section{Lyapunov stability}
\label{Sec::Stab}

Let us introduce basic notions and theorems on the stability of dynamical systems that will be used in our cosymplectic formulation of the time-dependent energy-momentum method \cite{LZ21, Za21}.

As we assume that all manifolds considered in this work are paracompact and Hausdorff, they, therefore, admit a Riemannian metric $\bm{g}$ \cite{Le09}. The topology induced by $\bm{g}$  is the same as the topology of the manifold $P$ \cite{Za21}. The metric $\bm{g}$ induces a distance in $P$ so that the distance between two points $x_1,x_2\in P$ is given by 
\begin{equation}
    d(x_1,x_2):=\!\!\!\!\!\!\inf_{\substack{\tiny \gamma:[0,1]\Rightarrow  P\\\gamma(0)=x_1,\gamma(1)=x_2}}\!\!\!\!\!\! \!\!\!l_{\bm{g}}(\gamma),
\end{equation}
where $l_{\bm{g}}(\gamma)$ is the length of a  smooth curve $\gamma:[0,1]\rightarrow  P$. The fact that the topology of the manifold is the one induced by any Riemannian metric $\bm{g}$ implies that our results are independent of $\bm{g}$. In particular, this allows one to use open coordinated subsets and standard norms, which much simplifies the practical application of developed methods (see \cite{Le13,LZ21,Za21} for details).  

Let $X:T\times P\ni(t,x)\mapsto X(t,x)\in TP$ be a $t$-dependent vector field on $P$, namely a $t$-parametric family of vector fields $X_t:P\ni x\mapsto X_t(x)\in TP$ with $t\in T$ (see \cite{LS20} for details on them and vector fields along projections). As $X$ is assumed to be smooth, the theorem of existence and uniqueness of solutions  can be applied to 
\begin{equation}\label{Eq::NonAutDyn}
    \frac{dx}{dt}=X(t,x),\quad \forall x\in P,\quad\forall t\in T.
\end{equation}
Let us denote  $\overline{\mathbb{R}}_+:=\mathbb{R}_+\cup\{0\}$ and let us set $I_t:=[t,\infty[$ for every $t\in\mathbb{R}$ and $I_{-\infty}=\mathbb{R}$. A point $x_e\in P$ is an {\it equilibrium point} of \eqref{Eq::NonAutDyn} if $X(t,x_e)=0$ for every $t\in T$. In that case, it is also said that $x_e\in P$ is an \textit{equilibrium point of the $t$-dependent vector field} $X$. Let us hereafter assume, when talking about stability, that $T=\mathbb{R}$. An equilibrium point $x_e$ is {\it stable} from $t^0\in \mathbb{R}$ if, for every $t_0\in I_{t^0}$ and for every ball $B_{x_e,\varepsilon}:=\{x\in P:d(x,x_e)<\epsilon\}$, there exists a radius $\delta(t_0,\varepsilon)$ such that every solution $x(t)$ of \eqref{Eq::NonAutDyn} with initial condition $x(t_0)\in B_{x_e,\delta(t_0,\varepsilon)}$ is contained in $B_{x_e,\varepsilon}$ for $t>t_0$. In further applications, we assume $t^0=-\infty$, if not otherwise stated. An equilibrium point $x_e$ is {\it uniformly stable} if one can set the radius $\delta(t_0,\varepsilon)$ to be independent of $t_0$. An equilibrium point $x_e$ is {\it unstable} if it is not stable.

An equilibrium point $x_e$ is said to be {\it asymptotically stable} if it is stable and, for every $t_0\in I_{t^0}$, there exists a radius $r(t_0)$ such that every solution $x(t)$ of \eqref{Eq::NonAutDyn} with initial condition $x(t_0)\in  B_{x_e,r(t_0)}$ converges to $x_e$ when $t$ tends to infinity. Moreover, an equilibrium point $x_e$ is {\it uniformly asymptotically stable} if it is asymptotically stable, the radius $\varepsilon=r(t_0)$ can be chosen so that it is independent of $t_0$ and the convergence of $x(t)$ to $x_e$ is uniform relative to $x\in B_{x_e,\varepsilon}$ and $t\in I_{t^0}$.

Let us introduce notions that are necessary to formulate the Basic Lyapunov Theorem on manifolds (see \cite{LZ21} and references therein).

\begin{definition}
\label{lpdf}
A function $\mathcal{M}:\mathbb{R}\times P\rightarrow \mathbb{R}$ is a {\it locally positive definite function} ({\it lpdf}) {\it at an equilibrium point} $x_e$ from $t^0\in \mathbb{R}$ if, there exists $r>0$ and a continuous, strictly increasing function $\alpha:\overline{\mathbb{R}}_+\rightarrow \mathbb{R}$ with $\alpha(0)=0$, such that
\begin{equation*}
    \mathcal{M}(t,x_e)=0,\quad \mathcal{M}(t,x)\geq \alpha(d(x,x_e)),\quad\forall t\in I_{t^0},\quad \forall x\in B_{x_e,r}.
\end{equation*}

Meanwhile, $\mathcal{M}:\mathbb{R}\times P\rightarrow \mathbb{R}$ is called {\it  decrescent} {\it at an equilibrium point} $x_e$ from $t^0\in \mathbb{R}$ if there exists $s>0$ and a continuous, strictly increasing function $\beta:\overline{\mathbb{R}}_+\rightarrow \mathbb{R}$ with $\beta(0)=0$ such that
\begin{equation*}
    \mathcal{M}(t,x)\leq \beta(d(x,x_e)),\quad\forall t\in I_{t^0},\quad \forall x\in B_{x_e,s}.
\end{equation*}
\end{definition}

If $t^0$ is avoided when describing lpdf functions, we will assume that $t^0=0$.

Let $\dot{\mathcal{M}}:\mathbb{R}\times P\rightarrow  \mathbb{R}$ be defined by 
\begin{equation}\label{Eq:Mdot}
\dot{\mathcal{M}}( t, x):=\frac{\partial \mathcal{M}}{\partial t}( t, x)+\sum_{i=1}^{\dim P}\frac{\partial \mathcal{M}}{\partial x^i}( t, x) X^i( t, x),\qquad \forall (t,x)\in \mathbb{R}\times P,
\end{equation}
where $\{x^1,\ldots,x^{\dim P}\}$ is a local coordinate system defined on a neighbourhood of the point $ x\in P$ and $X=\sum_{i=1}^{\dim P}X^i\frac{\partial}{\partial x^i}$. 

Let us present the {\it basic Lyapunov theorem on manifolds} \cite{LZ21, Za21}, which is a generalisation of the Lyapunov theorem for vector spaces. This theorem allows for the analysis of types of equilibrium points using associated functions. Note that the types of equilibrium points do not depend on the Riemannian metric defined on $M$ as every Riemannian metric generates the same topology.

\begin{theorem}
\label{Th:BasicTheoremOfLyapunov} {\bf (Lyapunov Theorem on manifolds \cite{LZ21,Kh87,Vi02,Za21})}
Let $\mathcal{M}:\mathbb{R}\times P\rightarrow  \mathbb{R}$ be a non-negative function, let $x_e\in P$ be an equilibrium point of (\ref{Eq::NonAutDyn}), and let $\dot{\mathcal{M}}$ stand for the function (\ref{Eq:Mdot}). Then, 
\begin{enumerate}
    \item If $\mathcal{M}$ is   lpdf from $t^0$ and $\dot{\mathcal{M}}(t,x)\leq 0$ for $x$ locally around $x_e$ and for all $t\in I_{t^0}$, then $x_e$ is stable.
    
    \item If $\mathcal{M}$ is  lpdf and decrescent from $t^0$ and   $\dot{\mathcal{M}}(t,x)\leq 0$ locally around $x_e$ and for all $t\in I_{t^0}$, then $x_e$ is uniformly stable.
    
    \item If $\mathcal{M}$ is  lpdf and decrescent from $t^0$, and $-\dot{\mathcal{M}}(t,x)$ is lpdf around $x_e$ and for all $t\in I_{t^0}$, then $x_e$ is uniformly asymptotically stable.
\end{enumerate}
\end{theorem}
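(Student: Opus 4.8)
The plan is to reduce the statement to the classical Lyapunov theorem on an open subset of $\mathbb{R}^{\dim P}$, exploiting the fact that stability is a purely local and metric-independent notion. First I would fix a coordinate chart $(U,\{x^1,\ldots,x^{\dim P}\})$ centred at $x_e$, in which the $t$-dependent vector field $X$ and the function $\mathcal{M}$ become a $t$-dependent vector field and a function on $\mathbb{R}\times U$; the expression \eqref{Eq:Mdot} then coincides with the ordinary orbital derivative $\tfrac{d}{dt}\mathcal{M}(t,x(t))$ along solutions of \eqref{Eq::NonAutDyn}, so the hypothesis $\dot{\mathcal{M}}\le 0$ says precisely that $t\mapsto\mathcal{M}(t,x(t))$ is non-increasing. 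Since the metric tensor is a continuous positive-definite field, on a relatively compact ball around $x_e$ the Riemannian distance $d(\cdot,x_e)$ and the Euclidean distance $|\cdot-x_e|$ of the chart are comparable, namely $c_1|x-x_e|\le d(x,x_e)\le c_2|x-x_e|$ for some $c_1,c_2>0$. Because the lpdf and decrescent conditions are phrased through comparison functions $\alpha,\beta$ that are continuous, strictly increasing, and vanish at $0$, this comparison lets me transfer those hypotheses freely between the two distances, so it suffices to argue with Euclidean balls inside the chart.

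For part $1$ I would run the standard trapping argument. Given $\varepsilon>0$ and $t_0\in I_{t^0}$, the lpdf hypothesis gives $\mathcal{M}(t,x)\ge\alpha(d(x,x_e))\ge\alpha(\varepsilon)>0$ whenever $d(x,x_e)=\varepsilon$ and $t\ge t^0$. Continuity of $\mathcal{M}(t_0,\cdot)$ together with $\mathcal{M}(t_0,x_e)=0$ yields a radius $\delta(t_0,\varepsilon)$ such that $\mathcal{M}(t_0,x)<\alpha(\varepsilon)$ on $B_{x_e,\delta(t_0,\varepsilon)}$. For any solution with $x(t_0)\in B_{x_e,\delta(t_0,\varepsilon)}$, monotonicity of $\mathcal{M}$ along the orbit gives $\mathcal{M}(t,x(t))\le\mathcal{M}(t_0,x(t_0))<\alpha(\varepsilon)$ for all $t\ge t_0$; were the trajectory ever to reach $d(x(t),x_e)=\varepsilon$, the previous inequality would force $\mathcal{M}(t,x(t))\ge\alpha(\varepsilon)$, a contradiction. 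Hence the solution never leaves $B_{x_e,\varepsilon}$, which in particular keeps it inside the chart and justifies the coordinate computation a posteriori.

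Parts $2$ and $3$ refine this. For uniform stability, the decrescent bound $\mathcal{M}(t,x)\le\beta(d(x,x_e))$ lets me choose $\delta$ from the condition $\beta(\delta)<\alpha(\varepsilon)$, i.e. $\delta<\beta^{-1}(\alpha(\varepsilon))$; since this choice does not involve $t_0$, neither does the resulting estimate, giving uniform stability. For uniform asymptotic stability, the extra hypothesis that $-\dot{\mathcal{M}}$ is lpdf provides $\dot{\mathcal{M}}(t,x)\le-\gamma(d(x,x_e))$ for a continuous, strictly increasing $\gamma$ with $\gamma(0)=0$; combining the sandwich $\alpha(d(x,x_e))\le\mathcal{M}(t,x)\le\beta(d(x,x_e))$ with this strict decrease gives a closed differential inequality $\dot{\mathcal{M}}\le-\gamma(\beta^{-1}(\mathcal{M}))$ along orbits, and a standard comparison-ODE argument forces $\mathcal{M}(t,x(t))\to 0$ and hence $d(x(t),x_e)\le\alpha^{-1}(\mathcal{M}(t,x(t)))\to 0$, uniformly in $t_0$ and in the initial point within a fixed ball.

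The main obstacle is not any single estimate but the careful bookkeeping of the reduction: verifying that the orbital derivative \eqref{Eq:Mdot} is chart-independent, that solutions do not escape $U$ before the trapping conclusion applies (which the argument of part $1$ resolves self-consistently), and that the metric comparison constants are absorbed correctly so that the transferred functions $\alpha,\beta,\gamma$ remain continuous, strictly increasing, and zero at the origin. Once these points are settled, each of the three conclusions is exactly the corresponding assertion of the classical Lyapunov theorem, which can be invoked directly in the chart.
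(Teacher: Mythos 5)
The paper does not prove this theorem at all: it is quoted as a known result with references to \cite{LZ21,Kh87,Vi02,Za21}, so there is no in-paper argument to compare against. Your proposal is the standard proof of the non-autonomous Lyapunov theorem (trapping via the level set $\alpha(\varepsilon)$, the decrescent bound to make $\delta$ independent of $t_0$, and the comparison inequality $\dot{\mathcal{M}}\le-\gamma(\beta^{-1}(\mathcal{M}))$ for asymptotic stability), and it is correct; the chart reduction and the metric-equivalence step are exactly how the cited sources handle the manifold setting. Two small points worth making explicit if you write this up: the confinement to a compact ball is also what guarantees that solutions of \eqref{Eq::NonAutDyn} exist for all $t\ge t_0$ (otherwise ``for $t>t_0$'' in the stability definition is vacuous on a maximal interval), and in part $3$ the composition $\gamma\circ\beta^{-1}$ must be checked to be continuous, positive away from zero, and defined on the range of $\mathcal{M}$ restricted to the chosen ball before the comparison lemma is invoked.
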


\section{Characterisation of  relative equilibrium points}
\label{Sec::CharStrongEqP}

It is hereafter assumed that $\mu\in \mathfrak{g}^*$ is a weakly regular value of ${\bf J}^\Phi$. Additionally, it is assumed that the isotropy subgroup $G_\mu^\Delta$ of the element $\mu\in\mathfrak{g} ^*$ relative to the affine action, defined in Proposition \ref{Prop::GenEqJ}, acts via $\Phi$ on ${\bf J}^{\Phi-1}(\mu)$ in a {\it quotientable manner}, namely ${\bf J}^{\Phi-1}(\mu)/G_\mu^\Delta$ is a manifold and the projection $\pi:{\bf J}^{\Phi-1}(\mu)\rightarrow  {\bf J}^{\Phi-1}(\mu)/G_\mu^\Delta$ is a submersion. To guarantee that ${\bf J}^{\Phi-1}(\mu)/G^\Delta_\mu$ is a manifold, one may assume that $G^\Delta_\mu$ acts freely and properly on ${\bf J}^{\Phi-1}(\mu)$ \cite{AM78}. However, these assumptions can be weakened (cf. \cite{AM78}). For the reasons already mentioned in Section \ref{SubSec::CosStructures}, cosymplectic manifolds to be studied hereafter are of form $(T\times P,\omega_P ,\eta_T)$. To simplify the notation, the sub-indexes of the differential forms $\omega_P$ and $\eta_T$ will be hereafter skipped.

Let us extend Poincar\'{e}'s  {\it relative equilibrium point} notion (see \cite[p. 306]{AM78}) for a $t$-independent Hamiltonian function to the realm of cosymplectic Hamiltonian systems. There are several manners of giving such a  generalisation. In fact, we will introduce a new second one in Section \ref{Sec::RDTP}.

\begin{definition}
\label{Def::CosStrongRelEqPointn}
A point $z_e\in P$ is a {\it relative equilibrium point} of  $((T\times P)^\omega_\eta,h,{\bf J}^\Phi)$ if there exists a curve $\xi(t)\in\mathfrak{g} $ so that
\begin{equation}
(X_h)_{(t,z_e)}=(\xi(t)_M)_{(t,z_e)},\qquad \forall t\in T.
\end{equation}
\end{definition}

If $T=\mathbb{R}$, Definition \ref{Def::CosStrongRelEqPointn} can be reformulated using the integral curve of an evolution vector field. In fact, a point $z_e\in P$ is a {\it relative equilibrium point} of $((\mathbb{R}\times P)_\eta^\omega,h,\mathbf{J}^\Phi)$ if, for each $t_0\in \mathbb{R}$, there exists some curve $\xi_{t_0}(s)$ in $\mathfrak{g} $ such that 
\begin{equation} \label{Eq::StrRelEqForR}
s\in \mathbb{R}\mapsto \Phi(\exp(\xi_{t_0}(s)),(t_0+s,z_e))\in \mathbb{R}\times P,
\end{equation}
is the integral curve of $E_h$ with initial condition $(t_0,z_e)$. Equivalently, $\Phi(\exp(\xi_{t_0}(t),z_e))$ is a solution of the Hamiltonian equations for $h$ with initial condition $z_e$ at $t=t_0$. In short, this tells us that the evolution for the Hamilton equations of $h$ is given by the symmetries of the problem encoded in $\Phi$. A similar result could be proved for a general $T$, but an analogue local version of \eqref{Eq::StrRelEqForR} should be written as $T$ does not need to admit a global coordinate, e.g. this happens when $T=\mathbb{S}^1$.

Hereafter, we assume $t_0=0$ unless otherwise stated, and to simplify the notation, we will write $\xi_{t_0=0}(t)$ as $\xi(t)$.

The points in $z_e\in P$ that give rise to equilibrium points $(t,z_e)$ of the Hamilton equations of $((T\times P)^\om_\eta,h,{\bf J}^\Phi)$ for every $t\in T$, are particular cases of relative equilibrium points. In fact, in such a case, $(X_h)_{(t,z_e)}=0$ for every $t\in T$, and the point $(t,z_e)$ is invariant relative to the dynamic induced by $X_h$. Therefore, $\xi(t)\in \mathfrak{g}$ can be chosen in \eqref{Def::CosStrongRelEqPointn} to be equal to zero.

\begin{proposition}
\label{Prop:SingPPCos}
Every integral curve, $m(t)=(t,z(t))$ of $E_h$ with respect to $((T\times P)^\omega_\eta,h,{\bf J}^\Phi)$ such that $z(t_0)=z_e$  for a relative equilibrium point $z_e\in P$ with $\mu_e={\bf J}^\Phi(t_0,z_e)$ and $t_0\in T$, projects onto the single point $(\pi_{P^\Delta_{\mu_e}}\circ \pi_{\mu_e})(t_0,z_e)$, i.e. $(\pi_{P^\Delta_{\mu_e}}\circ \pi_{\mu_e})(m(t))=(\pi_{P^\Delta_{\mu_e}}\circ \pi_{\mu_e})(t_0,z_e)$ for every $t\in T$.
\end{proposition}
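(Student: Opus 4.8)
The plan is to show that the projected curve $\pi_{\mu_e}(m(t))$ moves only along the time factor of $M^\Delta_{\mu_e}\simeq T\times P^\Delta_{\mu_e}$, so that its $P^\Delta_{\mu_e}$-component stays fixed. First I would check that the projections make sense: since $\mu_e={\bf J}^\Phi(t_0,z_e)$ and, by Proposition \ref{Prop:InvCos}, the flow of $E_h$ leaves ${\bf J}^\Phi$ invariant, one has ${\bf J}^\Phi(m(t))={\bf J}^\Phi(m(t_0))=\mu_e$ for all $t$, so $m(t)\in {\bf J}^{\Phi-1}(\mu_e)$ and $(\pi_{P^\Delta_{\mu_e}}\circ\pi_{\mu_e})(m(t))$ is well defined. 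Writing $E_h=R+X_h$, the claim reduces to proving that $(X_h)_{m(t)}\in \ker T_{m(t)}\pi_{\mu_e}=T_{m(t)}(G^\Delta_{\mu_e}\cdot m(t))$ for every $t$, i.e. that the only part of $(E_h)_{m(t)}$ surviving the projection is the Reeb part $R_{m(t)}$.

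The core step is to establish this tangency. Using the reformulation \eqref{Eq::StrRelEqForR} of Definition \ref{Def::CosStrongRelEqPointn} (for $T=\mathbb{R}$; locally for a general $T$), the integral curve takes the form $m(t)=\Phi(\exp(\xi_{t_0}(t-t_0)),(t,z_e))$, so $m(t)$ lies in the $G$-orbit of $(t,z_e)$. Since $h$ is $G$-invariant and each $\Phi_g$ is a cosymplectomorphism (whence $\Phi_{g*}R=R$), the invariance of all the defining structures yields $\Phi_{g*}X_h=X_h$; combining this with the relative equilibrium identity $(X_h)_{(t,z_e)}=(\xi(t)_M)_{(t,z_e)}$ and the rule $\Phi_{g*}\xi_M=({\rm Ad}_g\xi)_M$ gives that $(X_h)_{m(t)}=(({\rm Ad}_{g}\xi(t))_M)_{m(t)}$ with $g=\exp(\xi_{t_0}(t-t_0))$, a fundamental vector field at $m(t)$, hence tangent to $G\cdot m(t)$. (Equivalently, differentiating the reformulation directly exhibits $(X_h)_{m(t)}$ as a fundamental vector field without invoking $G$-invariance separately.) On the other hand, Proposition \ref{Prop:InvCos} also gives that $X_h$ is tangent to ${\bf J}^{\Phi-1}(\mu_e)$, so $(X_h)_{m(t)}\in T_{m(t)}(G\cdot m(t))\cap T_{m(t)}({\bf J}^{\Phi-1}(\mu_e))$, which by part $1$ of Lemma \ref{Lemm::NonAdPerp} equals $T_{m(t)}(G^\Delta_{\mu_e}\cdot m(t))$. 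This is exactly the required verticality.

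Finally I would differentiate the projection: $\frac{d}{dt}\pi_{\mu_e}(m(t))=T\pi_{\mu_e}((E_h)_{m(t)})=T\pi_{\mu_e}(R_{m(t)})=(R_{\mu_e})_{\pi_{\mu_e}(m(t))}$, using the previous step to kill the $X_h$ part and the identity $\pi_{\mu_e*}\widetilde R=R_{\mu_e}$ from the proof of Theorem \ref{Th:CoSymRed}. By Corollary \ref{Cor::De} the reduced space splits as $M^\Delta_{\mu_e}\simeq T\times P^\Delta_{\mu_e}$ compatibly with the product structure ${\bf J}^{\Phi-1}(\mu_e)\simeq T\times\pi_P({\bf J}^{\Phi-1}(\mu_e))$, and under this identification $R_{\mu_e}=\partial/\partial t$. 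Hence $T\pi_{P^\Delta_{\mu_e}}(R_{\mu_e})=0$, so $\frac{d}{dt}(\pi_{P^\Delta_{\mu_e}}\circ\pi_{\mu_e})(m(t))=0$ and the $P^\Delta_{\mu_e}$-component is constant, equal to its value at $t_0$. I expect the \emph{main obstacle} to be the core tangency step: carefully relating the pointwise relative equilibrium identity, which holds at $(t,z_e)$, to the value of $X_h$ along the actual integral curve $m(t)=(t,z(t))$ (through the group-orbit description of $m$), and then invoking Lemma \ref{Lemm::NonAdPerp}(1) to upgrade $G$-orbit tangency to $G^\Delta_{\mu_e}$-orbit tangency; treating a general, non-$\mathbb{R}$, one-dimensional $T$ additionally requires running this argument in local time coordinates.
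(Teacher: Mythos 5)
Your proposal is correct, and it reaches the conclusion by a slightly different route than the paper. The paper descends to the quotient first: it shows, using the $\Delta$-equivariance of ${\bf J}^\Phi$, that the curve $\xi(t)$ lies in $\mathfrak{g}^\Delta_{\mu_e}$, deduces that the projected vector field $Y_{\mu_e}=\pi_{\mu_e*}X_h$ vanishes at every point $(t,[z_e])$, and then concludes by uniqueness of integral curves of $R_{\mu_e}+Y_{\mu_e}$ that $\pi_{\mu_e}(m(t))=(t,[z_e])$. You instead stay upstairs on ${\bf J}^{\Phi-1}(\mu_e)$ and prove verticality of $X_h$ along the \emph{actual} trajectory: you transport the pointwise identity $(X_h)_{(t,z_e)}=(\xi(t)_M)_{(t,z_e)}$ to $m(t)=\Phi_g(t,z_e)$ via $\Phi_{g*}X_h=X_h$ and $\Phi_{g*}\xi_M=({\rm Ad}_g\xi)_M$, then upgrade $G$-orbit tangency to $G^\Delta_{\mu_e}$-orbit tangency through Lemma \ref{Lemm::NonAdPerp}(1) (which encodes the same computation the paper performs with the fundamental vector field of $\Delta$ on $\mathfrak{g}^*$), and finally differentiate the projection. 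Your version is more explicit about the point the paper glosses over — namely why the relative-equilibrium identity, stated only at $(t,z_e)$, controls $X_h$ at the points $(t,z(t))$ of the integral curve — at the cost of invoking the orbit description \eqref{Eq::StrRelEqForR} and the $G$-invariance of $X_h$; the paper's uniqueness argument on the reduced space avoids that transport entirely. Both proofs rest on the same pillars (Proposition \ref{Prop:InvCos}, the projectability from Proposition \ref{Pr::RedHam}/Theorem \ref{Th:CoSymRed}, and the splitting of Corollary \ref{Cor::De}), so the difference is one of bookkeeping rather than substance.
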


\begin{proof}
Proposition \ref{Prop:InvCos} yields that every integral curve $m(t)$ to $E_h$ is fully contained within the submanifold ${\bf J}^{\Phi-1}(\mu_e)$. Proposition \ref{Pr::RedHam}  shows that $m(t)$ projects, via $\pi_{\mu_e}$, onto a curve in $M^\Delta_{\mu_e}:={\bf J}^{\Phi-1}(\mu_e)/G^\Delta_{\mu_e}\simeq T\times P_{\mu_e}^\Delta$, where $G^\Delta_{\mu_e}$ is the isotropy subgroup of $\mu_e\in\mathfrak{g} ^*$ relative to the affine action $\Delta$. Since $z_e\in P$ is a  relative equilibrium point and ${\bf J}^\Phi$ is $\Delta$-equivariant, it turns out that
\[
0=T_{(t,z_e)}{\bf J}^{\Phi}(E_{h})_{(t,z_e)}=T_{(t,z_e)}{\bf J}^{\Phi}(R+\xi(t)_{M})_{(t,z_e)}=(\xi(t)^\Delta_{\mathfrak{g}^*})_{\mu_e},\qquad \forall t\in T,
\]
for some curve $\xi(t)$ in $\mathfrak{g}$. Hence, the curve $\xi(t)$ is contained in $\mathfrak{g}^\Delta_{\mu_e}$.

Note that $\pi_{\mu_e}(m(t))$ is the integral curve to the vector field on $T\times P_{\mu_e}^\Delta$ given by  $R_{\mu_e}+Y_{\mu_e}:=\pi_{\mu_e*}(E_{h})$, where $R_{\mu_e}$ is the Reeb vector field of $(M^\Delta_{\mu_e},\omega_{\mu_e},\eta_{\mu_e})$. Since $(X_{h})_{(t,z_e)}=(\xi(t)_M)_{(t,z_e)}$, for a certain curve $\xi(t)$ in $\mathfrak{g}^\Delta_{\mu_e}$ and $\pi_{\mu_e*}R|_{\mathbf{J}^{\Phi-1}(\mu_e)}=R_{\mu_e}$, then $(Y_{\mu_e})_{\pi_{\mu_e}(m(t))}=(T_{(t,z_e)}\pi_{\mu_e})(\xi(t)_M)_{(t,z_e)}=0$.

As a consequence, the $ \pi_{\mu_e}(m(t))$ are equilibrium points of $Y_{\mu_e}$ and the integral curve of the vector field $Y_{\mu_e}$ passing through $(\pi_{P^\Delta_{\mu_e}}\circ\pi_{\mu_e})(t_0,z_e)$ is just that point. Hence, $(\pi_{P^\Delta_{\mu_e}}\circ\pi_{\mu_e})(m(t))=(\pi_{P^\Delta_{\mu_e}}\circ\pi_{\mu_e})(t_0,z_e)$ for every $t\in T$. Then, the projection of every solution passing through $z_e$ is just the equilibrium point $(\pi_{P^\Delta_{\mu_e}}\circ\pi_{\mu_e})(t_0,z_e)$ of the  Hamilton vector field  $Y_{\mu_e}$ on $P_{\mu_e}^\Delta$. In other words, it is an equilibrium point of the Hamilton equations induced in $M^\Delta_{\mu_e}$ .
\end{proof}

It follows from Proposition \ref{Prop:SingPPCos}  that $z_e\in \pi_P({\bf J}^{\Phi-1}(\mu_e))$ is a  relative equilibrium point of $((T\times P)^\omega_\eta,h,{\bf J}^\Phi)$ if and only if every solution to the Hamilton equations to $h$ passing through $z_e$ projects onto a point in $P^\Delta_{\mu_e}$.

Let us characterise relative equilibrium points of $((T\times P)^\omega_\eta,h,\mathbf{J}^\Phi)$ via Lagrange multipliers \cite[p. 307]{AM78} as critical points of $h$ restricted to ${\bf J}^{\Phi-1}(\mu_e)$.

\begin{theorem}
A point $z_e\in P$ is a  relative equilibrium point of $((T\times P)^\omega_\eta,h,{\bf J}^\Phi)$ if and only if there exists a curve $\xi(t)$ in $\mathfrak{g}$ such that, for every $t\in T$, the point $z_e$ is a critical point of the restriction to $\{t\}\times P$ of the function $h_{\xi(t)}:T\times P\rightarrow  \mathbb{R}$ of the form
\[
h_{\xi(t)}(t',z):=h(t',z)-\langle {\bf J}^\Phi(t',z)-{\bf J}^\Phi(t',z_e),\xi(t)\rangle.
\]
\end{theorem}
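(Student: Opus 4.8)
The plan is to fix $t\in T$ and reduce, for that $t$, both the relative equilibrium condition and the critical point condition to the vanishing of one and the same covector at $(t,z_e)$; quantifying over $t$ then yields the stated equivalence, since both conditions are formulated with the same curve $\xi(t)$. First I would translate the defining identity $(X_h)_{(t,z_e)}=(\xi(t)_M)_{(t,z_e)}$ through the bundle isomorphism $\flat:TM\to T^*M$: because $\flat$ is an isomorphism, the identity holds if and only if $\flat(X_h-\xi(t)_M)$ vanishes at $(t,z_e)$. Using $\iota_{X_h}\omega=dh-(Rh)\eta$, $\iota_{X_h}\eta=0$ and the momentum map relations $\iota_{\xi(t)_M}\omega=dJ_{\xi(t)}$, $\iota_{\xi(t)_M}\eta=0$, a direct computation gives
\[
\flat(X_h-\xi(t)_M)=dh-(Rh)\eta-dJ_{\xi(t)},
\]
so a relative equilibrium point is characterised by $\big(dh-(Rh)\eta-dJ_{\xi(t)}\big)_{(t,z_e)}=0$.

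Next I would evaluate this covector on the splitting $T_{(t,z_e)}M=\langle R_{(t,z_e)}\rangle\oplus T_{z_e}(\{t\}\times P)$, valid since $\ker\omega=\langle R\rangle$ and $\ker\eta=T(\{t\}\times P)$ are complementary by the cosymplectic condition $TM=\ker\omega\oplus\ker\eta$. On the Reeb direction one finds $\big(dh-(Rh)\eta-dJ_{\xi(t)}\big)(R)=Rh-(Rh)\iota_R\eta-RJ_{\xi(t)}=Rh-Rh-0=0$, using $\iota_R\eta=1$ and the defining property $RJ_{\xi(t)}=0$ of a cosymplectic momentum map. Hence the Reeb component is automatic, and the covector vanishes at $(t,z_e)$ if and only if its pullback by the slice inclusion $j_t:\{t\}\times P\hookrightarrow M$, $z\mapsto(t,z)$, vanishes at $z_e$.

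Finally I would identify this pullback with the differential of the restricted function. Because $j_t^*\eta=0$ (the coordinate $t$ is constant along $\{t\}\times P$), one has $j_t^*\big(dh-(Rh)\eta-dJ_{\xi(t)}\big)=d\big[(h-J_{\xi(t)})\big|_{\{t\}\times P}\big]$. Since $h_{\xi(t)}$ differs from $h-J_{\xi(t)}$ only by the term $\langle\mathbf{J}^\Phi(t',z_e),\xi(t)\rangle$, which depends on $t'$ alone and is thus constant along $\{t\}\times P$, the restrictions $(h-J_{\xi(t)})|_{\{t\}\times P}$ and $h_{\xi(t)}|_{\{t\}\times P}$ have identical critical points. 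Therefore the covector vanishes at $(t,z_e)$ precisely when $z_e$ is a critical point of $h_{\xi(t)}|_{\{t\}\times P}$, and letting $t$ range over $T$ completes both implications.

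The computation itself is routine; the step that requires care, and where the cosymplectic hypotheses genuinely enter, is the automatic vanishing of the Reeb component. It hinges on the cancellation between the $(Rh)\eta$ correction built into $\iota_{X_h}\omega$ and the $R$-derivative of $h$, together with $RJ_\xi=0$; this is exactly what permits passage from a condition on the full covector on the odd-dimensional $M$ to the genuinely symplectic critical-point condition on the even-dimensional slice $\{t\}\times P$.
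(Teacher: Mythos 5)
Your proposal is correct and follows essentially the same route as the paper: both translate the condition $(X_h)_{(t,z_e)}=(\xi(t)_M)_{(t,z_e)}$ through the isomorphism $\flat$ into the vanishing of $dh_{\xi(t)}-(Rh_{\xi(t)})\eta$ at $(t,z_e)$ and then use the splitting $TM=\ker\omega\oplus\ker\eta$ to reduce this to the critical-point condition on the slice $\{t\}\times P$. Your explicit verification that the Reeb component vanishes automatically is a slightly more transparent packaging of the same cancellation the paper uses implicitly, and the observation that $h_{\xi(t)}$ and $h-J_{\xi(t)}$ differ by a function of $t'$ alone matches the paper's remark that $J_{\xi(t)}(t',z_e)$ does not depend on $t'$.
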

\begin{proof}
Let $z_e\in P$ be a  relative equilibrium point. Then, $(\xi(t)_M)_{(t,z_e)}=(X_h)_{(t,z_e)}$ for every $t\in T$ and a curve $\xi(t)$ in $\mathfrak{g}$. Due to the definition of the cosymplectic momentum map ${\bf J}^\Phi$, it turns out that $(\xi(t)_M\,)_{(t,z_e)}=(X_{{ J}_{\xi(t)}})\,_{(t,z_e)}$ and $(X_{h-J_{\xi(t)}})_{(t,z_e)}=0$ for every $t\in T$. Recalling that $J_{\xi(t)}(t',z_e)$ does not really depend on $t'$, one has 
\[
0=[\flat(X_{h-J_{\xi(t)}})]_{(t,z_e)}=(d h_{\xi(t)})_{(t,z_e)}-(Rh_{\xi(t)})_{(t,z_e)}\eta_{(t,z_e)},\qquad\forall t\in T.
\]
Therefore, $(dh_{\xi(t)})_{(t,z_e)}\upharpoonright_{\ker\eta_{(t,z_e)}}=0$ and $(t,z_e)$ is a critical point of $h_{\xi(t)}\upharpoonright_{ \{t\}\times P}$ for every $t\in T$.
 
 Conversely, let $(t,z_e)\in T\times P$ be a critical point of $h_{\xi(t)}\upharpoonright_{\{t\}\times P}$ for every $t\in T$. Then,
\[
(dh_{\xi(t)})_{(t,z_e)}\upharpoonright_{\ker\eta_{(t,z_e)}}=d(h-J_{\xi(t)})_{(t,z_e)}\upharpoonright_{\ker\eta_{(t,z_e)}}=(\iota_{X_{h-J_{\xi(t)}}}\omega)_{(t,z_e)}\upharpoonright_{\ker\eta_{(t,z_e)}}=0,\quad\forall t\in T.
\]
Since $X_{h-J_{\xi(t)}}(t,z_e)$ takes values in $\ker\eta$, one has $(X_{h-J_{\xi(t)}})_{(t,z_e)}=0$ for every $t\in T$. Therefore, $(X_h)_{(t,z_e)}=(X_{J_{\xi(t)}})_{(t,z_e)}=(\xi(t)_M)_{(t,z_e)}$ for every $t\in T$ and hence $z_e$ is a  relative equilibrium point.
\end{proof}

Note that the above theorem can be rewritten as follows.

\begin{corollary}\label{Cor::CosymStrongRelEqTh} A point $z_e\in P$ is a  relative equilibrium point of $((T\times P)^\omega_\eta,h,{\bf J}^\Phi)$ if and only if there exists a curve $\xi(t)$ in $\mathfrak{g}$ such that $(z_e,\xi(t))\in P\times \mathfrak{g}$, for every $t\in T$, are critical points of the functions $\widehat{h}_t:P\times \mathfrak{g}\rightarrow   \mathbb{R}$ of the form
\[
\widehat{h}_t(z,\nu):=h(t,z)-\< \B J^\Phi(t,z)-  \B J^\Phi(t,z_e),\nu\>.
\]
\end{corollary}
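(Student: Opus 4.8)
The plan is to deduce this corollary directly from the preceding theorem by exploiting the product structure of $P\times\mathfrak{g}$. Since $\mathfrak{g}$ is a vector space, for each fixed $t\in T$ the tangent space splits as $T_{(z_e,\xi(t))}(P\times\mathfrak{g})\simeq T_{z_e}P\oplus\mathfrak{g}$, so that $(z_e,\xi(t))$ is a critical point of $\widehat{h}_t$ if and only if both partial differentials---one with respect to $z\in P$ and one with respect to $\nu\in\mathfrak{g}$---vanish there. The whole argument then amounts to evaluating these two partial differentials separately.

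First I would compute the partial differential in the $\mathfrak{g}$-direction. Because $\widehat{h}_t$ depends on $\nu$ only through the linear pairing $-\<\mathbf{J}^\Phi(t,z)-\mathbf{J}^\Phi(t,z_e),\nu\>$, its derivative with respect to $\nu$ is the element $-(\mathbf{J}^\Phi(t,z)-\mathbf{J}^\Phi(t,z_e))\in\mathfrak{g}^*$, which is independent of $\nu$. Evaluated at $z=z_e$ this equals $-(\mathbf{J}^\Phi(t,z_e)-\mathbf{J}^\Phi(t,z_e))=0$. Hence the $\mathfrak{g}$-partial differential vanishes automatically at every point of the form $(z_e,\nu)$, for any $\nu\in\mathfrak{g}$ and any $t\in T$; in particular this direction imposes no constraint whatsoever on the chosen curve $\xi(t)$.

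Next I would identify the partial differential in the $P$-direction. Fixing $\nu=\xi(t)$, the function $z\mapsto\widehat{h}_t(z,\xi(t))$ equals $h(t,z)-\<\mathbf{J}^\Phi(t,z)-\mathbf{J}^\Phi(t,z_e),\xi(t)\>$, which is exactly the restriction to $\{t\}\times P$ of the function $h_{\xi(t)}$ appearing in the previous theorem. Therefore the vanishing of the $P$-partial differential of $\widehat{h}_t$ at $(z_e,\xi(t))$ is precisely the statement that $z_e$ is a critical point of $h_{\xi(t)}\upharpoonright_{\{t\}\times P}$.

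Combining the two directions, $(z_e,\xi(t))$ is a critical point of $\widehat{h}_t$ for every $t\in T$ if and only if $z_e$ is a critical point of $h_{\xi(t)}\upharpoonright_{\{t\}\times P}$ for every $t\in T$, and the preceding theorem identifies the latter condition with $z_e$ being a relative equilibrium point of $((T\times P)^\omega_\eta,h,\mathbf{J}^\Phi)$. I do not expect any genuine obstacle here: the only step requiring minor care is the bookkeeping of the splitting of the differential, together with the observation that the $\mathfrak{g}$-direction is vacuous because the two momentum-map values cancel at $z_e$; the rest is a direct transcription of the previous theorem.
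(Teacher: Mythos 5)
Your argument is correct and is precisely the fleshing-out of what the paper leaves implicit: the paper states the corollary as a mere rewriting of the preceding theorem, and your splitting of the differential on $P\times\mathfrak{g}$ — with the $\mathfrak{g}$-direction vanishing automatically at $z=z_e$ and the $P$-direction reproducing the criticality of $h_{\xi(t)}\upharpoonright_{\{t\}\times P}$ — is exactly the intended justification. No gaps.
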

Note that $\xi(t)$ plays the role of a $t$-dependent Lagrange multiplier in Corollary \ref{Cor::CosymStrongRelEqTh}.

Let $z_e$ be a  relative equilibrium point of $((T\times P)^\omega_\eta,h,{\bf J}^\Phi)$. Let us define the second variation of $h_{\xi(t_e)}$ at $(t_e,z_e)$, for any $t_e\in T$, as the mapping $(\delta^2h_{\xi(t_e)})_{(t_e,z_e)}:\ker\eta_{(t_e,z_e)}\times \ker\eta_{(t_e,z_e)}\rightarrow  \mathbb{R}$ of the form
\begin{equation}
\label{Eq::SecVarh}
(\delta^2h_{\xi(t)})_{(t_e,z_e)}(v_1,v_2):=\iota_Y(d(\iota_Xdh_{\xi(t_e)}))_{(t_e,z_e)},
\end{equation}
for some vector fields $X,Y$ on $M$ defined on a neighbourhood of $(t_e,z_e)$ taking values in $\ker\eta$ and such that $v_1=X_{(t_e,z_e)}$, $v_2=Y_{(t_e,z_e)}$. Note that, for each pair $v_1,v_2$, it is always possible to find some $X, Y$ satisfying given conditions. In Darboux coordinates $\{t,x_1,\ldots,x_{2n}\}$  on an open neighbourhood $U$ of $(t_e,z_e)$, one has that $X=\sum^{2n}_{i=1}f_i\frac{\partial}{\partial x_i}$ and $Y=\sum^{2n}_{i=1}g_i\frac{\partial}{\partial x_i}$, where $\iota_{\frac{\partial}{\partial x_i}}\eta=0$ for $i=1,\ldots, 2n$.  It is worth noting that the functions $f_1,\ldots,f_{2n}, g_1,\ldots,g_{2n}\in C^\infty(U)$ may depend on $t$.

\begin{proposition}
Let $z_e\in P$ be a  relative equilibrium point of $((T\times P)^\omega_\eta,h,{\bf J}^\Phi)$. If $\{t,x_1,\ldots,x_{2n}\}$ are Darboux coordinates on a neighbourhood of $(t_e,z_e)\in T\times P$, for a $t_e\in T$, then
\begin{equation}
\label{Eq::SecVarFormh}
(\delta^2h_{\xi(t_e)})_{(t_e,z_e)}(w,v)=\sum^{2n}_{i,j=1}\frac{\partial^2h_{\xi(t_e)}}{\partial x_i\partial x_j}(t_e,z_e)w_iv_j,\quad \forall v,w\in\ker\eta_{(t_e,z_e)},\quad
\end{equation}
where $w=\sum_{i=1}^{2n}w_i\partial/\partial x_i$ and $v=\sum_{i=1}^{2n}v_i\partial/\partial x_i$.
\end{proposition}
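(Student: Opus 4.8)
The plan is to carry out the computation directly in the Darboux coordinates $\{t,x_1,\ldots,x_{2n}\}$ on the neighbourhood $U$ of $(t_e,z_e)$ and to exploit the fact that, at a relative equilibrium point, $z_e$ is a critical point of $h_{\xi(t_e)}\upharpoonright_{\{t_e\}\times P}$. First I would record that, since $\eta=dt$ in Darboux coordinates, any vector field taking values in $\ker\eta$ has vanishing $\partial/\partial t$-component; hence the extensions $X,Y$ of $w,v$ appearing in \eqref{Eq::SecVarh} read $X=\sum_{i=1}^{2n}f_i\frac{\partial}{\partial x_i}$ and $Y=\sum_{j=1}^{2n}g_j\frac{\partial}{\partial x_j}$, with $f_i(t_e,z_e)=w_i$ and $g_j(t_e,z_e)=v_j$.

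Next I would compute the inner function. Because $X$ annihilates the $dt$-component of $dh_{\xi(t_e)}$, one has $\iota_X dh_{\xi(t_e)}=\sum_{i=1}^{2n}f_i\frac{\partial h_{\xi(t_e)}}{\partial x_i}$. Taking its differential and contracting with $Y$, the product rule yields
\[
\iota_Y\,d\big(\iota_X dh_{\xi(t_e)}\big)=\sum_{i,j=1}^{2n}g_j\frac{\partial f_i}{\partial x_j}\frac{\partial h_{\xi(t_e)}}{\partial x_i}+\sum_{i,j=1}^{2n}g_j f_i\frac{\partial^2 h_{\xi(t_e)}}{\partial x_j\partial x_i}.
\]

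Then comes the crucial step: I would evaluate at $(t_e,z_e)$ and invoke the relative equilibrium condition. By the preceding theorem characterising relative equilibrium points as critical points, $z_e$ is a critical point of $h_{\xi(t_e)}\upharpoonright_{\{t_e\}\times P}$, which in coordinates reads $\frac{\partial h_{\xi(t_e)}}{\partial x_i}(t_e,z_e)=0$ for all $i=1,\ldots,2n$. This annihilates the first double sum entirely, and after substituting $f_i(t_e,z_e)=w_i$, $g_j(t_e,z_e)=v_j$ there remains exactly
\[
(\delta^2h_{\xi(t_e)})_{(t_e,z_e)}(w,v)=\sum_{i,j=1}^{2n}\frac{\partial^2 h_{\xi(t_e)}}{\partial x_i\partial x_j}(t_e,z_e)\,w_i v_j,
\]
which is the asserted formula \eqref{Eq::SecVarFormh}.

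The main obstacle is conceptual rather than computational: one must recognise that the spurious terms $\frac{\partial f_i}{\partial x_j}\frac{\partial h_{\xi(t_e)}}{\partial x_i}$, which in general depend on the chosen extensions $X,Y$, vanish precisely because of the critical-point property at a relative equilibrium point. This observation does double duty, since it simultaneously establishes the coordinate formula and shows that the right-hand side of \eqref{Eq::SecVarh} depends only on the point values $w=X_{(t_e,z_e)}$ and $v=Y_{(t_e,z_e)}$; thus $(\delta^2h_{\xi(t_e)})_{(t_e,z_e)}$ is a well-defined bilinear form on $\ker\eta_{(t_e,z_e)}$, and the symmetry of the mixed partial derivatives of $h_{\xi(t_e)}$ renders it symmetric.
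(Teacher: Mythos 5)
Your proof is correct and follows essentially the same route as the paper: expand $\iota_Y\bigl(d\iota_X dh_{\xi(t_e)}\bigr)$ in Darboux coordinates, obtain the Hessian term plus a term involving the first derivatives $\partial h_{\xi(t_e)}/\partial x_i$ and the derivatives of the extension coefficients, and kill the latter using the critical-point characterisation of relative equilibrium points. Your closing remark that this also shows independence of the chosen extensions $X,Y$ is a useful observation that the paper leaves implicit.
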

\begin{proof}
From \eqref{Eq::SecVarh} and the fact that the vector fields $X,Y$ associated with the tangent vectors $w,v$ take values in $\ker \eta$, we have
\begin{multline*}
(\delta^2h_{\xi(t_e)})_{(t_e,z_e)}(w,v)=\iota_Y(d\iota_Xdh_{\xi(t_e)})_{(t_e,z_e)}=\sum^{2n}_{i,j=1}\frac{\partial^2h_{\xi(t_e)}}{\partial x_i\partial x_j}(t_e,z_e)w_iv_j\\+\sum^{2n}_{i,j=1}\frac{\partial h_{\xi(t_e)}}{\partial x_i}(t_e,z_e)\frac{\partial X_i}{\partial x_j}(t_e,z_e)v_j=\sum^{2n}_{i,j=1}\frac{\partial^2h_{\xi(t_e)}}{\partial x_i\partial x_j}(t_e,z_e)w_iv_j,
\end{multline*}
where $X=\sum_{i=1}^{2n}X^i\partial/\partial x^i$, $X(t_e,z_e)=w$, and we have used that $z_e$ is a  relative equilibrium point.
\end{proof}
It follows from \eqref{Eq::SecVarFormh} that the mappings $(\delta^2h_{\xi(t_e)})_{(t,z_e)}$, for each $t\in T$, are symmetric. Let us study \eqref{Eq::SecVarh} in more detail.

\begin{proposition}\label{Prop::CosGaugeDir}
Let $z_{e}\in P$ be a  relative equilibrium point for $((T\times P)^\om_\eta,h,{\bf J}^\Phi)$. Then, for every $t\in T$, one has
\begin{equation}
\label{Eq:Cos2var}
(\delta^{2}h_{\xi(t)})_{(t,z_e)}((\zeta_{M})_{(t,z_e)}, v_{(t,z_e)})=0, \quad\forall\zeta\in \mathfrak{g},\quad\forall v_{(t,z_e)}\in T_{(t,z_e)}({\bf J}^{\Phi-1}(\mu_e))\cap \ker\eta_{(t,z_e)}.
\end{equation}
\end{proposition}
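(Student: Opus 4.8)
The plan is to recognise the second variation \eqref{Eq::SecVarh} as the Hessian of the restricted function $h_{\xi(t)}\!\upharpoonright_{\{t\}\times P}$ at the point $(t,z_e)$ and to exploit that $(t,z_e)$ is a critical point of it. Since $z_e$ is a relative equilibrium point, the characterisation of relative equilibrium points (the theorem preceding Corollary \ref{Cor::CosymStrongRelEqTh}) gives $(dh_{\xi(t)})_{(t,z_e)}\!\upharpoonright_{\ker\eta_{(t,z_e)}}=0$ for every $t\in T$. At a critical point the bilinear form \eqref{Eq::SecVarh} does not depend on the chosen extensions $X,Y$ of its arguments, as the coordinate formula \eqref{Eq::SecVarFormh} shows. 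Hence I am free to evaluate $(\delta^2 h_{\xi(t)})_{(t,z_e)}((\zeta_M)_{(t,z_e)},v_{(t,z_e)})$ using the convenient global extension $X:=\zeta_M$ of the first argument; this is admissible because $\iota_{\zeta_M}\eta=0$ by Definition \ref{Def:MomentumMap}, so $\zeta_M$ indeed takes values in $\ker\eta$.

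With this choice, \eqref{Eq::SecVarh} becomes $\iota_Y\,d(\iota_{\zeta_M}dh_{\xi(t)})=\iota_Y\,d(\zeta_M h_{\xi(t)})$, so the whole computation reduces to the differential of the single function $g:=\zeta_M h_{\xi(t)}\in C^\infty(M)$. First I would simplify $g$. The term $\langle \mathbf{J}^\Phi(t',z_e),\xi(t)\rangle$ appearing in $h_{\xi(t)}$ is constant on $M$, being independent of $z$ and, by $RJ_\xi=0$, of $t'$; moreover $h$ is $G$-invariant, so $\zeta_M h=0$. Using $\zeta_M=X_{J_\zeta}$ (which follows from Definition \ref{Def:MomentumMap} and $RJ_\zeta=0$) together with $RJ_{\xi(t)}=0$ gives $\zeta_M J_{\xi(t)}=dJ_{\xi(t)}(X_{J_\zeta})=\omega(X_{J_{\xi(t)}},X_{J_\zeta})=\{J_{\xi(t)},J_\zeta\}_{\omega,\eta}$ by \eqref{Eq::PoissonStructure}. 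Consequently $g=-\{J_{\xi(t)},J_\zeta\}_{\omega,\eta}$, and Theorem \ref{Th::SigmabracketCo}.2 rewrites this as $g=-J_{[\xi(t),\zeta]}-\Sigma(\zeta,\xi(t))$.

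Since $\Sigma(\zeta,\xi(t))$ is a constant, differentiating yields $dg=-dJ_{[\xi(t),\zeta]}$ on all of $M$, and it remains to contract with the second argument. Because $\mu_e$ is a weakly regular value, $T_{(t,z_e)}(\mathbf{J}^{\Phi-1}(\mu_e))=\ker T_{(t,z_e)}\mathbf{J}^\Phi$, so any $v\in T_{(t,z_e)}(\mathbf{J}^{\Phi-1}(\mu_e))$ satisfies $dJ_\tau(v)=\langle T_{(t,z_e)}\mathbf{J}^\Phi(v),\tau\rangle=0$ for all $\tau\in\mathfrak{g}$, in particular for $\tau=[\xi(t),\zeta]$. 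Therefore $(\delta^2 h_{\xi(t)})_{(t,z_e)}((\zeta_M)_{(t,z_e)},v_{(t,z_e)})=dg_{(t,z_e)}(v)=-dJ_{[\xi(t),\zeta]}(v)=0$, which proves \eqref{Eq:Cos2var}. The main subtlety, rather than a true obstacle, is twofold: first, one must invoke the critical-point property to license the replacement of an arbitrary extension by $\zeta_M$; second, one must correctly account for the non-equivariance term $\Sigma(\zeta,\xi(t))$, which is precisely the contribution absent in the $\mathrm{Ad}^*$-equivariant setting but which, being constant, drops out of $dg$ and so leaves the argument valid for general, non-$\mathrm{Ad}^*$-equivariant cosymplectic momentum maps.
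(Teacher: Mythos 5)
Your proof is correct, and its skeleton coincides with the paper's: both arguments reduce the claim to showing that the function $\iota_{\zeta_M}dh_{\xi(t)}$ is, up to an additive constant, of the form $\langle \mathbf{J}^\Phi,\tau\rangle$ for some fixed $\tau\in\mathfrak{g}$, so that its differential annihilates $\ker T_{(t,z_e)}\mathbf{J}^\Phi=T_{(t,z_e)}(\mathbf{J}^{\Phi-1}(\mu_e))$. Where you differ is in how that identity is obtained. The paper differentiates the group-level relation $h_{\xi(t)}\circ\Phi_g=h-\langle \mathbf{J}^\Phi,\Delta_g^T\xi(t)\rangle+\mathrm{const}$, which packages the $G$-invariance of $h$ and the $\Delta$-equivariance of $\mathbf{J}^\Phi$ (Proposition \ref{Prop::GenEqJ}) into a single step, the coefficient $\tau$ emerging as the fundamental vector field of the transposed affine action at $\xi(t)$. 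You instead work infinitesimally: $\zeta_M h=0$ by invariance, $\zeta_M J_{\xi(t)}=\{J_{\xi(t)},J_\zeta\}_{\omega,\eta}$ by the momentum-map identities (legitimately using $RJ_{\xi(t)}=0$), and Theorem \ref{Th::SigmabracketCo}.2 converts the bracket into $J_{[\xi(t),\zeta]}+\Sigma(\zeta,\xi(t))$, the constant cocycle term dying under $d$. The two computations are equivalent --- the $s$-derivative of $\langle\mathbf{J}^\Phi,\mathrm{Ad}_{\exp(-s\zeta)}\xi(t)\rangle+\langle\sigma(\exp(s\zeta)),\xi(t)\rangle$ is exactly your bracket plus $\Sigma(\zeta,\xi(t))$ --- but yours isolates more explicitly where non-$\mathrm{Ad}^*$-equivariance enters and why it is harmless, at the price of invoking Theorem \ref{Th::SigmabracketCo} as an extra ingredient. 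Your preliminary step, using the critical-point property of $h_{\xi(t)}\upharpoonright_{\{t\}\times P}$ at $z_e$ to justify replacing an arbitrary extension of the first argument by the global extension $\zeta_M$ (which lies in $\ker\eta$ by Definition \ref{Def:MomentumMap}), is exactly what formula \eqref{Eq::SecVarFormh} licenses, so no gap arises there either.
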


\begin{proof}The $G$-invariance of $h:T\times P\rightarrow  \mathbb{R}$ and the equivariance condition for ${\bf J}^\Phi$ relative to the affine Lie group action, $\Delta$, yields, for every $g\in G$ and all $(t',z)\in T\times P$, that, for $\mu_e={\bf J}^{\Phi}(t',z_e)$, one has
\begin{multline*}
h_{\xi(t)}(\Phi_{g}(t',z))=h(\Phi_{g}(t',z))-\left\langle\Delta_g{\bf J}^\Phi(t',z),\xi(t)\right\rangle+\left\langle \mu_{e}, \xi(t)\right\rangle\\=h(t',z)-\left\langle {\bf J}^\Phi(t',z),\Delta^T_g\xi(t)\right\rangle+\left\langle \mu_{e}, \xi(t)\right\rangle,
\end{multline*}
where $\Delta^T_g:\mathfrak{g}\rightarrow \mathfrak{g}$ is the transpose of $\Delta_g$ for every $g\in G$. Since, for any fixed $t\in T$, we can substitute $g=\exp(s\zeta)$, with $\zeta\in \mathfrak{g}$, and differentiating with respect to $s$, one gets
\[
(\iota_{\zeta_M}d h_{\xi(t)})(t',z)=-\left\langle {\bf J}^\Phi(t',z),\frac{d}{ds}\bigg|_{s=0}\Delta^T_{\exp(s\zeta)}\xi(t)\right\rangle=\left\langle {\bf J}^\Phi(t',z),(\zeta_{\mathfrak{g}}^\Delta)_{\xi(t)}\right\rangle,
\]
where $(\zeta_{\mathfrak{g}}^\Delta)_{\xi(t)}$ is the fundamental vector field of $\Delta^T:G\times\mathfrak{g}\rightarrow \mathfrak{g}$ associated with $\zeta\in \mathfrak{g}$ at $\xi(t)\in \mathfrak{g}$, for a fixed $t\in T$. Note that the induced action on $\mathfrak{g}$ has fundamental vector fields
\[
\zeta^\Delta_\mathfrak{g}(v):=\frac{d}{ds}\bigg|_{s=0}\Delta^T_{\exp(-s\zeta )}v,\qquad \forall v\in \mathfrak{g}.
\]
Recall, that $(\zeta_M)_{(t,z_e)}$ and $v_{(t,z_e)}$ take values in $\ker\eta_{(t,z_e)}$. Taking variations relative to $z\in P$ above and evaluating at $(t,z_{e})$, one gets 
\[
(\delta^{2}h_{\xi(t)})_{(t,z_e)}((\zeta_{M})_{(t,z_e)},v_{(t,z_e)})=\left\langle T_{(t,z_e)}\B J^\Phi(v_{(t,z_e)}), (\zeta_{\mathfrak{g}}^\Delta)_{\xi(t)}\right\rangle,
\]
which vanishes if $T_{(t,z_e)}\B J^\Phi(v_{(t,z_e)})=0$, i.e. if $v_{(t,z_e)}\in \ker T_{(t,z_e)}\B J^\Phi=T_{(t,z_e)}({\bf J}^{\Phi-1}(\mu_{e}))$.
\end{proof}

The following corollary is a natural consequence of Proposition \ref{Prop::CosGaugeDir}.

\begin{corollary}
If $z_e$ is a  relative equilibrium point of $((T\times P)^\om_\eta,h,{\bf J}^\Phi)$, then the subspace $T_{(t,z_e)}(G^\Delta_{\mu_e} (t,z_e))$ belongs to the kernel of the restriction of $(\delta^2 h_{\xi(t)} )_{(t,z_e)}$ to $T_{(t,z_e)}({\bf J}^{\Phi-1}(\mu_e))\cap \ker \eta_{(t,z_e)}$.
\end{corollary}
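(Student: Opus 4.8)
The plan is to reduce the statement entirely to Proposition \ref{Prop::CosGaugeDir}, once I have identified the orbit tangent space correctly and checked that its vectors lie inside the subspace on which the second variation is being restricted. Recall that the tangent space to the $G^\Delta_{\mu_e}$-orbit through $(t,z_e)$ is
\[
T_{(t,z_e)}\big(G^\Delta_{\mu_e}(t,z_e)\big)=\{(\zeta_M)_{(t,z_e)}:\zeta\in\mathfrak{g}^\Delta_{\mu_e}\},
\]
where $\mathfrak{g}^\Delta_{\mu_e}$ denotes the Lie algebra of $G^\Delta_{\mu_e}$. Hence it suffices to show that every fundamental vector $(\zeta_M)_{(t,z_e)}$ with $\zeta\in\mathfrak{g}^\Delta_{\mu_e}$ is annihilated by $(\delta^2h_{\xi(t)})_{(t,z_e)}$ when paired against an arbitrary element of $T_{(t,z_e)}({\bf J}^{\Phi-1}(\mu_e))\cap\ker\eta_{(t,z_e)}$.

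First I would verify that these fundamental vectors genuinely belong to $T_{(t,z_e)}({\bf J}^{\Phi-1}(\mu_e))\cap\ker\eta_{(t,z_e)}$, so that the assertion about the kernel of the restriction is meaningful rather than vacuous. Membership in the level-set tangent space is exactly part $1$ of Lemma \ref{Lemm::NonAdPerp}, which gives $T_{(t,z_e)}(G^\Delta_{\mu_e}(t,z_e))=T_{(t,z_e)}(G(t,z_e))\cap T_{(t,z_e)}({\bf J}^{\Phi-1}(\mu_e))$; in particular the left-hand side sits inside $T_{(t,z_e)}({\bf J}^{\Phi-1}(\mu_e))$. Membership in $\ker\eta$ is immediate from the momentum-map hypothesis $\iota_{\xi_M}\eta=0$ of Definition \ref{Def:MomentumMap}, which holds for every $\xi\in\mathfrak{g}$ and therefore for every $\zeta\in\mathfrak{g}^\Delta_{\mu_e}\subset\mathfrak{g}$.

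Then I would invoke Proposition \ref{Prop::CosGaugeDir}. That proposition is stated for arbitrary $\zeta\in\mathfrak{g}$, so specialising to $\zeta\in\mathfrak{g}^\Delta_{\mu_e}$ yields $(\delta^2h_{\xi(t)})_{(t,z_e)}((\zeta_M)_{(t,z_e)},v_{(t,z_e)})=0$ for every $v_{(t,z_e)}\in T_{(t,z_e)}({\bf J}^{\Phi-1}(\mu_e))\cap\ker\eta_{(t,z_e)}$. Combined with the symmetry of the bilinear form $(\delta^2h_{\xi(t)})_{(t,z_e)}$ recorded just after \eqref{Eq::SecVarFormh}, this is precisely the assertion that each $(\zeta_M)_{(t,z_e)}$, and hence the entire orbit tangent space $T_{(t,z_e)}(G^\Delta_{\mu_e}(t,z_e))$, lies in the kernel of the restriction of $(\delta^2h_{\xi(t)})_{(t,z_e)}$ to $T_{(t,z_e)}({\bf J}^{\Phi-1}(\mu_e))\cap\ker\eta_{(t,z_e)}$.

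The statement is thus a direct corollary and presents no serious obstacle; the only point that genuinely needs care is the second step, namely confirming that the orbit directions of $G^\Delta_{\mu_e}$ really do sit inside the intersection $T_{(t,z_e)}({\bf J}^{\Phi-1}(\mu_e))\cap\ker\eta_{(t,z_e)}$. Without this the vanishing coming from Proposition \ref{Prop::CosGaugeDir} would say nothing about the \emph{restricted} form, and it is exactly here that the combination of Lemma \ref{Lemm::NonAdPerp} (part $1$) with the defining condition $\iota_{\xi_M}\eta=0$ of the cosymplectic momentum map is indispensable.
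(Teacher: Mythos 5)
Your argument is correct and follows exactly the route the paper intends: the corollary is stated there as an immediate consequence of Proposition \ref{Prop::CosGaugeDir}, with no written proof, and your specialisation to $\zeta\in\mathfrak{g}^\Delta_{\mu_e}$ together with the check (via Lemma \ref{Lemm::NonAdPerp}, part 1, and the condition $\iota_{\xi_M}\eta=0$) that the orbit directions lie in $T_{(t,z_e)}({\bf J}^{\Phi-1}(\mu_e))\cap\ker\eta_{(t,z_e)}$ is precisely the justification the authors leave implicit. Nothing further is needed.
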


\section{Stability on the reduced manifold}
\label{Sec::ReducedStability}

Section \ref{Sec::CharStrongEqP} introduced the basic results of a cosymplectic energy-momentum method, which allows for finding  relative equilibrium points of $((T\times P)^\omega_\eta,h,{\bf J}^\Phi)$.  This section analyses the stability on the reduced space by applying and interpreting the results of \cite{LZ21} in our cosymplectic framework without some unnecessary technical conditions on the momentum map assumed there. We hereafter assume $T=\mathbb{R}$ so as to use Definition \ref{lpdf}, which is the basis to establish conditions ensuring different types of stability on manifolds.

Recall that a Marsden--Weinstein reduction for $((\mathbb{R}\times P)^\omega_\eta,h,{\bf J}^\Phi)$ is a reduction from $(\mathbb{R}\times P,\omega,\eta)$ to a cosymplectic manifold $(\mathbb{R}\times P^\Delta_\mu,\omega_\mu,\eta_\mu)$, where $\omega_\mu$ and $\eta_\mu$ are given by $\iota_\mu^*\omega=\pi_\mu^*\omega_\mu$ and $\iota_\mu^*\eta=\pi_\mu^*\eta_\mu$, for the immersion $\iota_\mu:{\bf J}^{\Phi-1}(\mu)\hookrightarrow \mathbb{R}\times P$ and the projection $\pi_\mu:{\bf J}^{\Phi-1}(\mu)\rightarrow  M^\Delta_\mu={\bf J}^{\Phi-1}(\mu)/G_\mu^\Delta$. Recall that $M_\mu^\Delta\simeq \mathbb{R}\times P_\mu^\Delta$ for a certain manifold $P_\mu^\Delta$ introduced in Corollary \ref{Cor::De}.

Let us analyse the function $h_{z_e}:\mathbb{R}\times P\rightarrow  \mathbb{R}$ given by
\[
h_{z_e}(t,z):=h(t,z)-h(t,z_e).
\]
Then, $h_{z_e}(t,z_e)=0$ for every $t\in \mathbb{R}$. This is done to study $h_{z_e}(t,z)$ with lpdf functions and other functions of the sort. If $(t,z(t))$ is the particular solution to our $G$-invariant cosymplectic Hamiltonian system $((\mathbb{R}\times P)^\omega_\eta,h,{\bf J}^\Phi)$ with the initial condition $(0,z_e)$, then
\[
\frac{d}{dt}\bigg|_{t=0}h_{z_e}(t,z(t))=\frac{d}{dt}\bigg|_{t=0}h(t,z(t))-\frac{d}{dt}\bigg|_{t=0}h(t,z_e).
\]
For simplicity, we have chosen $t=0$, but we could have chosen any other initial time. Since the integral curves of $E_f$ for  $(\mathbb{R}\times P,\omega,\eta)$ are given by \eqref{Eq::CosymHamEq}, the derivative with respect to $t$ of a function $h_{z_e}$ along the solutions of the Hamilton equations for $h$ reads
\[
\frac{dh_{z_e}}{dt}=E_{h}h_{z_e}=Rh_{z_e}+ \{h_{z_e},h\}_{\omega,\eta}=Rh_{z_e}=\frac{\partial h_{z_e}}{\partial t}.
\]
Note that choosing another variable $t$ in a Darboux coordinate system on $\mathbb{R}\times P$ does not change the above relations. 
Note that $h_{z_e}\circ \Phi_g=h_{z_e}$ for every $g\in G$. Since $\pi_T\circ\Phi_g=\pi_T$, there exists a function $H_{z_e}:\mathbb{R}\times P_{\mu_e}^\Delta\rightarrow  \mathbb{R}$ of the form
\[
H_{z_e}(t,[z]):=h_{z_e}(t,z),\qquad \forall (t,z)\in {\bf J}^{\Phi-1}(\mu_e),
\]
where $(t,[z])$ stands for the equivalence class of $(t,z)\in {\bf J}^{\Phi-1}(\mu_e)$ in ${\bf J}^{\Phi-1}(\mu_e)/G^\Delta_{\mu_e}$. The function $H_{z_e}(t,[z])-k_{\mu_e}(t,[z])$ depends only on $t$, because $k_{\mu_e}(t,[z])$ is given by $\pi_{\mu_e}^*k_{\mu_e}=\iota_{\mu_e}^*h$. Recall that $\mathbb{R}\times P_{\mu_e}^\Delta$ is also a cosymplectic manifold. Therefore, similarly $\pi_{\mu_e*}(R+X_h)=R_{\mu_e}+X_{k_{\mu_e}}$, and $[z_e]$ is an equilibrium point of $X_{k_{\mu_e}}$. Hence, $H_{z_e}|_{\{t\}\times P_{\mu_e}^\Delta}$ has an equilibrium point in $[z_e]$.
Moreover,
\[
\frac{d}{dt}\bigg|_{t=0}H_{z_e}(t,[z(t)])=(Rh_{z_e})(0,z(0)),\qquad \forall (0,[z(0)])\in {\bf J}^{\Phi-1}(\mu_e)/ G^\Delta_{\mu_e}\simeq \mathbb{R}\times P_{\mu_e}^\Delta,
\]
where $z(t)$ is any solution to the initial Hamiltonian equations of $h$ within ${\bf J}^{\Phi-1}(\mu_e)$ with initial condition $z(0)$.

Let us use $H_{z_e}$ to study the stability of $[z_e]$ in $P_{\mu_e}^\Delta$. In particular, we will study the conditions on $h$ to ensure that $H_{z_e}$ gives rise to different types of stable equilibrium points at $[z_e]$. With this aim, consider a coordinate system $\{x_1,\ldots,x_n\}$ on an open neighbourhood $\mathcal{U}$ of $[z_e]\in P_{\mu_e}^\Delta$ such that $x_i([z_e])=0$ for $i=1,\ldots, n$. Let $\alpha=(\alpha_1,\ldots,\alpha_n)$, with $\alpha_1,\dots,\alpha_n\in \mathbb{N}\cup \{0\}$, be a multi-index with $n=\dim {\bf J}^{\Phi-1}(\mu_e)/G_{\mu_e}^\Delta-1$. To understand this, recall that $M^\Delta_{\mu_e}$ contains the $t$-dependence that will be a parameter, not a variable, in the stability analysis. Let $|\alpha|=\sum_{i=1}^n\alpha_i$ and $D^\alpha=\partial^{\alpha_1}_{x_1}\cdots \partial^{\alpha_n}_{x_n}$ for every $\alpha$. The proof of the following lemma is very technical and will be omitted (see \cite{LZ21, MS88} for detailed proof).

\begin{lemma}\label{Lemm:MegaLemma} Let us define the $t$-dependent parametric family of $n\times n$  matrices $M(t)$ with entries
\[
[M(t)]_i^j=\frac12\frac{\partial^2H_{z_e}}{\partial x_i\partial x_j}(t,[z_e]),\qquad \forall t\in \mathbb{R},\qquad i,j=1,\ldots,n,
\]
and let ${\rm spec}(M(t))$ stands for the spectrum of the matrix $M(t)$ at $t\in \mathbb{R}$. Assume that there exists a $\lambda\in \mathbb{R}$ such that $0<\lambda<\inf_{t\in I_{t^0}}\min {\rm spec}(M(t))$ for some $t^0\in \mathbb{R}$. Suppose also that there exists a real constant $c$ such that
\[
c\geq \frac 16\sup_{t\in I_{t^0}}\max_{|\alpha|=3}\max_{[y]\in \mathcal{B}}|D^\alpha H_{z_e}(t,[y])|
\]
for a certain compact neighbourhood $\mathcal{B}$ of $[z_e]$. Then, there exists an open neighbourhood $\mathcal{U} $  of $[z_e]$, where the function $H_{z_e}:\mathbb{R}\times \mathcal{U}\rightarrow  \mathbb{R}$ is lpdf from $t^0$. If there exists additionally a constant $\Lambda$ such that 
\[
\sup_{t\in I_{t^0}}\max {\rm spec}(M(t))< \Lambda,
\]
then, $H_{z_e}:\mathbb{R}\times \mathcal{U}\rightarrow  \mathbb{R}$ is a decrescent function from $t^0$. 
\end{lemma}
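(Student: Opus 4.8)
The plan is to reduce the statement to a single Taylor estimate for $H_{z_e}(t,\cdot)$ near $[z_e]$, carried out in the coordinates $\{x_1,\ldots,x_n\}$ and made uniform in $t\in I_{t^0}$, and then to pass from coordinate norms to the Riemannian distance $d(\cdot,[z_e])$. First I would record that, for every fixed $t$, the point $[z_e]$ is a critical point of $H_{z_e}(t,\cdot)$: since $H_{z_e}(t,[z])-k_{\mu_e}(t,[z])$ depends only on $t$ and $[z_e]$ is an equilibrium point of $X_{k_{\mu_e}}$ (as already established above), the spatial gradient $(\partial H_{z_e}/\partial x_i)(t,[z_e])$ vanishes for $i=1,\ldots,n$, while $H_{z_e}(t,[z_e])=0$. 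Taylor's theorem with Lagrange remainder in the $n$ variables $x=(x_1,\ldots,x_n)$ then gives, for $[y]$ with coordinates $x$ near $[z_e]$,
\[
H_{z_e}(t,[y])=x^{\top}M(t)x+\sum_{|\alpha|=3}\frac{D^\alpha H_{z_e}(t,[y_\theta])}{\alpha!}\,x^\alpha,
\]
for some $\theta\in(0,1)$, where $[y_\theta]$ denotes the point with coordinates $\theta x$ and $x^{\top}M(t)x$ is exactly the quadratic term, because $[M(t)]^j_i=\tfrac12\,\partial^2_{x_ix_j}H_{z_e}(t,[z_e])$.

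Next I would bound both terms uniformly in $t\in I_{t^0}$. For the quadratic term, symmetry of $M(t)$ together with $0<\lambda<\inf_{t\in I_{t^0}}\min{\rm spec}(M(t))$ yields $x^{\top}M(t)x\ge\lambda\|x\|^2$, with $\|\cdot\|$ the Euclidean norm of the coordinates. For the cubic remainder, the multinomial identity $\sum_{|\alpha|=3}\tfrac{1}{\alpha!}|x^\alpha|=\tfrac16\big(\sum_i|x_i|\big)^3\le\tfrac{n^{3/2}}{6}\|x\|^3$ combined with the definition of $c$ gives a bound $c\,n^{3/2}\|x\|^3$ for the remainder, provided $[y_\theta]$ stays in the compact neighbourhood $\mathcal{B}$, which holds on a ball chosen small enough independently of $t$. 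Combining the two estimates,
\[
H_{z_e}(t,[y])\ge\|x\|^2\big(\lambda-c\,n^{3/2}\|x\|\big),
\]
so taking $\mathcal{U}$ to be the coordinate ball $\|x\|<\lambda/(2cn^{3/2})$ forces $H_{z_e}(t,[y])\ge\tfrac{\lambda}{2}\|x\|^2$ for every $t\in I_{t^0}$. Since on a compact neighbourhood the distance $d([y],[z_e])$ is bounded above by a fixed multiple of $\|x\|$, this produces one continuous, strictly increasing function $\alpha(r)=Cr^2$ with $\alpha(0)=0$ and one radius $r$ for which $H_{z_e}(t,[z_e])=0$ and $H_{z_e}(t,[y])\ge\alpha(d([y],[z_e]))$ for all $t\in I_{t^0}$; that is, $H_{z_e}$ is lpdf from $t^0$.

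For the decrescent claim I would argue symmetrically: the extra hypothesis $\sup_{t\in I_{t^0}}\max{\rm spec}(M(t))<\Lambda$ gives $x^{\top}M(t)x\le\Lambda\|x\|^2$, and with the same cubic bound one obtains $H_{z_e}(t,[y])\le(\Lambda+c\,n^{3/2})\|x\|^2$ on the bounded ball $\mathcal{U}$, which, after comparing $\|x\|$ with $d([y],[z_e])$ from below, yields a continuous strictly increasing $\beta(r)=C'r^2$ with $H_{z_e}(t,[y])\le\beta(d([y],[z_e]))$ on $I_{t^0}$. The one genuinely delicate point throughout is \emph{uniformity in} $t$ over the unbounded interval $I_{t^0}$: the radius of $\mathcal{U}$, the functions $\alpha$ and $\beta$, and the requirement $[y_\theta]\in\mathcal{B}$ must all be chosen independently of $t$, and this is exactly what the uniform spectral bounds $\lambda,\Lambda$ and the uniform third-derivative bound $c$ in the hypotheses guarantee. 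The passage from coordinate norms to the metric distance is harmless, since the paper has already observed that the induced topologies coincide and any fixed Riemannian metric is locally bi-Lipschitz to the Euclidean norm of a chart.
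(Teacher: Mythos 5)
Your proposal is correct and is precisely the quadratic-dominates-cubic Taylor argument (critical point at $[z_e]$, Lagrange remainder, uniform spectral bound below by $\lambda$, multinomial bound $cn^{3/2}\|x\|^3$ on the remainder, shrink the coordinate ball, then pass to the metric distance via local bi-Lipschitz equivalence) that the paper itself omits and defers to \cite{LZ21, MS88}. No gaps; your explicit attention to uniformity in $t$ over $I_{t^0}$ and to keeping the intermediate point $[y_\theta]$ inside $\mathcal{B}$ covers the only delicate points.
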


The eigenvalues of $M(t)$ depend on the chosen coordinate system around $[z_e]$. 

\begin{lemma}\label{Lem:Coorl} If the $t$-dependent matrix $M(t)$, which is defined in a local coordinate system $\{x_1,\ldots,x_n\}$ on an open neighbourhood of an equilibrium point $[z_e]\in P_{\mu_e}$, satisfies that $0< \lambda<{\rm inf}_{t\in I_{t^0}}\min{\rm spec}\,M(t) $  for some $\lambda$ (resp. ${\rm sup}_{t\in I_{t^0}}\max {\rm spec}\,M(t)< \Lambda$ for some $\Lambda$), then $M_{\mathcal{B}'}(t)$, defined as $M(t)$ but in  another coordinate system $\mathcal{B}':=\{\tilde{x}_1,\ldots,\tilde{x}_n\}$ on another neighbourhood in $P_{\mu_e}$ of $[z_e]$, satisfies that $0< \lambda'<{\rm inf}_{t\in I_{t^0}}\min {\rm spec}\,M_{\mathcal{B}'}(t)$  for some $\lambda'$  (resp. ${\rm sup}_{t\in I_{t^0}}\max{\rm spec}\,M_{\mathcal{B}'}(t)<\Lambda'$ for some $\Lambda'$).
\end{lemma}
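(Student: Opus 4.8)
The plan is to exploit that $[z_e]$ is a critical point of the spatial function $H_{z_e}(t,\cdot)$ for every fixed $t$, so that the matrix $M(t)$ transforms as a genuine bilinear form under the (purely spatial, hence $t$-independent) change of coordinates. First I would recall that, as noted before Lemma~\ref{Lemm:MegaLemma}, $[z_e]$ is an equilibrium point of $X_{k_{\mu_e}}$ and $H_{z_e}$ differs from $k_{\mu_e}$ by a function of $t$ alone; consequently the spatial first derivatives vanish, i.e. $\partial H_{z_e}/\partial x_i(t,[z_e])=0$ for all $i$ and all $t\in\mathbb{R}$. Writing the transition map as $\tilde{x}_k=\tilde{x}_k(x_1,\ldots,x_n)$ and applying the chain rule twice yields
\[
\frac{\partial^2 H_{z_e}}{\partial x_i\partial x_j}(t,[z_e])=\sum_{k,l=1}^n\frac{\partial\tilde{x}_k}{\partial x_i}([z_e])\frac{\partial\tilde{x}_l}{\partial x_j}([z_e])\frac{\partial^2 H_{z_e}}{\partial\tilde{x}_k\partial\tilde{x}_l}(t,[z_e])+\sum_{k=1}^n\frac{\partial^2\tilde{x}_k}{\partial x_i\partial x_j}([z_e])\frac{\partial H_{z_e}}{\partial\tilde{x}_k}(t,[z_e]),
\]
in which the second sum drops out by the critical-point condition. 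Hence $M(t)=A^\top M_{\mathcal{B}'}(t)A$, where $A:=\big(\partial\tilde{x}_k/\partial x_i([z_e])\big)_{k,i}$ is the Jacobian of the transformation at $[z_e]$. Since both charts are coordinate systems on $P_{\mu_e}$ and carry no dependence on the parameter $t$, the matrix $A$ is a fixed, invertible, $t$-independent matrix with finite operator norm.

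Second, I would transfer the spectral bounds by a Rayleigh-quotient estimate. Inverting the congruence gives $M_{\mathcal{B}'}(t)=(A^{-1})^\top M(t)A^{-1}$. For any nonzero $w\in\mathbb{R}^n$, setting $v=A^{-1}w$ and using $\|w\|\le\|A\|\,\|v\|$, one gets
\[
w^\top M_{\mathcal{B}'}(t)\,w=v^\top M(t)\,v\ge\lambda\|v\|^2\ge\frac{\lambda}{\|A\|^2}\|w\|^2,\qquad\forall t\in I_{t^0},
\]
so that $\inf_{t\in I_{t^0}}\min{\rm spec}\,M_{\mathcal{B}'}(t)\ge\lambda/\|A\|^2>0$, and any $\lambda'\in\big(0,\lambda/\|A\|^2\big)$ does the job. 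The upper bound is symmetric: with $\|v\|\le\|A^{-1}\|\,\|w\|$,
\[
w^\top M_{\mathcal{B}'}(t)\,w=v^\top M(t)\,v\le\Big(\sup_{t\in I_{t^0}}\max{\rm spec}\,M(t)\Big)\|A^{-1}\|^2\,\|w\|^2<\Lambda\|A^{-1}\|^2\,\|w\|^2,
\]
whence $\sup_{t\in I_{t^0}}\max{\rm spec}\,M_{\mathcal{B}'}(t)<\Lambda':=\Lambda\|A^{-1}\|^2$, as claimed.

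The only delicate point, and the real content of the lemma, is the $t$-independence and boundedness of the transition matrix. Because coordinate charts on $P_{\mu_e}$ carry no $t$-dependence, $A$ and $A^{-1}$ are constant matrices and the resulting constants $\lambda',\Lambda'$ are genuinely uniform in $t$; were the change of coordinates allowed to depend on $t$ with $\|A(t)\|$ or $\|A^{-1}(t)\|$ unbounded over $I_{t^0}$, the uniformity of the bounds would be lost. The tensorial transformation of $M(t)$ at the critical point—that is, the vanishing of the first-order correction term—is what renders the congruence relation exact and the estimate clean; without it the extra term would in general spoil the pointwise identity $M(t)=A^\top M_{\mathcal{B}'}(t)A$ and hence the transfer of the spectral bounds.
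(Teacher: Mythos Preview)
Your argument is correct. The paper does not actually supply a proof of this lemma; it is stated without proof in Section~\ref{Sec::ReducedStability}, and the near-identical Lemma~\ref{Lem:Coorl2} in Section~\ref{Sec::RelationofHessians} is likewise deferred to \cite{LZ21}. The closest the paper comes to an argument is the discussion following Lemma~\ref{Lem:Coorl2}, where the authors recast $M(t)$ as the matrix of an intrinsic bilinear form $K(t)$ on $T_{[z_e]}P^\Delta_{\mu_e}$, observe that different coordinate charts merely change the Euclidean inner product $(\cdot|\cdot)_{\mathcal{B}}$ against which the Rayleigh quotient is measured, and then invoke strong equivalence of norms on a finite-dimensional space to transfer the bounds. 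Your congruence relation $M(t)=A^{\top}M_{\mathcal{B}'}(t)A$ with a fixed invertible $A$ is exactly the matrix translation of that statement, and your explicit constants $\lambda/\|A\|^{2}$ and $\Lambda\|A^{-1}\|^{2}$ make precise what the paper's ``eventual change of the value of $\lambda$'' means. The two approaches are the same idea in different dress; yours has the virtue of being self-contained and giving the constants explicitly.
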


An appropriate coordinate system may simplify $M(t)$ at certain values of $t$, e.g. by writing $M(t)$ in a canonical form. Nevertheless, the simplification of $M(t)$ at every time $t\in I_{t^0}$ for a certain coordinate system in $P_{\mu_e}^\Delta$ around $[z_e]$ will be, in general, impossible. We therefore restrict ourselves to determining a condition on a particular coordinate system. 

Lemma \ref{Lemm:MegaLemma} leads, immediately, to  the following theorem. 

\begin{theorem}\label{Th:StabilityCon} If there exist $\lambda,c>0$ and an open neighbourhood $U$ of $[z_e]$ so that
\[
\lambda< {\rm min}({\rm spec}(M(t))),\qquad c\geq \frac{1}{3!}\max_{ |\alpha|=3}\sup_{[x]\in U}|D^\alpha H_{z_e}(t,[x])|,\qquad
\frac{\partial H_{z_e}}{\partial t}\bigg|_{U}\leq 0,
\]
for every $t\in I_{t^0}$, then $[z_{\!e}]$ is a stable point of the Hamiltonian vector field related to $k_{\mu_e}$ on ${\bf J}^{\Phi\!-\!1\!}(\mu_e)\!/G_{\!\mu_e}^{\!\Delta}$ from $t^0$. If there exists $\Lambda$ such that $\max ({\rm spec}(M(t)))<\Lambda$ for every $t\in I_{t^0}$, then $[z_e]$ is uniformly  stable from $t^0$.
\end{theorem}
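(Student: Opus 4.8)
The plan is to use $H_{z_e}$ as a Lyapunov function for the reduced $t$-dependent dynamics on $P^\Delta_{\mu_e}$ and to invoke the Basic Lyapunov Theorem on manifolds (Theorem \ref{Th:BasicTheoremOfLyapunov}), after verifying its hypotheses via Lemma \ref{Lemm:MegaLemma}. First I would record that $[z_e]$ is a spatial critical point of $H_{z_e}$ and an equilibrium of the reduced motion. By Proposition \ref{Pr::RedHam} the reduced evolution field is $E_{k_{\mu_e}}=R_{\mu_e}+X_{k_{\mu_e}}$, so the $t$-dependent vector field governing the motion on $P^\Delta_{\mu_e}$ is $X_{k_{\mu_e}}$, which vanishes at $[z_e]$. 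Since $H_{z_e}-k_{\mu_e}$ depends only on $t$, the spatial partial derivatives of $H_{z_e}$ and $k_{\mu_e}$ agree, whence $\partial H_{z_e}/\partial x_i(t,[z_e])=\partial k_{\mu_e}/\partial x_i(t,[z_e])=0$ for all $i$ and all $t$; together with $H_{z_e}(t,[z_e])=0$ this shows that $[z_e]$ is a critical point of $H_{z_e}|_{\{t\}\times P^\Delta_{\mu_e}}$ and that the Hessian $M(t)$ captures the leading-order behaviour of $H_{z_e}$ near $[z_e]$.

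Next I would apply Lemma \ref{Lemm:MegaLemma}: the assumed bounds $\lambda<\min\,{\rm spec}(M(t))$ and $c\geq \tfrac{1}{3!}\max_{|\alpha|=3}\sup_{[x]\in U}|D^\alpha H_{z_e}(t,[x])|$ for $t\in I_{t^0}$ are exactly those guaranteeing that on a suitable neighbourhood $\mathcal{U}$ of $[z_e]$ the function $H_{z_e}$ is lpdf from $t^0$ in the sense of Definition \ref{lpdf}. In particular $H_{z_e}\geq \alpha(d(\cdot,[z_e]))\geq 0$ on $\mathcal{U}$, so the non-negativity required by Theorem \ref{Th:BasicTheoremOfLyapunov} holds after restricting the analysis to this neighbourhood; the coordinate dependence of the spectral bounds is immaterial by Lemma \ref{Lem:Coorl}.

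I would then compute the Lyapunov derivative $\dot{H}_{z_e}$ along the reduced flow from the expression \eqref{Eq:Mdot} with $X=X_{k_{\mu_e}}$. Because $H_{z_e}$ and $k_{\mu_e}$ share the same spatial derivatives, the Hamiltonian contribution is a Poisson bracket that vanishes by antisymmetry, mirroring the upstairs identity $E_h h_{z_e}=Rh_{z_e}$:
\[
\dot{H}_{z_e}=\frac{\partial H_{z_e}}{\partial t}+\{H_{z_e},k_{\mu_e}\}_{\omega_\mu,\eta_\mu}=\frac{\partial H_{z_e}}{\partial t}+\{k_{\mu_e},k_{\mu_e}\}_{\omega_\mu,\eta_\mu}=\frac{\partial H_{z_e}}{\partial t}.
\]
Hence the hypothesis $\partial H_{z_e}/\partial t|_U\leq 0$ for $t\in I_{t^0}$ gives $\dot{H}_{z_e}\leq 0$ locally around $[z_e]$. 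With $H_{z_e}$ non-negative, lpdf from $t^0$, and $\dot{H}_{z_e}\leq 0$, item $1$ of Theorem \ref{Th:BasicTheoremOfLyapunov} yields stability of $[z_e]$ from $t^0$. For the uniform statement, the extra bound $\max\,{\rm spec}(M(t))<\Lambda$ makes $H_{z_e}$ decrescent from $t^0$ by the second part of Lemma \ref{Lemm:MegaLemma}, and item $2$ of Theorem \ref{Th:BasicTheoremOfLyapunov} then yields uniform stability.

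The main obstacle is the clean identification $\dot{H}_{z_e}=\partial H_{z_e}/\partial t$, i.e. verifying that the Poisson-bracket term in the Lyapunov derivative vanishes on the reduced cosymplectic manifold; this is what lets the single sign condition on $\partial H_{z_e}/\partial t$ control $\dot{H}_{z_e}$, and it requires that the reduced dynamics be genuinely Hamiltonian relative to $(M^\Delta_{\mu_e},\omega_\mu,\eta_\mu)$, which is supplied by Proposition \ref{Pr::RedHam}. The remaining analytic content, namely positivity and decrescence from the spectral and third-derivative bounds, is precisely Lemma \ref{Lemm:MegaLemma} and may be taken as given.
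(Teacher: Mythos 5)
Your proposal is correct and follows essentially the same route as the paper: the text preceding the theorem already establishes that $[z_e]$ is an equilibrium of $X_{k_{\mu_e}}$ and that the derivative of $H_{z_e}$ along reduced solutions reduces to $\partial H_{z_e}/\partial t$ (the Poisson-bracket term vanishing exactly as you argue), so that Lemma \ref{Lemm:MegaLemma} supplies the lpdf and decrescent properties and Theorem \ref{Th:BasicTheoremOfLyapunov} gives stability, respectively uniform stability. The only pieces you spell out that the paper leaves implicit are the non-negativity of $H_{z_e}$ on the chosen neighbourhood and the coordinate-independence via Lemma \ref{Lem:Coorl}, both of which are handled correctly.
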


Assuming stronger conditions on the derivatives of $H_{z_e}$ than in Theorem \ref{Th:StabilityCon}, one gets Corollary \ref{Cor::StablePoint}, whose  conditions that can indeed be proved to hold independently of the chosen coordinate system (cf. \cite{LZ21}), which makes them geometrical as proved next.

\begin{corollary}\label{Cor::StablePoint}
If there exist $\lambda,c>0$ and an open neighbourhood $U$ of $[z_e]$ such that
\begin{equation}
\label{Eq::StabConditions}
\lambda <\min\left({\rm spec}\left(M(t)\right)\right),\qquad c\geq \frac{1}{3!}\max_{1\leq |\alpha|\leq 3}\sup_{[x]\in U}\left|D^\alpha H_{z_e}(t,[x])\right|,\qquad \frac{\partial H_{z_e}}{\partial t}\bigg|_{U}\leq 0,
\end{equation}
for every $t\in I_{t^0}$, then $[z_e]$ is a uniformly  stable point of the Hamiltonian system $k_{\mu_e}$ on ${\bf J}^{\Phi-1}(\mu_e)/G_{\mu_e}^\Delta$ from $t^0$.
\end{corollary}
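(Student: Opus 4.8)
The plan is to derive Corollary \ref{Cor::StablePoint} directly from Theorem \ref{Th:StabilityCon} by observing that the hypotheses stated here are strictly stronger than those of the theorem, and in particular that they automatically supply the upper spectral bound needed for the uniform conclusion. First I would note that the bound $c\geq \frac{1}{3!}\max_{1\leq|\alpha|\leq 3}\sup_{[x]\in U}|D^\alpha H_{z_e}(t,[x])|$ includes the case $|\alpha|=3$, so the first two conditions of Theorem \ref{Th:StabilityCon} hold verbatim; combined with $\partial H_{z_e}/\partial t|_U\leq 0$, this already yields that $[z_e]$ is stable from $t^0$.

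To upgrade stability to uniform stability, I would extract from the same bound the case $|\alpha|=2$. Since the entries $[M(t)]_i^j=\tfrac12\,\partial^2 H_{z_e}/\partial x_i\partial x_j(t,[z_e])$ are, up to the factor $\tfrac12$, second-order derivatives of $H_{z_e}$, the hypothesis gives $|[M(t)]_i^j|\leq 3c$ for every $t\in I_{t^0}$ and all $i,j$. As $M(t)$ is symmetric (equality of mixed second partials), each eigenvalue is real and bounded in modulus by the operator norm, which is controlled by the entries, e.g. $\max\mathrm{spec}(M(t))\leq \|M(t)\|\leq n\max_{i,j}|[M(t)]_i^j|\leq 3nc=:\Lambda$, uniformly in $t\in I_{t^0}$. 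This is exactly the extra hypothesis of Theorem \ref{Th:StabilityCon}, whence $[z_e]$ is uniformly stable from $t^0$.

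The remaining, and genuinely substantive, point is that the three conditions in \eqref{Eq::StabConditions} are coordinate-independent, so that the uniform stability they certify is a geometric property of $[z_e]$. For the spectral condition $\lambda<\min\mathrm{spec}(M(t))$ this is precisely the content of Lemma \ref{Lem:Coorl}, and the bound $\max\mathrm{spec}(M(t))<\Lambda$ transforms in the same way. The monotonicity condition is intrinsic because $\partial H_{z_e}/\partial t=R_{\mu_e}H_{z_e}$, the derivative of $H_{z_e}$ along the Reeb field of $(\mathbb{R}\times P_{\mu_e}^\Delta,\omega_{\mu_e},\eta_{\mu_e})$, whose direction does not depend on the Darboux chart. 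The main obstacle is the derivative bound: under a smooth change of coordinates the chain rule expresses a third-order derivative in the new chart as a combination of derivatives of orders $1$, $2$, and $3$ in the old chart, weighted by bounded derivatives of the transition map. This is exactly why the admissible range of multi-indices is enlarged to $1\leq|\alpha|\leq 3$ relative to Theorem \ref{Th:StabilityCon}: bounding only $|\alpha|=3$ would not survive a coordinate change, whereas bounding all orders from $1$ to $3$ guarantees, on a possibly smaller neighbourhood, a bound with a new constant $c'$ on the first three derivatives in any other chart. Assembling these observations shows that the hypotheses, and hence the uniform stability, are coordinate-free, as claimed.
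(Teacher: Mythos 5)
Your proposal is correct and follows essentially the same route as the paper: the enlarged range $1\leq|\alpha|\leq 3$ supplies, via the $|\alpha|=2$ case, a uniform bound on the entries of $M(t)$ and hence an upper spectral bound $\Lambda$ (the paper computes $\Lambda>6cn^2$ from $v^TM(t)v\leq 6cn^2\|v\|^2$, while your $\Lambda=3nc$ is a tighter but equivalent estimate), after which the uniform-stability clause of Theorem \ref{Th:StabilityCon} applies. Your closing discussion of coordinate independence likewise matches the paper's treatment via Lemma \ref{Lem:Coorl2} and the remark following the corollary.
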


The existence of $c$ in Corollary \ref{Cor::StablePoint} yields, from the first and second expression in \eqref{Eq::StabConditions}, that $\max({\rm spec}(M(t)))\leq 6cn^2$ for every $t\in I_{t^0}$.  Indeed,
\[
v^TM(t)v\leq \!\sum_{i,j=1}^n|v_i||v_j||M^i_j(t)|\leq 6c \!\sum_{i,j=1}^n\|v\|^2=6cn^2 \|v\|^2,\;\;\; \forall v\in \mathbb{R}^n.
\]
Consequently, $v^TM(t)v<\Lambda v^Tv$ for every non-zero $v\in \mathbb{R}^n$ and $\Lambda> 6cn^2$.

The results obtained above employ a distance on an open coordinate neighbourhood of $[z_e]$ that was induced by a standard norm in $\mathbb{R}^n$. The topology induced by this norm is the same as the one induced by any other Riemannian metric on the open neighbourhood of $[z_e]$. Hence,  our results concerning the stability of $[z_e]$ are independent of the used Riemannian metric.

The lemma below shows that Corollary \ref{Cor::StablePoint} has a geometric meaning: the conditions provided are satisfied independently of the chosen coordinates, which may change the values of the constants $\lambda,c$, but not its existence, which is the relevant fact for our main aims. Its proof can be found in \cite{LZ21}.

\begin{lemma}\label{Lem:Coorl2} If $M(t)$, which is defined in a local coordinate system $\{x_1,\ldots,x_n\}$ on an open neighbourhood of an equilibrium point $[z_e]\in P^\Delta_{\mu_e}$, is such  that $0< \lambda<{\rm inf}_{t^0\leq t}\min{\rm spec}\,M(t) $  for some $\lambda$ (resp. ${\rm sup}_{ t^0\leq t}\max {\rm spec}\,M(t)< \Lambda$ for some $\Lambda$), then $M_{\mathcal{B}'}(t)$,which is determined like $M(t)$ but in  another coordinate system $\mathcal{B}'=\{\tilde{x}_1,\!\ldots\!,\tilde{x}_n\}$ around  $[z_e]\in P_{\mu_e}^\Delta$, holds that $0< \lambda'<{\rm inf}_{t^0\leq t}\min {\rm spec}\,M_{\mathcal{B}'}(t)$  for some $\lambda'$  (resp. ${\rm sup}_{t^0\leq t}\max{\rm spec}\,M_{\mathcal{B}'}(t)<\Lambda'$ for some $\Lambda'$).  
\end{lemma}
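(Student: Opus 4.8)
The plan is to recognise $M(t)$ as one half of the \emph{spatial} Hessian of $H_{z_e}(t,\cdot)$ at the equilibrium point $[z_e]$, and to exploit the fact that at a critical point the Hessian is a well-defined bilinear form transforming by \emph{congruence} under a change of coordinates. Thus the two families $M(t)$ and $M_{\mathcal{B}'}(t)$ will differ only by a fixed, $t$-independent invertible matrix acting by congruence, and the spectral bounds can be transferred with constants controlled by the singular values of that matrix.

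First I would record that $[z_e]$ is an equilibrium point of the reduced Hamiltonian vector field $X_{k_{\mu_e}}$ on $P^\Delta_{\mu_e}$ and that $H_{z_e}-k_{\mu_e}$ depends only on $t$. Because the fibre $P^\Delta_{\mu_e}$ carries a non-degenerate (symplectic) form, the vanishing of $X_{k_{\mu_e}}$ at $[z_e]$ forces the spatial differential of $k_{\mu_e}$, and hence of $H_{z_e}$, to vanish there; that is, $\partial H_{z_e}/\partial x_i(t,[z_e])=0$ for $i=1,\dots,n$ and every fixed $t\in\mathbb{R}$. This is the only property of $[z_e]$ I will actually need.

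Next, for a second coordinate system $\mathcal{B}'=\{\tilde x_1,\dots,\tilde x_n\}$ about $[z_e]$, I would invoke the change-of-variables formula for second derivatives,
\[
\frac{\partial^2 H_{z_e}}{\partial\tilde x_a\partial\tilde x_b}(t,[z_e])=\sum_{i,j=1}^n\frac{\partial x_i}{\partial\tilde x_a}\frac{\partial x_j}{\partial\tilde x_b}\frac{\partial^2 H_{z_e}}{\partial x_i\partial x_j}(t,[z_e])+\sum_{i=1}^n\frac{\partial^2 x_i}{\partial\tilde x_a\partial\tilde x_b}\frac{\partial H_{z_e}}{\partial x_i}(t,[z_e]),
\]
and use the vanishing of the first derivatives to kill the second sum. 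Writing $J:=\bigl(\partial x_i/\partial\tilde x_a\bigr)([z_e])$ for the Jacobian of the coordinate change at $[z_e]$ — invertible and, crucially, independent of $t$, since both coordinate systems live on $P^\Delta_{\mu_e}$ while $t$ enters only as a parameter — this yields the congruence $M_{\mathcal{B}'}(t)=J^T M(t)\,J$ for every $t\in\mathbb{R}$ (the common factor $1/2$ in the entries of both matrices is irrelevant).

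Finally I would transfer the spectral estimates by singular-value sandwiching. Since $M(t)$ is symmetric, $\min{\rm spec}\,M(t)$ and $\max{\rm spec}\,M(t)$ are its extreme Rayleigh quotients, so for any $w\in\mathbb{R}^n$ one has $w^T M_{\mathcal{B}'}(t)w=(Jw)^T M(t)(Jw)$. The lower bound $\lambda<\inf_{t^0\le t}\min{\rm spec}\,M(t)$ then gives $w^T M_{\mathcal{B}'}(t)w\ge\lambda\|Jw\|^2\ge\lambda\,\sigma_{\min}(J)^2\|w\|^2$, whence $\inf_{t^0\le t}\min{\rm spec}\,M_{\mathcal{B}'}(t)\ge\lambda\,\sigma_{\min}(J)^2>0$ and any $0<\lambda'<\lambda\,\sigma_{\min}(J)^2$ works; symmetrically, $\sup_{t^0\le t}\max{\rm spec}\,M(t)<\Lambda$ gives $w^T M_{\mathcal{B}'}(t)w\le\Lambda\,\sigma_{\max}(J)^2\|w\|^2$, so any $\Lambda'>\Lambda\,\sigma_{\max}(J)^2$ suffices, where $\sigma_{\min}(J),\sigma_{\max}(J)>0$ are the extreme singular values of $J$. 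The uniformity in $t$ is automatic because $J$ does not depend on $t$. The only genuinely delicate point is the justification that the Hessian behaves tensorially here, which rests entirely on the vanishing of the spatial first derivatives at $[z_e]$; once that is secured the remaining estimates are elementary.
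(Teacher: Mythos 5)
Your proof is correct and follows essentially the same route the paper takes (the paper defers the details to \cite{LZ21} but its surrounding discussion is exactly this argument): the spatial differential of $H_{z_e}$ vanishes at $[z_e]$ for every $t$, so the Hessian transforms by congruence with the $t$-independent Jacobian $J$, and the uniform spectral bounds transfer because the quadratic forms $w\mapsto\|Jw\|^2$ and $w\mapsto\|w\|^2$ are equivalent on a finite-dimensional space --- which is the paper's ``all inner products $(\cdot|\cdot)_{\mathcal{B}}$, $(\cdot|\cdot)_{\mathcal{B}'}$ are equivalent'' step, phrased via the singular values of $J$. No gaps.
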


The condition for $c$ in Corollary \ref{Cor::StablePoint} is independent of the chosen coordinate system. More specifically, the condition on a new coordinate system also holds by choosing a new $c'$ and restricting to a new open subset of $(t,[z_e])$, where the previous condition and the new coordinate system are defined.

In this section, we assume $\mathbf{J}^\Phi$ to be regular at $\mathbf{J}^\Phi(t,z_e)$ for a  relative equilibrium point $z_e\in P$.
The energy-momentum method determines properties of $h$ on a neighbourhood of a relative equilibrium point $m_e=(t,z_e)\in \mathbb{R}\times P$ that ensure a certain type of stability around an associated equilibrium point of the Hamilton equations of $k_{\mu_e}$ in $\mathbb{R}\times P_{\mu_e}$. In particular, we will give conditions on $h_{\mu_e}:(t,x)\in {\bf J}^{\Phi-1}(\mu_e)\mapsto h(t,x)\in \mathbb{R}$, and $\partial h_{\mu_e}/\partial t$ with $t\in \mathbb{R}$, to ensure that the conditions in Theorem \ref{Th:StabilityCon} and/or Corollary \ref{Cor::StablePoint} are satisfied. Instead of investigating  $M(t)$,  we will set conditions on the functions $h_{\xi(t)}\upharpoonright_{\{t\}\times P}$ for $t\in \mathbb{R}$, which is more practical as they are not defined on the quotient of a submanifold of $P$ and they are straightforwardly known without making additional computations. As in the case of Section \ref{Sec::CharStrongEqP}, the present section heavily relies on  the results of \cite{LZ21}, which are to be very slightly modified to be adapted to a cosymplectic realm. Moreover, some technical conditions concerning the ${\rm Ad}^*$-equivariance assumed in \cite{LZ21} are again removed.

\section{Stability, reduced manifold, and  relative equilibrium points}
\label{Sec::RelationofHessians}

In this section, we assume $\mathbf{J}^\Phi$ to be regular at $\mathbf{J}^\Phi(t,z_e)$ for a  relative equilibrium point $z_e\in P$ and every $t\in \mathbb{R}$. The cosymplectic energy-momentum method aims to determine properties of $h$ on a neighbourhood of a  relative equilibrium point $z_e\in P$ that ensure a certain type of stability around an associated equilibrium point of the Hamilton equations of $k_{\mu_e}\in C^\infty(\mathbb{R}\times P^\Delta_{\mu_e})$. In particular, we will give conditions on $h_{\mu_e}:(t,x)\in {\bf J}^{\Phi-1}(\mu_e)\mapsto h(t,x)\in \mathbb{R}$, and $\partial h_{\mu_e}/\partial t$, to ensure that the conditions in Theorem \ref{Th:StabilityCon} and/or Corollary \ref{Cor::StablePoint} are satisfied. Instead of investigating  $M(t)$,  we will set conditions on the functions $h_{\xi(t)}|_{\{t\}\times P}$ for $t\in \mathbb{R}$, which is more practical as they are not defined on the quotient of a submanifold of $P$ and they are straightforwardly known without making additional computations. As in Section 8, this section also heavily relies on  the results of \cite{LZ21}, which are to be slightly modified to be adapted to a cosymplectic realm. Moreover, some technical conditions concerning the ${\rm Ad}^*$-equivariance assumed in \cite{LZ21} are here  removed.

Let us assume that $G^\Delta_{\mu_e}$ acts in a quotientable manner on ${\bf J}^{\Phi-1}(\mu_e)$. Let us define a coordinate system $\{t,z_1,\ldots,z_q\}$ on an open subset $\mathbb{R}\x\mathcal{A}_{\mu_{e}}\subset {\bf J}^{\Phi-1}(\mu_e)$ containing $m_e=(t_e,z_e)$ for some $t_e\in \mathbb{R}$.  Let $\{t,\pi^*_{\mu_e}x_1,\ldots,\pi^*_{\mu_e}x_n\}$ be the coordinates on $\mathbb{R}\x\mathcal{A}_{\mu_{e}}$ obtained by pull-back to $\mathbb{R}\x\mathcal{A}_{\mu_{e}}$ some coordinates $\{t,x_1\ldots,x_n\}$ on $\mathbb{R}\x\mathcal{O}=\pi_{\mu_e} (\mathbb{R}\x\mathcal{A}_{\mu_{e}})$ \footnote{To simplify the notation, $\{x_1,\ldots,x_n\}$ will stand for a set of coordinates on a neighbourhood of $[z_e]$ and their pull-backs to ${\bf J}^{\Phi-1}(\mu_e)$ via $\pi_{\mu_e}$ simultaneously.}, since the cosymplectic Marsden--Weinstein reduction does not `reduce' the space $\mathbb{R}$ (see Corollary \eqref{Cor::De}), and let $\{y_{1},\ldots,y_{s}\}$ be additional coordinates giving rise to a coordinate system $\{t,z_1,\ldots,z_q\}$ on $\mathbb{R}\x\mathcal{A}_{\mu_{e}}$. Due to the $G^\Delta_{\mu_e}$-invariance of $h_{\mu_e}=h\circ \iota_{\mu_e}:{\bf J}^{\Phi-1}(\mu_e)\rightarrow  \mathbb{R}$, one has that there exists $c$ such that 
\[
c\geq \frac{1}{3!}\max_{3\geq |\vartheta|\geq 1}\sup_{z\in \mathcal{A}_{\mu_{e}}}|D^\vartheta h_{\mu_e}(t,y)|,\qquad \forall t\in I_{t^0},
\]
where $\vartheta$ is a multi-index $\vartheta=(\vartheta_1,\ldots,\vartheta_q)$ if and only if 
\begin{equation}\label{Rel2}
c\geq \frac{1}{3!}\max_{3\geq  |\alpha|\geq 1}\sup_{x\in \mathcal{O}}|D^\alpha H_{z_e}(t,x)|,\qquad \forall t\in I_{t^0},
\end{equation}
where $\mathbb{R}\x\mathcal{O}$ is an open neighbourhood of $[m_e]=(t_e,[z_e])$ because $\pi_{\mu_e}$ is an open mapping. Indeed, since $h_{\mu_e}$ is constant on the submanifolds where $t,x_1,\ldots,x_n$ take constant values,  $h_{\mu_e}(t,x_1,\ldots,x_n,y_1,\ldots,y_s)-h(t,z_e)=H_{z_e}(t,x_1,\ldots,x_n)$ and (\ref{Rel2}) follows. 

Consider again the local coordinate system $\{t,z_1,\ldots,z_q\}$ on ${\bf J}^{\Phi-1}(\mu_e)$. We write $[\widehat{M}(t)]$ for the  $t$-dependent $q\times q$ matrix given by the $t$-dependent coefficients of the form 
\[
[\widehat{M}(t)]_i^j:=\frac{\partial^2h_{\mu_e}}{\partial z_i\partial z_j}(t,z_e),\qquad i,j=1,\ldots,q.
\]
It is worth noting that coordinates are constructed respecting the local natural decomposition ${\bf J}^{\Phi-1}(\mu_e)$ of the form $\mathbb{R}\x\mathcal{A}_{\mu_{e}}$. Note also that $h_{\mu_e}(t,z)=h_{\xi(t)}(t,z)$ for $(t,z)\in {\bf J}^{\Phi-1}(\mu)$. 

Lemma \ref{Lem:Coorl2} tells us that,  geometrically, the existence of $\lambda$ and $\Lambda$ amounts to the fact that the $t$-dependent bilinear symmetric form $K(t):T_{[z_e]}P^\Delta_{\mu_e}\times T_{[z_e]}P^\Delta_{\mu_e}\rightarrow  \mathbb{R}$ given by
\[
K(t):=\frac 12\sum_{i,j=1}^n\frac{\partial^2H_{z_e}}{\partial x_i\partial x_j}(t,[z_e])dx_i|_{[z_e]}\otimes dx_j |_{[z_e]}
\]
satisfies that
\begin{equation}\label{Eq:Comen}
K(t)(w,w)> \lambda (w|w)_{\mathcal{B}},\qquad \forall w\in T_{[z_e]}P^\Delta_{\mu_e}\backslash\{0\},\qquad\forall t\in I_{t^0},
\end{equation}
where $(\cdot |\cdot )_{\mathcal{B}}$ is the Euclidean product in $T_{[z_e]}P^\Delta_{\mu_e}$ satisfying that $\{\partial_{x_1},\ldots,\partial _{x_n}\}$ is an orthonormal basis. Indeed,  if $v$ stands for the column vector of  the coordinates of $w\in T_{[z_e]}P^\Delta_{\mu_e}$ in $\{\partial_{x_1},\ldots,\partial _{x_n}\}$, then 
\[
K(t)(w,w)=v^TM(t)v> \lambda v^Tv=\lambda (w|w)_{\mathcal{B}}, \qquad \forall w\in T_{[z_e]}P^\Delta_{\mu_e}\backslash\{0\},\quad \forall t\in I_{t^0}.
\]
Moreover, given another inner product $(\cdot| \cdot)_{\mathcal{B}'}$ on $T_{[z_e]}P^\Delta_{\mu_e}$, there  exist $m_l,m_s>0$ such that $m_s(w|w)_{\mathcal{B}'}> (w|w)_{\mathcal{B}}> m_l(w|w)_{\mathcal{B}'}$ for all $w\in T_{[z_e]}P^\Delta_{\mu_e}\setminus \{0\}$ (recall that in finite-dimensional spaces all metrics induced by norms are strong equivalent \cite{AM78}). Consequently,  if   (\ref{Eq:Comen}) is satisfied for an inner product in $T_{[z_e]}P^\Delta_{\mu_e}$, then it also holds  for any other one, after an eventual change of the value of $\lambda$. The same applies, mutatis mutandis, to the relation $\Lambda  (w|w)_\mathcal{B}> K(t)(w,w)$ for a $\Lambda>0$, for all $t\in I_{t^0}$, and every $w\in T_{[z_e]}P^\Delta_{\mu_e}\backslash\{0\}$.

It is worth noting that the inner product $(\cdot|\cdot)_\mathcal{B}$ is introduced to effectively determine whether the $t$-dependent matrix $M(t)$ has eigenvalues that can be bounded from below simultaneously for every time $t\in I_{t^0}$.

Let us provide a geometric method to verify (\ref{Eq:Comen})  via the space ${\bf J}^{\Phi-1}(\mu_e)$. Since every $h_{\mu_e}|_{\{t\}\times \mathcal{A}_{\mu_e}}$, with $t\in\mathbb{R}$, has a critical point at each  relative equilibrium point $z_e\in \mathcal{A}_{\mu_e}$ there exists a $t$-dependent bilinear symmetric function  $\widehat{M}(t):T_{z_e}\mathcal{A}_{\mu_e}\times T_{z_e}\mathcal{A}_{\mu_e}\rightarrow  \mathbb{R}$, of the form
\[
\widehat{M}(t):=\frac 12 \sum_{i,j=1}^q\frac{\partial^2h_{\mu_e}}{\partial z_i\partial z_j}(t,z_e)dz_i|_{z_e}\otimes dz_j|_{z_e},\qquad \forall t\in I_{t^0},
\]
where $\mathcal{B}=\{t,z_1,\ldots,z_q\}$ is any coordinate system in an open neighbourhood of $(t_e,z_e)\in {\bf J}^{\Phi-1}(\mu_e)$ adapted to $\mathbb{R}\times \mathcal{A}_{\mu_e}$.

Let $\{t,x_1,\ldots,x_n,y_1,\ldots,y_s\}$ be the coordinate system on the open neighbourhood of $(t_e,z_e)$ in ${\bf J}^{\Phi-1}(\mu_e)$ defined above. Then,
\[
\frac{\partial^2h_{\mu_e}}{\partial x_k\partial y_j}(t,z_e)=\frac{\partial^2h_{\mu_e}}{\partial y_i\partial y_j}(t,z_e)=0, \qquad i,j=1,\ldots,s,\qquad k=1,\ldots,n,\qquad \forall t\in \mathbb{R},
\]
on the open neighbourhood.
Note that $\pi_{\mu_e}^*K(t)=\widehat{M}(t)$ and $T_{z_e}(G_{\mu_e}^\Delta z_e)\subset\ker \widehat{M}(t)$ for every $t\in \mathbb{R}$. Since objects are geometrical,   $K(t)$ can be considered as the induced bilinear form by $\widehat{M}(t)$ on $S_{z_e}\simeq T_{z_e}\mathcal{A}_{\mu_e}/T_{z_e}(G^\Delta_{\mu_e}z_e)\simeq T_{[z_e]}P^\Delta_{\mu_e}$. Thus,  the conditions for $M(t)$  can be verified via $\widehat{M}(t)$.
Corollary \ref{Cor::StablePoint} and the previous remarks allow us to ennunciate the following theorem.

\begin{theorem}\label{Th:StabilityCon2} Let us assume that there exist $\lambda,c>0$ and an open coordinate neighbourhood $\mathcal{A}_{\mu_e}$ of $z_e$ so that $\mathbb{R}\times \mathcal{A}_{\mu_e}\subset {\bf J}^{\Phi-1}(\mu_e)$ and
\begin{equation}\label{eq:Cond}
\lambda< {\rm min}({\rm spec}([\widehat{M}(t)]|_{S_{z_e}}),\quad c\geq \frac{1}{3!}\max_{1\leq |\vartheta|\leq 3}\sup_{y\in \mathcal{A}_{\mu_e}}|D^\vartheta h_{\mu_e}(t,y)|,\quad
\frac{\partial h_{\mu_e}}{\partial t}\bigg|_{\mathcal{A}_{\mu_e}}\leq 0,
\end{equation}
for every $t\in I_{t^0}$ and a subspace $S_{z_e}\subset T_{z_e}\mathcal{A}_{\mu_e}$ suplementary to $T_{z_e}(G^\Delta_{\mu_e}z_e)$, then $[z_e]$ is a uniformly stable point of the Hamiltonian system $k_{\mu_e}$ on ${\bf J}^{\Phi-1}(\mu_e)/G^\Delta_{\mu_e}$ from $t^0$. 
\end{theorem}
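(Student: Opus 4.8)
The plan is to deduce the statement from Corollary \ref{Cor::StablePoint} by showing that the three hypotheses in \eqref{eq:Cond}, which are phrased on $\mathbf{J}^{\Phi-1}(\mu_e)$ in terms of $h_{\mu_e}$, match the three hypotheses \eqref{Eq::StabConditions} of that corollary, which are phrased on $\mathbb{R}\times P^\Delta_{\mu_e}$ in terms of the Lyapunov candidate $H_{z_e}$. The reduced system $k_{\mu_e}$ and its equilibrium $[z_e]$ are those produced by Proposition \ref{Pr::RedHam} and the discussion preceding Lemma \ref{Lemm:MegaLemma}, and throughout I would use the identity $\pi_{\mu_e}^*H_{z_e}=h_{\mu_e}-h_{\mu_e}(\cdot,z_e)$ on $\mathbf{J}^{\Phi-1}(\mu_e)$, together with the fact, granted by Corollary \ref{Cor::De}, that $\pi_{\mu_e}$ respects the product decomposition and hence commutes with $\partial/\partial t$. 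Once all three conditions of \eqref{Eq::StabConditions} are in force, Corollary \ref{Cor::StablePoint} applies verbatim and gives uniform stability of $[z_e]$.

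First I would dispatch the derivative bound. Since the adapted coordinates $\{t,x_1,\dots,x_n,y_1,\dots,y_s\}$ are chosen so that the $y$-directions are tangent to the $G^\Delta_{\mu_e}$-orbits, the $G^\Delta_{\mu_e}$-invariance of $h_{\mu_e}$ gives $h_{\mu_e}(t,x,y)-h(t,z_e)=H_{z_e}(t,x)$, which is relation \eqref{Rel2}. Consequently each partial derivative of $h_{\mu_e}$ of order between $1$ and $3$ either annihilates the purely $t$-dependent term $h(t,z_e)$ or agrees with the matching derivative of $H_{z_e}$, so the constant $c$ in the middle inequality of \eqref{eq:Cond} serves, and is served by, the constant $c$ of \eqref{Eq::StabConditions}; this is the computation already displayed just after \eqref{Rel2}. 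The $t$-derivative condition is handled by the same identity: since $\pi_{\mu_e}$ intertwines the two copies of $\partial/\partial t$, applying $\partial/\partial t$ to $\pi_{\mu_e}^*H_{z_e}=h_{\mu_e}-h_{\mu_e}(\cdot,z_e)$ reduces $\partial H_{z_e}/\partial t\le 0$ to the stated bound on $\partial h_{\mu_e}/\partial t$, and along the reduced flow $\dot H_{z_e}=\partial H_{z_e}/\partial t$ because $\{H_{z_e},k_{\mu_e}\}_{\omega_\mu,\eta_\mu}=0$, which is the genuine Lyapunov decrease. A point requiring care here is the $t$-dependent shift $h_{\mu_e}(\cdot,z_e)$: it is exactly the normalisation that makes $H_{z_e}$ vanish at the equilibrium, and one must keep track of it when passing between $\partial h_{\mu_e}/\partial t$ and $\partial H_{z_e}/\partial t$.

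The substantive step, and the one I expect to be the main obstacle, is the spectral condition. Here I would invoke that $T_{z_e}(G^\Delta_{\mu_e}z_e)\subset\ker\widehat{M}(t)$ for every $t$, which follows from Proposition \ref{Prop::CosGaugeDir} and the corollary immediately after it, so that $\widehat{M}(t)$ genuinely descends, together with the relation $\pi_{\mu_e}^*K(t)=\widehat{M}(t)$. These two facts identify the restriction $\widehat{M}(t)|_{S_{z_e}}$ to a subspace $S_{z_e}$ supplementary to the orbit tangent space with the induced bilinear form $K(t)$ on $T_{[z_e]}P^\Delta_{\mu_e}$, that is, with $M(t)$ read in the coordinates $\{x_1,\dots,x_n\}$. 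The delicate point is that ``$\min\mathrm{spec}$'' is computed with respect to an inner product, and this product differs on $S_{z_e}$ and on $T_{[z_e]}P^\Delta_{\mu_e}$; I would therefore transport the uniform lower bound $\lambda<\min\mathrm{spec}(\widehat{M}(t)|_{S_{z_e}})$ to a (possibly different) uniform lower bound $\lambda'<\min\mathrm{spec}(M(t))$ by appealing to the metric independence established in Lemma \ref{Lem:Coorl2}, whose content is precisely that the existence of such a $t$-uniform positive bound is insensitive to the choice of coordinates and inner product. Guaranteeing that the transported bound stays uniform in $t\in I_{t^0}$, rather than merely pointwise in $t$, is the crux of the argument.

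With the three conditions of \eqref{Eq::StabConditions} established, Corollary \ref{Cor::StablePoint} yields that $[z_e]$ is a uniformly stable equilibrium of the Hamiltonian vector field of $k_{\mu_e}$ on $\mathbf{J}^{\Phi-1}(\mu_e)/G^\Delta_{\mu_e}$ from $t^0$, which is the assertion. I would finally note that no separate upper spectral bound is required: as remarked after \eqref{Eq::StabConditions}, the mere existence of $c$ already forces $\max\mathrm{spec}(M(t))\le 6cn^2$, which supplies the decrescence of $H_{z_e}$ for free and thus completes the verification of the hypotheses of the corollary.
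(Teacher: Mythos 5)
Your proposal is correct and follows essentially the same route as the paper: the paper also obtains Theorem \ref{Th:StabilityCon2} by reducing its hypotheses to those of Corollary \ref{Cor::StablePoint}, using relation \eqref{Rel2} for the constant $c$, the identities $\pi_{\mu_e}^*K(t)=\widehat{M}(t)$ and $T_{z_e}(G^\Delta_{\mu_e}z_e)\subset\ker\widehat{M}(t)$ to identify $\widehat{M}(t)|_{S_{z_e}}$ with $M(t)$, and Lemma \ref{Lem:Coorl2} for the coordinate- and inner-product-independence of the uniform spectral bound. The points you flag as delicate (the $t$-dependent normalisation $h_{\mu_e}(\cdot,z_e)$ and the uniformity in $t$ of the transported bound, plus the observation that $c$ already yields $\max\mathrm{spec}(M(t))\le 6cn^2$) are exactly the ones the paper's surrounding remarks address.
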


Finally, let us relate the properties of $h_{\xi(t)}|_{\{t\}\times P}$ with $H_{\mu_e}$ so as to investigate  relative equilibrium points in $P$ and their associated equilibrium points in $P^\Delta_{\mu_e}$. Since $h_{\xi(t)}\upharpoonright_{\{t\}\times P}$ has a critical point at a  relative equilibrium point $z_e\in P$ for every $t\in \mathbb{R}$, it is possible to define the $t$-dependent bilinear symmetric form on $T_{z_e}P$ given by
\[
T_{z_e}(t):=\frac 12\sum_{i,j=1}^{\chi}\frac{\partial^2 h_{\xi(t)}}{\partial u_i\partial u_j}(t,z_e)du_i|_{z_e}\otimes du_j|_{z_e},\qquad \forall t\in \mathbb{R},
\]
where $\{t,u_1,\ldots,u_\chi\}$, with $\chi=\dim P$, is a coordinate system on an open neighbourhood of $m_e=(t,z_e)$ in $\mathbb{R}\times P$. Let us study the relation of $T_{z_e}(t)$ with $\widehat{M}(t)$ to study the latter via the former. Note that $T_{z_e}(t)$ is a geometric object easy to be constructed as it is defined on $T_{z_e}P$ and it depends, essentially, only on $h$ and ${\bf J}^\Phi$.

Since ${\bf J}^\Phi$ is regular, the coordinates of ${\bf J}^{\Phi}$ around ${\bf J}^{\Phi-1}(\mu_e)$, e.g. $\mu_1,\ldots,\mu_r$, give rise to $\dim \mathfrak{g}$ functionally independent functions on $P$. Consider now the coordinate system on a neighbourhood $\mathcal{A}_{\mu_e}$ of $z_e$ so that $\mathbb{R}\x\mathcal{A}_{\mu_e}\subset{\bf J}^{\Phi-1}(\mu_e)$  given by $\{t,x_1,\ldots,x_n,y_1,\ldots,y_s\}$. Let us  extended smoothly such coordinates to an open neighbourhood in $M$ containing  $\mathbb{R}\times \{z_e\}$. As ${\bf J}^\Phi$ is regular at each $(t,z_e)$ for $t\in \mathbb{R}$, the functions $\mu_1,\ldots,\mu_r$, which are constant on the level sets of ${\bf J}^{\Phi}$, obey $d\mu_1\wedge \ldots\wedge d\mu_r\neq 0$ on each $(t,z_e)$ for every $t\in \mathbb{R}$. This gives rise to a coordinate system $\{t,x_1,\ldots,x_n,y_1,\ldots,y_s,\mu_1,\ldots,\mu_r\}$ on an open neighbourhood in $\mathbb{R}\times P$ containing $\mathbb{R}\times\{z_e\}$. Hence,
\[
\frac{\partial h}{\partial y_i}\bigg|_{{\bf J}^{\Phi-1}(\mu_e)}\!\!\!\!\!\!\!=0,\quad\frac{\partial \langle {\bf J}^\Phi-\mu_{e},\ \xi(t)\rangle}{\partial y_i}=0,\qquad \forall t\in \mathbb{R},\qquad i=1,\ldots,s.
\]
It is worth noting that the above does not need to hold away from ${\bf J}^{\Phi-1}(\mu_e)$ since $y_1,\ldots,y_s$ were defined just as a mere smooth extension from ${\bf J}^{\Phi-1}(\mu_e)$. Moreover, 
\[
\left(\frac{\partial}{\partial y_j} \frac{\partial h}{\partial y_i}\right)\bigg|_{{\bf J}^{\Phi-1}(\mu_e)}\!\!\!\!\!\!\!\!\!\!\!=0,\quad \qquad \left(\frac{\partial}{\partial x_k} \frac{\partial h}{\partial y_i}\right)\bigg|_{{\bf J}^{\Phi-1}(\mu_e)}\!\!\!\!\!\!\!\!\!\!\!=0,
\]
\[
\frac{\partial}{\partial y_j} \frac{\partial \langle \mathbf{J}^\Phi-\mu_{e},\ \xi(t)\rangle}{\partial y_i}=0,\qquad \frac{\partial}{\partial x_k} \frac{\partial \langle \mathbf{J}^\Phi-\mu_{e},\ \xi(t)\rangle}{\partial y_i}=0,
\]
for all $t\in  \mathbb{R}$ with $i,j=1,\ldots,s$ and  $k=1,\ldots,n$. 
The first and second relations above hold because the derivative on the left depends on ${\bf J}^{\Phi-1}(\mu_e)$ only on $\partial h/\partial y_i$ within ${\bf J}^{\Phi-1}(\mu_e)$. Nevertheless, in the chosen coordinate system, the Hessian of $h_{\xi(t)}$ restricted to $\{t\}\times P$ on $T_{(t,z_e)}{\bf J}^{\Phi-1}(\mu_e)\cap \ker\eta|_{(t,z_e)}$ coincides with $\widehat M(t)$. Hence, we can use the functions $h_{\xi(t)}$ to study $\widehat{M}(t)$ and $M(t)$.

\section{A two-state quantum system}
\label{Sec::QuantumExample}

Let us analyse a quantum mechanical system defined by a $t$-dependent Schr\"odinger equation on a finite-dimensional Hilbert space so as to study its  relative equilibrium points with respect to the group of symmetries of $t$-dependent Schr\"odinger equations given by multiplying by complex non-zero numbers. In particular, this section applies some of the techniques of our paper to a two-level quantum system under the effect of a $t$-dependent Hermitian Hamiltonian operator $\widehat{H}(t)$, which can be induced, for instance, by a spin-magnetic interaction with a drift term. States of the two-level system are elements of the Hilbert space $\mathbb{C}^2$, but physical relevance has only non-zero ones. It is known that $\mathbb{C}^n$  admits a real differential structure that makes $\mathbb{C}^n$ globally homeomorphic to $\mathbb{R}^{2n}$. Hence, the Hilbert space describing the two-level system is two-dimensional as a complex manifold, while it is a four-dimensional manifold as a real one. The evolution of such a system is described by the action of the Lie group, $U_2$, of unitary automorphisms on $\mathbb{C}^2$. More specifically, the solution to the $t$-dependent Schr\"odinger equation induced by $\widehat{H}(t)$ from $t=0$ with an arbitrary initial state $\Psi_0\in \mathbb{C}^2$ takes the form $\Psi(t)=U_t\Psi_0$ for a  curve $\mathbb{R}\ni t\mapsto U_t\in U_2$. Recall that the $t$-dependent Schr\"odinger equation associated with $\widehat{H}(t)$ reads
\begin{equation}\label{Eq:Sch}
i\frac{d\Psi(t)}{dt}=\widehat{H}(t)\Psi(t),
\end{equation}
where $\widehat{H}(t)$, for every $t\in \mathbb{R}$, is assumed to be a Hermitian Hamiltonian operator acting on $\mathbb{C}^2$.

Let us consider a basis of the vector space (over the reals) of Hermitian operators on $\mathbb{C}^2$, let us say $\mathfrak{u}_2^*$. This basis can be obtained by giving a basis of the form $\{\widehat{S}_j:=\frac{1}{2}\sigma_j\}_{j=1,2,3}$, where $\sigma_1,\sigma_2,\sigma_3$ are the Pauli matrices, and the $2\times2$ identity matrix $\widehat{I}$. To simplify our further calculations, we introduce the Lie bracket on $\mathfrak{u}^*_2$ given by
\[
\co A,B \cc:=-i [A,B],\qquad \forall A,B\in \mathfrak{u}_2^*,
\]
where $[\cdot,\cdot]$ is the operator commutator for endomorphisms on $\mathbb{C}^2$. Then, the commutation relations in the given basis are
\[
\co \widehat{I}, \widehat{S}_j\cc =0,\quad \co \widehat{S}_j,\widehat{S}_k \cc =\sum^3_{l=1}\epsilon_{jkl} \widehat{S}_l,\quad k,j=1,2,3,
\]
where $\epsilon_{jkl}$, with $j,k,l=1,2,3$, are the {\it Levi-Civita symbols}. 

 In the presence of an external magnetic field $\vec{B}(t):=B(t)(B_1,B_2,B_3)$, where $B(t)$ is any $t$-dependent function,  applied to a spin $1/2$ particle and under the action of a drift term $B(t)B_0\widehat{I}$, the $t$-dependent Hamiltonian operator takes the form
\[
\widehat{H}(t)=B(t)B_0\widehat{I} + \vec{B}(t)\cdot \vec{S},
\]
where $\vec{S}:=(\widehat{S}_1,\widehat{S}_2,\widehat{S}_3)$. Note that $\widehat{H}(t)$ is Hermitian for every $t\in\mathbb{R}$.

Since $\mathbb{C}^2$ is diffeomorphic to $\mathbb{R}^4$ as a manifold, then a point $(z_1,z_2)\in \mathbb{C}^2$ can be represented by $\Psi:=(q_1,p_1,q_2,p_2)\in\mathbb{R}^4$, where $q_i= \mathfrak{Re}(z_i)$ and $p_i=\mathfrak{Im}(z_i)$ for $i=1,2$. Thus, the $t$-dependent Schr\"odinger equation (\ref{Eq:Sch}) of our problem reads
\begin{equation}\label{Eq:SchrodingerArray}
                \frac{\rm d}{{\rm d}t}\!\left[\begin{array}{c}q_1\\p_1\\q_2\\p_2\end{array}\right]\!=\!
                \frac 12 B(t)\!\left[\begin{array}{cccc}0&2B_0\!+\!B_3&-B_2&B_1\\
                -2B_0\!-\!B_3&0&-B_{1}&-B_2\\
                B_{2}&B_{1}&0&2B_0\!-\!B_3\\
                -B_{1}&B_2&-\!2B_0\!+\!B_3&0
                \end{array}\right]\!\left[\begin{array}{c}q_1\\p_1\\q_2\\p_2\end{array}\right].
\end{equation}
The manifold $\mathbb{R}\times \mathbb{C}^2\simeq \mathbb{R}^5$ is related to    a natural cosymplectic manifold $(\mathbb{R}\x\mathbb{C}^2,\omega_S:=dq_1\wedge dp_1+dq_2\wedge dp_2,\eta_S:=dt)$, where $t$ is the natural coordinate on $\mathbb{R}$ understood as a coordinate on $\mathbb{R}\x\mathbb{C}^2$. The solutions of system (\ref{Eq:SchrodingerArray}) can be geometrically described as the integral curves, parametrised by $t$, of the evolution vector field on $\mathbb{R}\x\mathbb{C}^2$ given by 
\begin{equation}\label{eq:ESQVF}
R+B(t)(B_0X_0+B_1X_1+B_2X_2+B_3X_3),
\end{equation}
where $R=\partial/\partial t$ is the Reeb vector field and $X_0,\ldots,X_3$ are the vector fields on $\mathbb{R}\x\mathbb{C}^2$ of the form
{\small
\begin{align}
                \label{Eq:BXalpha} 
                X_0 &:={p_1}\frac{\partial}{\partial
                        {q_1}}-{q_1}\frac{\partial}{\partial
                        {p_1}}+{p_2}\frac{\partial}{\partial
                        {q_2}}-{q_2}\frac{\partial}{\partial {p_2}},\,\,\,
                X_1 :=\frac 12\left({p_2}\frac{\partial}{\partial
                        {q_1}}-{q_2}\frac{\partial}{\partial
                        {p_1}}+{p_1}\frac{\partial}{\partial
                        {q_2}}-{q_1}\frac{\partial}{\partial {p_2}}\right), \\ 
                \!\!X_2 &:= \frac 12 \left( -{q_2}\frac{\partial}{\partial {q_1}}
                -{p_2}\frac{\partial}{\partial {p_1}} +{q_1}\frac{\partial}{\partial
                        {q_2}} + {p_1}\frac{\partial}{\partial {p_2}}\right),\,\,\, 
                X_3:=\frac 12\left({p_1}\frac{\partial}{\partial
                        {q_1}}-{q_1}\frac{\partial}{\partial
                        {p_1}}-{p_2}\frac{\partial}{\partial
                        {q_2}}+{q_2}\frac{\partial}{\partial {p_2}}\right).
\end{align}
}
Their commutation relations read
\[
[X_0,X_j]=0,\quad [X_1,X_2]=-X_3,\quad [X_2,X_3]=-X_1,\quad [X_3,X_1]=-X_2,\qquad j=1,2,3.
\]
The cosymplectic manifold $(\mathbb{R}\x\mathbb{C}^2,\omega_S=dq_1\wedge dp_1+dq_2\wedge dp_2,\eta_S=dt)$ allows us to write that $X_0,\ldots,X_3$ are Hamiltonian vector fields with Hamiltonian functions $h_0,\ldots,h_3$ of the form
\begin{equation*}
\begin{gathered}
h_0(\Psi)=\frac{1}{2}\langle\Psi,\widehat{I}\Psi\rangle=\dfrac{1}{2}\left(q_1^2+q_2^2+p_1^2+p_2^2 \right), \qquad h_1(\Psi)=\frac{1}{2}\langle\Psi,\widehat{S}_1\Psi\rangle=\frac 12(p_1p_2+q_1q_2), \\
h_2(\Psi)=\frac{1}{2}\langle\Psi,\widehat{S}_2\Psi\rangle=\frac 12(q_1p_2- q_2p_1), \qquad h_3(\Psi)=\frac{1}{2}\langle\Psi,\widehat{S}_3\Psi\rangle=\dfrac{1}{4}\left( p_1^2+q_1^2-p_2^2-q_2^2 \right).
\end{gathered}
\end{equation*}
The functions $h_1,h_2,h_3$ are functionally independent and $h_0^2=4(h_1^2+h_2^2+h_3^2)$.

Then, the $t$-dependent Schr\"odinger equation in coordinates \eqref{Eq:SchrodingerArray} can be associated with an evolution vector field, $E_h$, on $\mathbb{R}\x\mathbb{C}^2$ induced by the Hamiltonian function $h\in C^\infty(\mathbb{R}\x\mathbb{C}^2)$ given by
\begin{equation}\label{eq:HamSc}
h(t,\Psi):=B(t)\sum_{\alpha=0}^3B_\alpha h_\alpha(\Psi),\qquad t\in \mathbb{R},\qquad\Psi\in \mathbb{C}^2.
\end{equation}
In other words, the solutions $(q_1(t),p_1(t),q_2(t),p_2(t))$ to (\ref{Eq:SchrodingerArray}) designate integral curves $t\mapsto (t,q_1(t),p_1(t),q_2(t),p_2(t))$ of $E_{h}$.  
Let us write  $\Psi=(z_1,z_2)$, with $z_1,z_2
\in \mathbb{C}$. Then, we have a Lie group action $\Phi:U_1\times \mathbb{C}^2\ni (e^{i\theta},z_1,z_2)\mapsto (e^{-i\theta}z_1,e^{-i\theta}z_2)\in \mathbb{C}^2$. Indeed, this Lie group action gives rise to a Lie group of symmetries of (\ref{Eq:Sch}). Note that the fundamental vector fields of $\Phi$ are spanned by $X_0$ (considered as a vector field on $\mathbb{C}^2$).  
Let us consider the Lie group action on $\mathbb{R}\x\mathbb{C}^2$ with  fundamental vector fields $\langle X_0\rangle$ of the form
\[
\Phi:SO_2\times \mathbb{R}\x\mathbb{C}^2\ni(\theta;t,q_1,p_1,q_2,p_2)\mapsto(t,(R_\theta\otimes R_\theta)(q_1,p_1,q_2,p_2))\in\mathbb{R}\x\mathbb{C}^2,
\]
where $SO_2$ is the special orthogonal $2\x2$ matrix group and $R_\theta$ satisfies
\[
R_\theta=\left(\begin{array}{cc}\cos\theta&\sin\theta\\-\sin\theta&\cos\theta\end{array}\right)\in SO_2,\qquad R_\theta\left(\begin{array}{c}
     q_j  \\
     p_j
\end{array}\right)=\left(\begin{array}{cc}\cos\theta&\sin\theta\\-\sin\theta&\cos\theta\end{array}\right)\left(\begin{array}{c}q_j\\p_j\end{array}\right),\qquad j=1,2.
\]
Equivalently, one can understand $\Phi$ as a Lie group action
\[
\Phi^C:U_1\times \mathbb{R}\times \mathbb{C}^2\ni(e^{i\theta};t,z_1,z_2)\mapsto (t,e^{-i\theta}z_1,e^{-i\theta}z_2)\in\mathbb{R}\times\mathbb{C}^2.
\]
Note that the Lie group action $\Phi$ leaves invariant the Hamiltonian function (\ref{eq:HamSc}). Moreover, $\Phi$ is cosymplectic, i.e. $\Phi_g^*\omega_S=\omega_S$ and $\Phi_g^*\eta_S=\eta_S$ for every $g\in SO_2$.
Additionally, it has associated a momentum map $\mathbf{J}^\Phi: \mathbb{R}\x\mathbb{C}^2\rightarrow  \mathfrak{so}_2^*$ given by
\[
\mathbf{J}^\Phi(t,q_1,p_1,q_2,p_2):=h_0(q_1,p_1,q_2,p_2),
\]
where $\mathfrak{so}^*_2\simeq \mathbb{R}^*$. Note that $0\neq \mu\in\mathfrak{so}^*_2 $ is a regular value of ${\bf J}^\Phi$ and $\mu=0$ is not a weak regular value of ${\bf J}^{\Phi}$ because $T_{(t,0,0,0,0)}{\bf J}^{\Phi}=0$ but ${\bf J}^{\Phi-1}(0)=\{(t,0,0,0,0):t\in \mathbb{R}\}$.  Hence, $T_{(t,0,0,0,0)}{\bf J}^{\Phi-1}(0)\neq \ker T_{(t,0,0,0,0)}{\bf J}^\Phi$. Then, for $\mu\neq 0$, the $\mathbf{J}^{\Phi-1}(\mu)$ is a submanifold given by
\[
\mathbf{J}^{\Phi-1}(\mu)=\{(t,q_1,p_1,q_2,p_2)\,:\,q_1^2+p_1^2+q_2^2+p_2^2=2\mu,t\in \mathbb{R}\}=\mathbb{R}\times A_{\mu},
\]
where
\[
A_\mu=\{(q_1,p_1,q_2,p_2)\in \mathbb{R}^4\,:\,q_1^2+p_1^2+q_2^2+p_2^2=2\mu\},
\]
is a three-dimensional sphere in $\mathbb{R}^4\simeq \mathbb{C}^2$ centred at $0$ and $\sqrt{2\mu}$ can be understood as its radius. We can therefore write $A_\mu\simeq \mathbb{S}^3$, where $\mathbb{S}^3$ is the three-dimensional sphere in $\mathbb{R}^4$ with centre at $0$ and radius equal to one. Since $\mathfrak{so}_2^*$ is isomorphic to $\mathbb{R}^*$ and $SO_2$ is abelian, the coadjoint action of $SO_2$ on $\mathfrak{so}_2^*$ is trivial (every element of $SO_2$ acts as the identity in $\mathfrak{so}_2^*$) and, since $h_0$ is invariant relative to $SO_2$, then ${\bf J}^{\Phi}$  is ${\rm Ad}^*$-equivariant. Moreover, the isotropy group of every $\mu\in \mathbb{R}^*\backslash\{0\}$ is $SO_2$, i.e. $G_\mu=SO_2$ for every $\mu\neq 0$. Since $SO_2$ is diffeomorphic to the one-dimensional sphere in $\mathbb{R}^2$, i.e. the circle with radius one and centre at $0$ in $\mathbb{R}^2$, one has that
\[
(\mathbb{R}\times A_\mu)/G_\mu\simeq \mathbb{R}\times (S^3/S^1).
\]
It is known that $\mathbb{S}^1$ acting on $\mathbb{S}^3$ gives rise to a space of orbits diffeomorphic to $\mathbb{S}^2$. Hence,
\[
{\bf J}^{\Phi-1}(\mu)/G_\mu\simeq \mathbb{R}\times S^2.
\]
In particular, one has the coordinates on ${\bf J}^{\Phi-1}(\mu)$ given by $\{t,\varphi,\theta_1,\theta_2\}$  so that the points in ${\bf J}^{\Phi-1}(\mu)$ can be parametrised by
\begin{align*}
q_1=&\sqrt{2\mu}\sin \varphi \cos \theta_1,\quad p_1= \sqrt{2\mu}\sin\varphi\sin\theta_1,\\ q_2=&\sqrt{2\mu}\cos\varphi\cos\theta_2,\quad p_2=\sqrt{2\mu}\cos\varphi\sin\theta_2,
\end{align*}
with $t\in \mathbb{R}$, $\varphi\in ]0,\pi/2[$, while $\theta_1\in[0,2\pi[$ and $\theta_2\in[0,2\pi[$. Then,
\[
\iota_{\mu}^*\omega=\mu\sin (2\varphi)d\varphi\wedge d(\theta_1-\theta_2).
\]
Note that $\iota^*_\mu\omega$ is degenerate at $\varphi\in \{0,\pi/2\}$, but this is due to the fact that the chosen coordinates are not well defined at such values. A proper result with a non-degenerate $\iota^*_\mu\omega$ can be obtained by means of other, properly defined, coordinates. 
Since the Lie group action of $SO_2$ on ${\bf J}^{\Phi-1}(\mu)$ reads
\[
e^{i\theta}(t,\varphi,\theta_1,\theta_2)=(t,\varphi,\theta_1-\theta,\theta_2-\theta),
\]
then one obtains the submersion
$\pi_\mu:(t,\varphi,\theta_1,\theta_2)\in {\bf J}^{\Phi-1}(\mu)\mapsto (t,\varphi,\theta_1-\theta_2)\in (\mathbb{R}\times A_\mu)/G_\mu$.
Obviously, $\{t,\varphi,\theta_1-\theta_2\}$ are local coordinates on ${\bf J}^{\Phi-1}(\mu)/G_\mu$ and one can define
\[
\eta_\mu:=dt,\qquad \omega_\mu:=\mu\sin (2\varphi)d\varphi\wedge d(\theta_1-\theta_2).
\]
It is worth stressing that $\iota_\mu^*\omega=\pi^*_\mu\omega_\mu$.

Let us write $h_{\xi(t)}$ in coordinates
\[
h_{\xi(t)}=B(t)\left[\sum_{\alpha=0}^3B_\alpha h_\alpha\right]-(h_0-\mu)\xi(t),
\] 
for certain $B_0,B_1,B_2,B_3\in \mathbb{R}$.
Then, the critical points of $h_{\xi(t)}$ for a fixed $t$ are given by the solutions to the system of equations
\begin{equation}\label{eq::1}
\begin{gathered}
q_1 (2 B_0+B_3-2 \xi(t)/B(t) )+B_1 q_2+B_2 p_2=0,\\ q_2 (2 B_0-B_3-2 \xi(t)/B(t) )+B_1 q_1-B_2 p_1=0,\\ p_1 (2 B_0+B_3-2 \xi(t)/B(t) )+B_1 p_2-B_2 q_2=0,\\ B_1 p_1 + 2 B_0 p_2 + B_2 q_1 - p_2 (B_3 + 2 \xi(t)/B(t))=0,
\end{gathered}
\end{equation}
which amount to
\begin{equation}\label{eq::2}
B(t)\left(B_0\widehat{I}+\sum_{\alpha=1}^3B_\alpha \widehat{S}_\alpha\right)\left[\begin{array}{c}q_1+ip_1\\q_2+ip_2\end{array}\right]=\xi(t)\left[\begin{array}{c}q_1+ip_1\\q_2+ip_2\end{array}\right],
\end{equation}
which determines the  relative equilibrium points for every $t\in\mathbb{R}$ with respect to $\Phi$. Indeed, these are the common eigenvectors of the operators $\widehat{H}(t)=B(t)(B_0\widehat{I}+B_1\widehat{S}_1+B_2\widehat{S}_2+B_3\widehat{S}_3)$ for every $t\in \mathbb{R}$. Note that each $\widehat{H}(t)$ is Hermitian and only has real eigenvalues. This is natural:  relative equilibrium points relative to the Lie group action of symmetries given by multiplying by a complex non-zero number are given by the eigenvalues of the $\widehat{H}(t)$ that are common for every value of $t$.

Note that the reduced Hamiltonian system has a Hamiltonian function that can be written by means of 
\[
k_0=\mu,\quad k_1=\frac 12\mu\sin(2\varphi)\cos\theta,\quad k_2=-\frac 12\mu\sin(2\varphi)\sin\theta,\quad k_3=-\frac 12 \mu\cos(2\varphi),
\]
where $\theta:=\theta_1-\theta_2$  and the function on ${\bf J}^{\Phi-1}(\mu)/G_\mu$ of the form
\[
k_\mu(t,[\Psi]):=B(t)\sum_{\alpha=0}^3B_\alpha k_\alpha([\Psi]).
\]
Let us determine the  relative equilibrium points for  $B_0=B_1=B_2=0$ and $B_3=1$.  In view of (\ref{eq::1}) and (\ref{eq::2}), the  relative equilibrium points are given by points in $\mathbb{C}^2$ given by
\[\langle
(1,0)\rangle_\mathbb{C} \cup \langle (0,1)\rangle_{\mathbb{C}}.
\]
The stability close to the equilibrium points of the projected space are given, in our methods, by the Hessian of $k_3$ and they are undetermined by the standard criteria \cite{AM78, MS88} since the Hessian of $k_3$ is degenerated  as it depends only on $\varphi$. Geometrically, it can be proved that the evolution in $\mathbb{S}^2$ leaves invariant a Riemannian metric \cite{CCJL19} on $\mathbb{S}^2$, which involves that the reduced system is such that the evolution leaves the distance of a solution to the equilibrium point invariant over the time and the reduced equilibrium points are stable.

\section{Relative equilibrium points of \texorpdfstring{$n$-}sstate quantum system}\label{Sec:SREP}

Let us consider a more general quantum mechanical system than in the previous section, namely a system given by the $t$-dependent Schr\"odinger equation on a finite-dimensional Hilbert space $\mathbb{C}^n$ related to a $t$-dependent Hermitian Hamiltonian operator $\widehat{H}(t)$ of the form
\begin{equation}
\label{Eq::SchrodingerEq}
    i\frac{d\psi}{dt}=\widehat{
    H}(t)\psi,\qquad\forall\psi\in\mathbb{C}^n,\qquad\forall t\in\mathbb{R}.
\end{equation}
In particular, we will focus on determining  relative equilibrium points of this system. The reduction of (\ref{Eq::SchrodingerEq}) to the projective space $\mathbb{P}\mathbb{C}^n$, namely the space of linear one-dimensional spaces in $\mathbb{C}^n$, will be proven to be stable at its equilibrium points as a consequence of the same remarks of the previous section and the results given in \cite{CCJL19}.

First, let us introduce basic notions essential to describe the systems (\ref{Eq::SchrodingerEq}). The states of an $n$-level quantum system are the elements of the Hilbert space $\mathbb{C}^n$, and any Hilbert basis in $\mathbb{C}^n$ defines a real global chart on $\mathbb{C}^n$. Indeed, let $\{e_j\}_{1,\ldots,n}$ be an orthonormal basis of $\mathbb{C}^n$ relative to its canonical inner product $\langle\cdot,\cdot\rangle:\mathbb{C}^n\times \mathbb{C}^n\rightarrow  \mathbb{C}$. Then, the functions $q_j,p_j:\mathbb{C}^{n}\rightarrow  \mathbb{R}$ , with $j=1,\ldots,n$, defined by
\[
\langle e_j,\psi\rangle=:q_j(\psi)+ip_j(\psi),\qquad j=1,\ldots,n,\quad \forall\psi\in\mathbb{C}^{n},
\]
define a real global chart on $\mathbb{C}^{n}$.
Recall that $\widehat{H}(t)$ is a Hermitian Hamiltonian operator with respect to the above inner product for every $t\in\mathbb{R}$. Since $\mathbb{C}^n\simeq\mathbb{R}^{2n}$, then, at each $\tilde{\psi}\in \mathbb{C}$,  there exists an $\mathbb{R}$-linear isomorphism $\psi\in\mathbb{R}^{2n}\simeq \mathbb{C}^n\mapsto \psi_{\tilde{\psi}}\in T_{\tilde{\psi}}\mathbb{R}^{2n}\simeq T_{\tilde{\psi}}\mathbb{C}^n$, where
\[
\psi_{\tilde{\psi}}f:=\frac{d}{dt}
\bigg|_{t=0}f(\tilde{\psi}+t\psi),\qquad \forall f\in C^\infty(\mathbb{C}^{n}).
\]
Therefore, one can introduce an anti-symmetric, non-degenerated two-form $\omega$ on $\mathbb{C}^n$ of the form
\begin{equation}
\label{Eq::QuantuSForm}
    \omega_\psi(\psi_{1\tilde{\psi}},\psi_{2\tilde{\psi}}):=\mathfrak{Im}{\langle\psi_1,\psi_2\rangle},\qquad \forall\psi,\psi_1,\psi_2\in\mathbb{C}^{n}.
\end{equation}
In coordinates $\{q_j,p_j\}
_{j=1,\ldots,n}$, one has $\omega_n=\sum^n_{j=1}dq_j\wedge dp_j$. Since $\omega$ is closed, one has that $\omega_n$ is a symplectic form on $\mathbb{C}^{n}$ with Darboux coordinates $\{q_1,\ldots,q_n,p_1,\ldots,p_n\}$.
Let $\mathfrak{u}^*_n$ denote the real vector space of Hermitian operators on $\mathbb{C}^n$. Then, every observable on $\mathbb{C}^n$, namely $\widehat{A}\in\mathfrak{u}^*_n$, leads to a real function on $\mathbb{C}^{n}$ of the form
\[
f_{\widehat{A}}(\psi):=\frac{1}{2}\langle\psi,\widehat{A}\psi\rangle,\qquad \forall \psi\in\mathbb{C}^{n},
\]
giving rise to the Hamiltonian vector field
\[
X_{\widehat{A}}:=\{\cdot,f_{\widehat{A}}\},
\]
where $\{f,g\}:=\omega(X_f,X_g)$ for every real valued functions $f,g\in C^\infty(\mathbb{C}^{n})$ is a Poisson bracket.

The integral curves of the $t$-dependent Hamiltonian vector field $X_{\widehat{H}(t)}$ associated with $f_{\widehat{H}(t)}$ correspond to the solutions of the $t$-dependent Schr\"odinger equation \eqref{Eq::SchrodingerEq} (see  \cite{CCJL19} and references therein for details).

Now, let us proceed to the cosymplectic setting. The manifold $\mathbb{C}^n$ can be related to the manifold $(t,z_1,\ldots,z_n)\in\mathbb{R}\x\mathbb{C}^{n}\simeq\mathbb{R}^{2n+1}\ni(t,q_1,p_1,\ldots,q_n,p_n)$ and is endowed with the natural cosymplectic structure $(\mathbb{R}\x\mathbb{C}^{n},{\rm pr}_{\mathbb{C}^{n}}^*\omega_n,dt)$ where ${\rm pr}_{\mathbb{C}^{n}}:\mathbb{R}\x\mathbb{C}^{n}\rightarrow \mathbb{C}^{n}$ is the canonical projection onto the second factor and $t$ is the pull-back to $\mathbb{R}\times \mathbb{C}^n$ of the natural variable in $\mathbb{R}$. The solutions of \eqref{Eq::SchrodingerEq} are the curves $z(t)$ such that $(t,z(t))$ is an integral curve of the evolution vector field $E_{f_{\widehat{H}(t)}}=R+X_{f_{\widehat{H}(t)}}$, where $R=\frac{\partial}{\partial t}$ is the Reeb vector field.

The Lie group action of the form
\begin{align*}
\Phi_n:SO_2\x\mathbb{R}\x\mathbb{C}^{n}&\rightarrow  \mathbb{R}\x\mathbb{C}^{n},\\
(R_\theta,t,q_1,p_1,\ldots,q_n,p_n)&\mapsto(t, (R_\theta\otimes\ldots\otimes R_\theta)(q_1,p_1,\ldots,q_n,p_n)),
\end{align*}
gives rise to a Lie group of symmetries of $E_{\widehat{H}(t)}$. Moreover, the action of each element of $SO_2$ leaves invariant the canonical inner product on $\mathbb{C}^{n}$. Therefore, $\Phi_n$ leaves $\omega$ and $\eta$ invariant and hence $\Phi_n$ is a cosymplectic Lie group action. Note that $\Phi_n$ also leaves invariant the Hamiltonian function $f_{\widehat{H}(t)}$.

A cosymplectic momentum map $\mathbf{J}^{\Phi_n}$ is given by
\[
\mathbf{J}^{\Phi_n}:\mathbb{R}\x\mathbb{C}^{n}\ni(t,q_1,p_1,\ldots,q_n,p_n)\mapsto\frac{1}{2}\sum^n_{i=1}(q^2_i+p^2_i)\in\mathfrak{so}^*_2,
\]
where $\mathfrak{so}^*_2\simeq\mathbb{R}^*$. Similarly, $\mu\in\mathfrak{so}_2^*$ is a regular value of $\mathbf{J}^{\Phi_n}$ if $\mu\neq 0$. Otherwise, $T\mathbf{J}^{\Phi_n-1}(0)\neq \ker T\mathbf{J}^{\Phi_n}|_{\mathbf{J}^{\Phi-1}(0)}$ and hence $\mu=0$ is not a regular value of $\mathbf{J}^{\Phi_n}$. Thus, for $\mu\neq 0$, one has
\[
\mathbf{J}^{\Phi_n-1}(\mu)=\left\{(t,q_1,p_1,\ldots,q_n,p_n)\,:\,\sum^n_{i=1}(q_i^2+p^2_i)=2\mu,\,t\in\mathbb{R}\right\}=\mathbb{R}\times \mathbb{S}^{2n-1}.
\]
Since the coadjoint action of $SO_2$ on $\mathfrak{so}^*_2$ is trivial, a cosymplectic momentum map $\mathbf{J}^{\Phi_n}$ is ${\rm Ad}^*$-equivariant. Therefore, the isotropy group of every non-zero $\mu\in\mathfrak{so}^*_2$ is $G_\mu=SO_2$. Theorem \ref{Th:CoSymRed} and Corollary \ref{Cor::De} yield that the manifold
\[
(\mathbb{R}\x\mathbb{S}^{2n-1})/SO_2\simeq\mathbb{R}\x\left(\mathbb{S}^{2n-1}/\mathbb{S}^1\right),
\]
is a cosymplectic manifold. It is known that $\mathbb{S}^1$ acting on $\mathbb{S}^{2n-1}$ gives rise to a space of orbits diffeomorphic to the projective space  $\mathbb{PC}^{n}\simeq \mathbb{C}^n_\times/ \mathbb{C}_\times$, where $\mathbb{C}_\times:=\mathbb{C}\backslash \{0\}$ and $\mathbb{C}^n_\times:=\mathbb{C}^n\backslash \{0\}$, namely the space of one-dimensional subspaces in  $\mathbb{C}^n$ \cite{AM78}. Hence,
\[
M^\Delta_{\mu}=\mathbf{J}^{\Phi_n-1}(\mu)/G_\mu\simeq\mathbb{R}\x\mathbb{PC}^{n},
\]
for every non-zero $\mu\in\mathfrak{so}^*_2$.

From Proposition \ref{Prop:SingPPCos}, it follows that the  relative equilibrium points have the form
\[
(t,\psi(t))=\Phi_{n\,g(t)}(t,\psi_e),\qquad g(t)\in G_{\mu_e}=U_1,
\]
where $g(t)\in SO_2\simeq U_1$ is the evolution operator of the Schr\"odinger equation \eqref{Eq::SchrodingerEq} and $\Psi_e$ is an eigenvector for each $\widehat{H}(t)$. Thus, the  relative equilibrium points relative to the Lie group action of symmetries are given by the eigenvalues of a Hamiltonian operator $\widehat{H}(t)$ that are common for every $t\in \mathbb{R}$.

\section{Cosymplectic-to-symplectic reduction and gradient relative equilibrium points}
\label{Sec::GradRelEqPoints}
This section presents a new cosymplectic-to-symplectic reduction and a new associated type of relative equilibrium points: the gradient relative equilibrium points. Our cosymplectic-to-symplectic reduction does not rely on using Lie symmetries taking values in the kernel of the one-form $\eta$ of a cosymplectic manifold as standard cosymplectic Marsden--Weinstein reductions do, allowing for a broader application of this method in physics. Moreover, our reduction is more general then the cosymplectic-to-symplectic reduction devised by Albert in \cite[pg. 640]{Al89}, which is here retrieved as a particular case. Finally, our reduction is a modification of a Poisson reduction that cannot be fully described by means of the standard Poisson theory for a number of reasons to be described afterwards. 

Let us first recall the cosymplectic-to-symplectic reduction devised by Albert \cite[pg. 640]{Al89}.

\begin{theorem} Let $(M,\omega,\eta)$ be a cosymplectic manifold  let $Y$ be a vector field on $M$ such that

\begin{equation}\label{eq:ReAl}
\iota_Y\eta=1,\qquad \iota_Y\omega=-df
\end{equation}
for a certain $f\in C^\infty(M)$. Then, if the space $M/Y$ of orbits of $Y$ in $M$ is a manifold and $\pi_Y:M\rightarrow M/Y$ is a submersion, there exists a symplectic form $\omega_Y$ on $M/Y$ and a unique function $f_Y\in C^\infty(M/Y)$ such that $R$ projects onto the Hamiltonian vector field $X_{f_Y}$ on $M/Y$ relative to $\omega_Y$ and $\pi_Y^*f_Y=f$.
\end{theorem}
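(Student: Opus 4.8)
The plan is to reduce everything to finding the right basic two-form on $M$, since the obvious candidate $\omega$ does \emph{not} descend to $M/Y$: indeed $\iota_Y\omega=-df$ is in general nonzero, so $\omega$ is not horizontal for $\pi_Y$. The first thing I would record is a consequence of the hypotheses that makes the whole construction work, namely that $Rf=0$. This follows by contracting $\iota_Y\omega=-df$ with $R$: using $\iota_R\iota_Y=-\iota_Y\iota_R$ and $\iota_R\omega=0$ one gets $-Rf=\iota_R\iota_Y\omega=-\iota_Y\iota_R\omega=0$. In the same spirit I would check $Yf=-\iota_Y\iota_Y\omega=0$, and, via Cartan's formula together with $d\omega=0$ and $d\eta=0$, that $\mathcal{L}_Y\omega=d(-df)=0$ and $\mathcal{L}_Y\eta=d(1)=0$, so that $Y$ is an infinitesimal cosymplectomorphism. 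It is also clarifying to note that $Y=R-X_f$, i.e. $Y=E_{-f}$.

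Next I would introduce the corrected two-form $\widetilde{\omega}:=\omega-df\wedge\eta$ and show it is $Y$-basic and closed. Closedness is immediate from $d\omega=0$ and $d\eta=0$. For horizontality, $\iota_Y(df\wedge\eta)=(Yf)\eta-(\iota_Y\eta)df=-df$, whence $\iota_Y\widetilde{\omega}=-df-(-df)=0$; and $\mathcal{L}_Y\widetilde{\omega}=0$ because $\mathcal{L}_Y\omega=0$, $\mathcal{L}_Y\eta=0$ and $\mathcal{L}_Y df=d(Yf)=0$. Since $\pi_Y$ is a surjective submersion whose fibres are the connected orbits of $Y$, a basic closed form descends uniquely, so there is a unique closed $\omega_Y\in\Omega^2(M/Y)$ with $\pi_Y^*\omega_Y=\widetilde{\omega}$; injectivity of $\pi_Y^*$ also supplies uniqueness throughout.

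The key step, and the one I expect to be the main obstacle, is proving that $\omega_Y$ is nondegenerate. Here the idea is to test $\widetilde{\omega}$ against the volume form. Because $(df\wedge\eta)^2=0$, one has $\widetilde{\omega}^{\,n}=\omega^n-n\,\omega^{n-1}\wedge df\wedge\eta$, and wedging with $\eta$ annihilates the second term, giving $\widetilde{\omega}^{\,n}\wedge\eta=\omega^n\wedge\eta$, which is a volume form by the cosymplectic condition. Since $\iota_Y\eta=1$, the distribution $\ker\eta$ is everywhere transverse to $\langle Y\rangle$ and hence maps isomorphically onto $T(M/Y)$ under $T\pi_Y$; the nonvanishing of $\widetilde{\omega}^{\,n}\wedge\eta$ then forces $\widetilde{\omega}^{\,n}|_{\ker\eta}\neq 0$, i.e. $\widetilde{\omega}$ restricts to a nondegenerate form on this complement, so $\omega_Y$ is symplectic.

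Finally I would produce the dynamics. The Reeb field projects because $[R,Y]=[R,R-X_f]=-[R,X_f]=0$, using $Rf=0$ and Proposition \ref{Prop::gradXR}; set $R_Y:=\pi_{Y*}R$. Since $Yf=0$, the function $f$ is constant on the fibres and descends to a unique $f_Y$ with $\pi_Y^*f_Y=f$. To identify $R_Y$ as the Hamiltonian field of $f_Y$, I would pull back, using that $R$ and $R_Y$ are $\pi_Y$-related: $\pi_Y^*(\iota_{R_Y}\omega_Y)=\iota_R\widetilde{\omega}=\iota_R\omega-[(Rf)\eta-(\iota_R\eta)df]=df=\pi_Y^*(df_Y)$, and injectivity of $\pi_Y^*$ yields $\iota_{R_Y}\omega_Y=df_Y$, i.e. $R_Y=X_{f_Y}$ on $(M/Y,\omega_Y)$. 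This completes the argument, the only genuinely creative input being the choice of the basic correction $\widetilde{\omega}=\omega-df\wedge\eta$ and the volume computation certifying its nondegeneracy on the quotient.
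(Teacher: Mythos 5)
Your proof is correct, and it takes a genuinely different route from the paper: the paper does not prove this statement directly at all, but instead recalls it from Albert and later recovers it as the special case $E_h=R$, $\Upsilon=t-f$, $Y=\nabla(t-f)$ of the more general cosymplectic-to-symplectic reduction of Proposition \ref{Th::GradReduction}, whose proof goes through projecting the Poisson bivector $\Lambda_{\omega,\eta}$ along $\nabla\Upsilon$ and a rank/annihilator argument for nondegeneracy. Your argument is direct and self-contained: the observations $Rf=0$, $Yf=0$, $Y=R-X_f$ are exactly the ones the paper records after the statement, and your key move --- replacing $\omega$ by the $Y$-basic closed form $\widetilde{\omega}=\omega-df\wedge\eta$ and certifying nondegeneracy of its descent via $\widetilde{\omega}^{\,n}\wedge\eta=\omega^n\wedge\eta$ together with the isomorphism $T\pi_Y|_{\ker\eta}$ --- replaces the paper's Poisson-bivector machinery entirely. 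What your approach buys is a proof that stays at the level of differential forms, needs no Darboux time coordinate $t$ (which is only locally defined when $\eta$ is not exact, a point the paper's specialisation glosses over), and identifies $\pi_{Y*}R=X_{f_Y}$ by a one-line pullback computation rather than via the integral formula for the reduced Hamiltonian; what the paper's route buys is generality, since Proposition \ref{Th::GradReduction} handles reductions of arbitrary evolution vector fields $E_h$ by gradient vector fields, of which this theorem is only the case $h=0$. The only points worth making explicit in a written version are the standard facts you implicitly use: that the orbits of $Y$ are connected (so $Y$-invariant, $Y$-horizontal forms and $Y$-invariant functions descend uniquely), and that $\iota_Y\widetilde{\omega}=0$ is what kills the cross terms in the evaluation of $\widetilde{\omega}^{\,n}\wedge\eta$ on a basis adapted to $\ker\eta\oplus\langle Y\rangle$.
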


It is worth noting that the conditions (\ref{eq:ReAl}) imply that $Y=R-X_f$, i.e. $Y$  is an evolution vector field and $Rf=0$. Hence, this reduction is quite restrictive, although it allows for a reduction relative to a vector field that does not take values in $\ker \eta$.

Recall that a cosymplectic manifold $(M,\omega,\eta)$ is such that the restriction of $\omega$ to each leaf of the integral distribution $\ker \eta$ is symplectic. Hence, on each leaf, one can define a Poisson bivector associated with the restriction of $\omega$ to it, giving rise to a Poisson bivector on $M$. In fact, this is the Poisson bivector associated with the Poisson bracket (\ref{Eq::PoissonStructure}) naturally defined on cosymplectic manifolds. Consequently, one has the following result.

\begin{proposition} Every cosymplectic manifold $(M,\omega,\eta)$ gives rise to a unique Poisson bivector $\Lambda_{\omega,\eta}$ on $M$ that is tangent to the leaves of $\ker \eta$ and becomes the Poisson bivector associated with $\omega$ on each such a leaf.
\end{proposition}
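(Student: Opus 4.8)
The plan is to take the bivector $\Lambda_{\omega,\eta}$ already introduced in \eqref{eq:PoiCosym} as the candidate realising the claim, and then to verify its two defining properties together with the asserted uniqueness. First I would record that $\ker\eta$ is an integrable distribution. Since $\eta$ is nowhere vanishing, $\ker\eta$ is a smooth corank-one distribution, and for any vector fields $X,Y$ taking values in $\ker\eta$ the closedness of $\eta$ gives
\[
\eta([X,Y]) = X(\eta(Y)) - Y(\eta(X)) - d\eta(X,Y) = 0,
\]
so $\ker\eta$ is involutive and the Frobenius theorem produces a foliation $\mathcal{F}$ whose leaves $L$ satisfy $T_xL = \ker\eta_x$ for every $x\in L$.

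Next I would check that $\omega$ restricts to a symplectic form on each leaf. The restriction $\omega|_L$ is closed because $\omega$ is closed and pull-back commutes with the exterior derivative. For non-degeneracy, observe that $\iota_R\omega=0$ together with the constant rank $2n$ of $\omega$ on the $(2n+1)$-dimensional manifold $M$ forces $\ker\omega=\langle R\rangle$; hence the radical of $\omega|_{\ker\eta}$ is contained in $\ker\omega\cap\ker\eta$, and since $\iota_R\eta=1$ gives $R\notin\ker\eta$, this intersection is $\{0\}$. Thus $(L,\omega|_L)$ is a symplectic manifold and carries its canonical Poisson bivector $\Lambda_L$, determined as the inverse of the flat map of $\omega|_L$.

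Then I would verify that $\Lambda_{\omega,\eta}$ realises these leaf bivectors, working in cosymplectic Darboux coordinates $\{t,q^1,\dots,q^n,p_1,\dots,p_n\}$, in which $\eta=dt$, $\omega=\sum_i dq^i\wedge dp_i$ and $\Lambda_{\omega,\eta}=\sum_i \partial_{q^i}\wedge\partial_{p_i}$. Here the leaves of $\ker\eta=\ker dt$ are the slices $\{t=\mathrm{const}\}$, and the image of the sharp map $\Lambda_{\omega,\eta}^\sharp$ is spanned by $\partial_{q^i},\partial_{p_i}$, hence tangent to the leaves. On each slice $\omega|_L=\sum_i dq^i\wedge dp_i$ is the standard symplectic form, whose Poisson bivector is exactly $\sum_i\partial_{q^i}\wedge\partial_{p_i}=\Lambda_{\omega,\eta}|_L$, so property two holds. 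That $\Lambda_{\omega,\eta}$ is a genuine Poisson bivector, i.e. $[\Lambda_{\omega,\eta},\Lambda_{\omega,\eta}]=0$, is already guaranteed since it is the bivector associated with the Poisson bracket \eqref{Eq::PoissonStructure}.

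Finally, for uniqueness I would argue pointwise: any bivector $\Lambda'$ tangent to the leaves satisfies $\Lambda'_x\in\bigwedge^2 T_xL$ for $x\in L$, and the requirement that $\Lambda'|_L$ coincide with the Poisson bivector of $\omega|_L$ fixes $\Lambda'_x$ uniquely, because the Poisson bivector of a symplectic form is the inverse of its flat map; hence $\Lambda'=\Lambda_{\omega,\eta}$. I expect the only point demanding genuine care to be the passage from the leafwise symplectic bivectors to a globally smooth bivector on $M$, and this is settled precisely by the Darboux computation above, which exhibits the common local expression $\sum_i\partial_{q^i}\wedge\partial_{p_i}$ and thereby simultaneously establishes smoothness and the identification with $\Lambda_{\omega,\eta}$.
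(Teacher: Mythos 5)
Your proof is correct and follows essentially the same route as the paper, which justifies the proposition by recalling that $\omega$ restricts to a symplectic form on each leaf of $\ker\eta$ and then exhibiting $\Lambda_{\omega,\eta}=\sum_{i=1}^n\partial_{q^i}\wedge\partial_{p_i}$ in Darboux coordinates. You merely supply the details the paper leaves implicit (involutivity of $\ker\eta$, non-degeneracy of the leafwise restriction, and the pointwise uniqueness argument), all of which are accurate.
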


In Darboux coordinates $\{t,x^i,p_i\}$ for $(M,\omega,\eta)$, one obtains
\[
\Lambda_{\omega,\eta}=\sum_{i=1}^n\frac{\partial}{\partial x^i}\wedge\frac{\partial}{\partial p_i}.
\]
Using Darboux coordinates, we obtain that $R$ and every $X_h$ are Lie symmetries of $\Lambda_{\omega,\eta}$. In spite of that, $R$ and every $\nabla h$, for $h\in C^\infty(M)$ with $Rh\neq 0$, are not Hamiltonian vector fields relative to the Poisson bivector $\Lambda_{\omega,\eta}$. In fact, they do not belong to the image of the induced vector bundle morphism $\Lambda_{\omega,\eta}^\sharp: \vartheta_p\in T^*M\mapsto (\Lambda_{\omega,\eta})_p(\vartheta_p,\cdot)\in TM$. This implies that the standard Marsden--Weinstein reduction cannot be directly applied to reduce the dynamics of such vector fields. Moreover, the special form of $\Lambda_{\omega,\eta}$ has special properties that are not shared by general Poisson bivectors and need to be specifically studied. Before we prove the main result of this section, let us first show the following lemma.

\begin{lemma}
\label{Lemm::LieBivector}
    Let $\Lambda_{\omega,\eta}$ be the Poisson bivector on $M$ associated with $(M,\omega,\eta)$. Then,
    \[
    \mathcal{L}_{\nabla \Upsilon}\Lambda_{\omega,\eta}=0,
    \]
    for $\Upsilon\in C^\infty(M)$ such that $\iota_{d(R\Upsilon)}\Lambda_{\omega,\eta}=0$.
\end{lemma}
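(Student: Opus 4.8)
The plan is to reduce the statement to the canonical splitting of a gradient vector field and then to exploit the two facts recorded just before the lemma: that every cosymplectic Hamiltonian vector field is a Lie symmetry of $\Lambda_{\omega,\eta}$, and that $\mathcal{L}_R\Lambda_{\omega,\eta}=0$. By Proposition \ref{Prop::gradXR} we may write $\nabla\Upsilon=X_\Upsilon+(R\Upsilon)R$, so by $\mathbb{R}$-linearity of the Lie derivative in its vector-field slot,
\[
\mathcal{L}_{\nabla\Upsilon}\Lambda_{\omega,\eta}=\mathcal{L}_{X_\Upsilon}\Lambda_{\omega,\eta}+\mathcal{L}_{(R\Upsilon)R}\Lambda_{\omega,\eta}.
\]
The first summand vanishes immediately: $X_\Upsilon$ is Hamiltonian relative to $\{\cdot,\cdot\}_{\omega,\eta}$ and hence, by the Jacobi identity (equivalently $[\Lambda_{\omega,\eta},\Lambda_{\omega,\eta}]=0$ in the Schouten bracket), is a Lie symmetry of $\Lambda_{\omega,\eta}$. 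Thus the whole problem is to control the second summand, where the function $g:=R\Upsilon$ rescales the Reeb field.

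For this I would first establish a Leibniz-type identity for the Lie derivative of a bivector along a rescaled vector field: for every $X\in\mathfrak{X}(M)$, every $g\in C^\infty(M)$, and every bivector $\Lambda$,
\[
\mathcal{L}_{gX}\Lambda=g\,\mathcal{L}_X\Lambda-X\wedge\iota_{dg}\Lambda.
\]
This follows by writing $\Lambda$ locally as a sum of decomposables $Y_a\wedge Z_a$, expanding $\mathcal{L}_{gX}(Y_a\wedge Z_a)$ through $[gX,Y]=g[X,Y]-(Yg)X$, and recognising that $\sum_a\big((Y_ag)Z_a-(Z_ag)Y_a\big)=\iota_{dg}\Lambda$. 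Applying this with $X=R$ and $g=R\Upsilon$, and using $\mathcal{L}_R\Lambda_{\omega,\eta}=0$, gives
\[
\mathcal{L}_{(R\Upsilon)R}\Lambda_{\omega,\eta}=-R\wedge\iota_{d(R\Upsilon)}\Lambda_{\omega,\eta}.
\]
The hypothesis $\iota_{d(R\Upsilon)}\Lambda_{\omega,\eta}=0$ now forces this term to vanish, and combining with the previous paragraph yields $\mathcal{L}_{\nabla\Upsilon}\Lambda_{\omega,\eta}=0$, as claimed.

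The argument is essentially mechanical once the splitting of $\nabla\Upsilon$ is in place; the one genuine computation is the Leibniz identity above, and I expect the sign bookkeeping in the interior-product term $X\wedge\iota_{dg}\Lambda$ to be the only place that needs care. As a sanity check I would rerun the computation in Darboux coordinates, where $R=\partial/\partial t$ and $\Lambda_{\omega,\eta}=\sum_i\partial_{x^i}\wedge\partial_{p_i}$, so that $[R,\partial_{x^i}]=[R,\partial_{p_i}]=0$ and a direct expansion gives $\mathcal{L}_{(R\Upsilon)R}\Lambda_{\omega,\eta}=R\wedge X_{R\Upsilon}$ with $X_{R\Upsilon}=\sum_i\big(\partial_{p_i}(R\Upsilon)\,\partial_{x^i}-\partial_{x^i}(R\Upsilon)\,\partial_{p_i}\big)=-\iota_{d(R\Upsilon)}\Lambda_{\omega,\eta}$; the hypothesis states precisely that this cosymplectic Hamiltonian vector field of $R\Upsilon$ is zero, confirming the coordinate-free conclusion.
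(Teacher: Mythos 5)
Your proof is correct and follows essentially the same route as the paper's: both split $\nabla\Upsilon=X_\Upsilon+(R\Upsilon)R$, kill the first term because $X_\Upsilon$ is a Lie symmetry of $\Lambda_{\omega,\eta}$, and dispose of the second via the Leibniz expansion $\mathcal{L}_{(R\Upsilon)R}\Lambda_{\omega,\eta}=(R\Upsilon)\mathcal{L}_R\Lambda_{\omega,\eta}+(\iota_{d(R\Upsilon)}\Lambda_{\omega,\eta})\wedge R$ together with $\mathcal{L}_R\Lambda_{\omega,\eta}=0$ and the hypothesis. The only cosmetic differences are that the paper justifies $\mathcal{L}_{X_\Upsilon}\Lambda_{\omega,\eta}=0$ by restricting to the symplectic leaves of $\ker\eta$ while you invoke the fact recorded just before the lemma, and your sign $-R\wedge\iota_{d(R\Upsilon)}\Lambda_{\omega,\eta}$ agrees with the paper's $(\iota_{d(R\Upsilon)}\Lambda_{\omega,\eta})\wedge R$.
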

\begin{proof}

Recall that $\nabla \Upsilon=(R\Upsilon)R+X_{\Upsilon}$. Since $\iota_{X_\Upsilon}\eta=0$, one has that $X_\Upsilon$ is tangent to the leaves of the integrable distribution $\ker\eta$. Moreover, the restriction of $X_\Upsilon$ to one of the leaves of the distribution $\ker\eta $ is a Hamiltonian vector field relative to the restriction of $\Lambda_{\omega,\eta}$ to such a leaf, which is symplectic. Indeed, for a  vector field $X$ taking values in $\ker \eta$, one has that $\iota_X\iota_{X_\Upsilon}\omega=X\Upsilon$ and then, on each integral leaf of $\ker \eta$, one has that $\iota_{X_\Upsilon}\omega=d\Upsilon_t$, where $\Upsilon_t$ is the restriction of $\Upsilon$ to the particular integral leaf of $\ker \eta$. Hence, $\mathcal{L}_{X_\Upsilon}\Lambda_{\omega,\eta}=0$. The assumption $\iota_{d(R\Upsilon)}\Lambda_{\omega,\eta}=0$ yields that $$\mathcal{L}_{(R\Upsilon)R}\Lambda_{\omega,\eta}=(R\Upsilon)\mathcal{L}_R\Lambda_{\omega,\eta}+(\iota_{d(R\Upsilon)}\Lambda_{\omega,\eta})\wedge R=0,$$ and the statement follows.
\end{proof}

Lemma \ref{Lemm::LieBivector} could be proved in a coordinate-dependent manner by using Darboux coordinates, but the above proof is intrinsic and illustrates more clearly the geometric properties of cosymplectic manifolds. 
Now, let us explain one of the main results of this section.

\begin{proposition}{\bf (The cosymplectic-to-symplectic reduction theorem)}
\label{Th::GradReduction}Let $(M,\omega,\eta)$ be a cosymplectic manifold with a Reeb vector field $R$. Let $\Upsilon\in C^\infty(M)$ be such that $\iota_{d(R\Upsilon)}\Lambda_{\omega,\eta}=0$ and $R\Upsilon\neq 0$ at any point of $M$. Assume that $M/\nabla \Upsilon$ is a manifold and $\pi_\Upsilon:M\rightarrow M/\nabla\Upsilon$ is a submersion. Then, $\Lambda_{\omega,\eta}$ projects onto a bivector field $\Lambda_\Upsilon$ on  $M/\nabla \Upsilon$ giving rise to a symplectic manifold. Moreover, if $h\in C^\infty(M)$ is such that $[\nabla \Upsilon, E_h]=0$, then $E_h$ projects onto a vector field $Y_k$ on $M/\nabla \Upsilon$  via $
\pi_\Upsilon$ and becomes a Hamiltonian vector field relative to the symplectic form induced by $\Lambda_\Upsilon$ on $M/\nabla \Upsilon$. In this latter case, $R\Upsilon$ is a constant and $Y_k$ admits a Hamiltonian function $k
\in C^\infty(M/\nabla \Upsilon)$ defined uniquely  by 
$$
\pi_\Upsilon^*k=h-\Upsilon/c-\int^t [(\nabla\Upsilon)(h-\Upsilon/c)]dt.
$$
\end{proposition}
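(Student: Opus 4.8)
The plan is to prove the three assertions in order, working throughout with the splitting $TM=\ker\eta\oplus\langle R\rangle$ and with Lemma \ref{Lemm::LieBivector}, which turns the hypothesis $\iota_{d(R\Upsilon)}\Lambda_{\omega,\eta}=0$ into $\mathcal{L}_{\nabla\Upsilon}\Lambda_{\omega,\eta}=0$. First I would descend $\Lambda_{\omega,\eta}$: for $\bar f,\bar g\in C^\infty(M/\nabla\Upsilon)$ the pullbacks $\pi_\Upsilon^*\bar f,\pi_\Upsilon^*\bar g$ are $\nabla\Upsilon$-invariant, so the function $\Lambda_{\omega,\eta}(d\pi_\Upsilon^*\bar f,d\pi_\Upsilon^*\bar g)$ is $\nabla\Upsilon$-invariant as well (using $\mathcal{L}_{\nabla\Upsilon}\Lambda_{\omega,\eta}=0$), hence descends and defines $\Lambda_\Upsilon(d\bar f,d\bar g)$; thus $\pi_\Upsilon$ becomes a Poisson map onto $(M/\nabla\Upsilon,\Lambda_\Upsilon)$. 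For non-degeneracy I would exploit that $\mathrm{Im}\,\Lambda_{\omega,\eta}^\sharp=\ker\eta$ and $\ker\Lambda_{\omega,\eta}^\sharp=\langle\eta\rangle$. If $\Lambda_\Upsilon^\sharp\bar\alpha=0$, then $\Lambda_{\omega,\eta}^\sharp(\pi_\Upsilon^*\bar\alpha)$ is vertical, i.e. proportional to $\nabla\Upsilon=X_\Upsilon+(R\Upsilon)R$; but a vector of $\ker\eta$ proportional to $\nabla\Upsilon$ must vanish, since $\iota_{\nabla\Upsilon}\eta=R\Upsilon\neq0$. Hence $\pi_\Upsilon^*\bar\alpha\in\ker\Lambda_{\omega,\eta}^\sharp=\langle\eta\rangle$, and because a basic form annihilates the vertical $\nabla\Upsilon$ while $\iota_{\nabla\Upsilon}\eta\neq0$, we get $\pi_\Upsilon^*\bar\alpha=0$ and $\bar\alpha=0$. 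As $\dim(M/\nabla\Upsilon)=2n$, the bivector $\Lambda_\Upsilon$ is non-degenerate and yields a symplectic form $\omega_\Upsilon:=\Lambda_\Upsilon^{-1}$.

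For the dynamics, $[\nabla\Upsilon,E_h]=0\in\langle\nabla\Upsilon\rangle$ shows at once that $E_h$ is $\pi_\Upsilon$-projectable, defining $Y_k:=\pi_{\Upsilon*}E_h$. To prove that $R\Upsilon$ is constant I would expand $[\nabla\Upsilon,E_h]=[X_\Upsilon+(R\Upsilon)R,R+X_h]$. The hypothesis $\iota_{d(R\Upsilon)}\Lambda_{\omega,\eta}=0$ says $d(R\Upsilon)\in\ker\Lambda_{\omega,\eta}^\sharp=\langle\eta\rangle$, so $d(R\Upsilon)=(R^2\Upsilon)\eta$; adapting the computation in the proof of Proposition \ref{Prop::gradXR} this gives $\iota_{[X_\Upsilon,R]}\omega=(R^2\Upsilon)\eta-d(R\Upsilon)=0$ and hence $[X_\Upsilon,R]=0$, while $X_h(R\Upsilon)=0$ (as $X_h\in\ker\eta$) and $[X_h,R]\in\ker\eta$. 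Using $X_{\{\Upsilon,h\}_{\omega,\eta}}=-[X_\Upsilon,X_h]$ from \eqref{Eq::AntiMorphism}, the bracket collapses to $[\nabla\Upsilon,E_h]=-X_{\{\Upsilon,h\}_{\omega,\eta}}+(R\Upsilon)[R,X_h]-(R^2\Upsilon)R$, whose $\langle R\rangle$-component is $-(R^2\Upsilon)R$. Directness of $TM=\ker\eta\oplus\langle R\rangle$ forces $R^2\Upsilon=0$, so $d(R\Upsilon)=0$ and, $M$ being connected, $R\Upsilon=:c$ is constant.

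With $c$ constant, $\nabla\Upsilon=X_\Upsilon+cR$ gives $E_h=\tfrac1c\nabla\Upsilon+X_{F}$ with $F:=h-\Upsilon/c$, and projecting kills $\nabla\Upsilon$, so $Y_k=\pi_{\Upsilon*}X_F$. Exhibiting a Hamiltonian for $Y_k$ is the step I expect to be the main obstacle: $F$ is the natural candidate, yet it is not $\nabla\Upsilon$-invariant and so does not descend. The remedy rests on the observation that $[\nabla\Upsilon,E_h]=0$ yields $[\nabla\Upsilon,X_F]=0$, which together with $\mathcal{L}_{\nabla\Upsilon}\Lambda_{\omega,\eta}=0$ and the identity $[\nabla\Upsilon,X_F]=X_{\nabla\Upsilon F}$ forces $X_{\nabla\Upsilon F}=0$; thus $\nabla\Upsilon(F)$ is a Casimir of $\Lambda_{\omega,\eta}$, i.e. a function of the Reeb coordinate $t$ alone. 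Subtracting from $F$ a suitably normalised $t$-antiderivative of $\nabla\Upsilon(F)$ therefore produces a function annihilated by $\nabla\Upsilon$ — hence basic, defining $k\in C^\infty(M/\nabla\Upsilon)$ via the displayed formula for $\pi_\Upsilon^*k$ — while differing from $F$ only by a Casimir, so that its Hamiltonian vector field remains $X_F$. Since $\pi_\Upsilon$ is Poisson and $\pi_\Upsilon^*k$ is basic, $Y_k=\pi_{\Upsilon*}X_{\pi_\Upsilon^*k}$ is precisely the Hamiltonian vector field of $k$ relative to $\omega_\Upsilon$. The delicate point is exactly that the correction term alters $F$ only by a Reeb-time function, which is guaranteed solely by the Casimir property just established.
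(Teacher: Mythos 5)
Your argument is correct and, for the projection of $\Lambda_{\omega,\eta}$ and the construction of the reduced Hamiltonian, essentially coincides with the paper's proof: both reduce the problem to showing that $\nabla\Upsilon(h-\Upsilon/c)$ is a Casimir of $\Lambda_{\omega,\eta}$ (the paper gets this from $0=[E_h,\nabla\Upsilon]=X_{-cRh-\{h,\Upsilon\}_{\omega,\eta}}$, you from the identity $[\nabla\Upsilon,X_F]=X_{\nabla\Upsilon F}$, which is the same computation packaged more intrinsically), and both then correct $F=h-\Upsilon/c$ by a $t$-antiderivative to make it basic without changing its Hamiltonian vector field. Where you genuinely diverge is the non-degeneracy of $\Lambda_\Upsilon$: the paper argues by contradiction with a dimension count, observing that a degenerate $\Lambda_\Upsilon$ would force the $\Lambda_{\omega,\eta}$-orthogonal of $\langle\nabla\Upsilon\rangle$ to contain two independent pulled-back covectors plus $\eta$, exceeding the bound $\dim\le k+1$ for the orthogonal of a codimension-$k$ subspace; you instead use the exact description $\ker\Lambda_{\omega,\eta}^\sharp=\langle\eta\rangle$, $\mathrm{Im}\,\Lambda_{\omega,\eta}^\sharp=\ker\eta$, together with $\iota_{\nabla\Upsilon}\eta=R\Upsilon\neq 0$ applied twice (once to rule out a vertical image, once to rule out a basic multiple of $\eta$). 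Your route is shorter and more transparent, and it makes explicit exactly where the hypothesis $R\Upsilon\neq 0$ enters, whereas the paper's count is closer in spirit to general Poisson reduction arguments. Two small remarks: your derivation of $[X_\Upsilon,R]=0$ from $d(R\Upsilon)=(R^2\Upsilon)\eta$ is a step the paper leaves implicit (it passes through Darboux coordinates instead), and your phrase ``suitably normalised antiderivative'' correctly supplies the factor $1/c$ that appears in the paper's proof but is dropped in the displayed formula of the statement — so your version is, if anything, the more careful one on that point.
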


\begin{proof} Recall that $\nabla \Upsilon=(R\Upsilon)R+X_\Upsilon$ for $R\Upsilon\neq 0$. Lemma \ref{Lemm::LieBivector} yields that $\mathcal{L}_{\nabla \Upsilon}\Lambda_{\omega,\eta}=0$ and then $\Lambda_{\omega,\eta}$ projects onto $M/\nabla \Upsilon$ via the projection $
\pi:M\rightarrow M/\nabla \Upsilon$ giving rise to a Poisson bivector $\Lambda_{\Upsilon}$. 

Let us prove that $\Lambda_{\Upsilon}$ gives rise to a symplectic form by reduction to absurd. If $\Lambda_{\Upsilon}$ is degenerate, it admits a non-trivial kernel at some point  $y\in M/\nabla \Upsilon$. In other words, there exists a nonzero  covector $\vartheta_y\in T^*_y(M/\nabla \Upsilon)$ such that $(\Lambda_{\Upsilon})_y(\vartheta_y,\cdot)=0$. Hence, $(\Lambda_{\omega,\eta})_x(\vartheta_y\circ \pi_{*x},\vartheta'_y\circ \pi_{*x})=0$ for every $x\in \pi^{-1}(y)$ and every $
\vartheta'_y\in T^*_y(M/\nabla \Upsilon)$. This implies that $\vartheta_y\circ \pi_{x*}$ is orthogonal relative to $\Lambda_{\omega,\eta}$ at $x$ to the annihilator of $\langle \nabla\Upsilon\rangle_x$. Let us denote by $\langle \nabla\Upsilon\rangle ^{\Lambda_{\omega,\eta}}_x$ the orthogonal to $\langle\nabla\Upsilon\rangle_x$ relative to $\Lambda_{\omega,\eta}$. Since $\Lambda_\Upsilon$ is a bivector field on an even-dimensional manifold, its kernel is even dimensional. Since $\vartheta_y\in \ker \Lambda_{\Upsilon}$ is not zero, there exists another linearly independent covector $\vartheta'_y\in T^*_y(M/\nabla \Upsilon)$ in $\ker \Lambda_\Upsilon$. Hence, their pullbacks via $\pi^*_x$ give two linearly independent elements in $\langle \nabla\Upsilon\rangle ^{\Lambda_{\omega,\eta}}_x$. Moreover, $dt$ belongs to $\langle \nabla\Upsilon\rangle ^{\Lambda_{\omega,\eta}}_x$ at $x$ because belongs to $\ker (\Lambda_{\omega,\eta})_x$. But $dt_x$ is not the pull-back via $\pi_{x*}$ of any element of $T^*_y(M/\nabla \Upsilon)$, because $\iota _{\nabla \Upsilon}dt=R\Upsilon \neq 0$. Hence, $\langle \nabla\Upsilon\rangle ^{\Lambda_{\omega,\eta}}_x$ has, at least, dimension three. But this is impossible, because $\Lambda_{\omega,\eta}$ has rank $2n$ and the orthogonal relative to $\Lambda_{\omega,\eta}$ to a subspace of codimension $k$ has, at most, dimension $k+1$. This is a contradiction, and $\Lambda_\Upsilon$ is nondegenerate and it gives rise to a symplectic form.

Consider now a vector field $E_h$. Since $[E_h,\nabla \Upsilon]=0$ by assumption, then $E_h$ projects onto a vector field $Z$ on $M/\nabla \Upsilon$. Hence, $\mathcal{L}_Z\Lambda_{ \Upsilon}=0$ and $Z$ is locally Hamiltonian relative to $\Lambda_{\Upsilon}$ and its associated symplectic form. Note that $\iota_{d( R\Upsilon)}\Lambda_{\omega,\eta}=0$ implies that $R\Upsilon$ is, in Darboux coordinates, a function depending only on time. Then,  $X_h(R\Upsilon)=0$ and
$$
0=[E_h,\nabla\Upsilon]=[R+X_h,(R\Upsilon )R+X_\Upsilon]=(R^2\Upsilon)R+[R,X_\Upsilon]+R\Upsilon[X_h,R]+[X_h,X_\Upsilon].
$$
Therefore, $R^2\Upsilon=0$ and since $R\Upsilon$ depends only on $t$ in Darboux coordinates, $R\Upsilon$ is a nonzero constant $c$. Thus,
\begin{equation}\label{eq:RedRedTime}
0=[E_h,\nabla \Upsilon]=X_{R\Upsilon}-cX_{Rh}-X_{\{h,\Upsilon\}}=X_c-cX_{Rh}-X_{\{h,\Upsilon\}}=-cX_{Rh}-X_{\{h,\Upsilon\}}=X_{-cRh-\{h,\Upsilon\}}.
\end{equation}
Hence, $-\{h,\Upsilon\}-cRh$ depends only on time in Darboux coordinates.

Since $\nabla \Upsilon$ projects onto zero on $M/\nabla\Upsilon$, the projection of $E_h$ onto $M/\nabla \Upsilon$ is the same as the projection of $E_h-\nabla\Upsilon/c=X_{h-\Upsilon/c}$. But $X_{h-\Upsilon/c}$ is a Hamiltonian vector field relative to $\Lambda_{\omega,\eta}$. In view of this and (\ref{eq:RedRedTime}), one has 
$$
\nabla \Upsilon(h-\Upsilon/c)=\{h,\Upsilon\}+c Rh-c=g(t)
$$ 
for a certain function $g(t)$ in Darboux coordinates.
Hence,  
$$
\nabla \Upsilon\left(h-\Upsilon/c-\frac 1c\int^tg(t')dt'\right)=0
$$
and $h-\Upsilon/c-\int^t[\nabla \Upsilon(h-\Upsilon/c)](t')dt'/c$ is the pull-back of a function on $M/\nabla \Upsilon$. Moreover, $\Lambda_{\omega,\eta}(d(h-\Upsilon/c-\int
^tg(t')dt'/c),\cdot)$ is  projectable onto $M/\nabla \Upsilon$ giving a vector field $\Lambda_\Upsilon(d(h-\Upsilon/c-\int^t g(t')dt'/c),\cdot)=Y_k=Z$, and the final result follows.
\end{proof}

Note that our reduction allows for studying general evolution vector fields, which is more general than in the case of the Albert's cosymplectic-to-symplectic reduction dealing with cases whose Hamiltonian is a first integral of the Reeb vector field. Moreover, our cosymplectic-to-symplectic reduction allows for the reduction of an evolution vector field relative to another vector field, which is a more general scheme than in Albert's reduction. In particular, Albert's reduction is a particular case of our reduction for $E_h=R$, $Y=R-X_f=\nabla (t-f)$, and $\Upsilon=t-f$ for a function $f$ such that $Rf=0$ and $t$ is a Darboux time coordinate. Since $R$ is a Hamiltonian vector field with zero Hamiltonian function $h=0$ and $c=R\Upsilon=1$, one has
$$
\nabla (t-f)(0-(t-f))=-1\Rightarrow \pi_Y^*k=-t+f+\int^tdt'=f.
$$
Hence, our reduction gives that $R$ projects onto a vector field on $M/Y$ with a Hamiltonian function $k$ whose pull-back to $M$ is $f$.

Then, we can define the so-called {\it gradient relative equilibrium point}.

\begin{definition}
\label{Def::GradRelEqPoint}Let $(M,\omega,\eta)$ be a cosymplectic manifold, let $h\in C^\infty(M)$ be a Hamiltonian function, and let $\Phi: G\times M\rightarrow  M$ be a cosymplectic Lie group action on $M$ whose fundamental vector fields are of the form $\xi_M=\nabla \Upsilon$ for some $\Upsilon\in C^\infty(M)$ and for some $\xi\in\mathfrak{g}$ such that $\iota_{d(R\Upsilon)}\Lambda_{\omega,\eta}=0$. Then, a {\it gradient relative equilibrium point} of $h$ is a point $z_e\in M$ such that
\[
\nabla h_{z_e}=(\xi_M)_{z_e},
\]
for a certain fundamental vector field $\nabla \Upsilon$ of $\Phi$.
\end{definition}

Note that $E_h$ is such that $\iota_{E_h}\eta$ is not zero. Therefore, at gradient relative equilibrium points, and in an open neighbourhood around them, $\nabla\Upsilon$ satisfies the condition given in our cosymplectic-to-symplectic reduction. After reducing, standard symplectic techniques to evaluate the stability of the projected system can be used.

It is worth commenting that the immediate part of our cosymplectic-to-symplectic reduction, the one about the existence of the projection of $\Lambda_{\omega,\eta}$, can be described as a particular type of Marsden--Weinstein Poisson reduction, but this theory does not take into account the special nature of our Poisson bivector induced by the cosymplectic structure and it does not cover our scheme of reduction by a gradient vector field that is not Hamiltonian relative to $\Lambda_{\omega,\eta}$.

\section{A reduced circular three-body problem}\label{Sec::RDTP}

This section proves that the cosymplectic Marsden--Weinstein reduction given in previous sections can be insufficient to study certain relevant physical problems and our cosymplectic-to-symplectic reduction is needed. 

Let us consider the physical problem given by three masses $\mu,1-\mu$, and $m$ moving on a plane and interacting gravitationally. We assume that the gravitational constant is equal to one for simplicity. Moreover, we also set $\mu$ to be much larger than $1-\mu$. Mathematically, one may assume $m=1$ for simplicity, while the model for a general value of $m$ follows from it straightforwardly.  Additionally, let us assume that the mass $1-\mu$ is spinning around $\mu$ in a circular stable motion with constant angular frequency $\varpi$. We assume that $m$ does not affect the motion of $\mu$ and $1-\mu$. Physically, this happens when $m$ is much smaller than $\mu$ and $1-\mu$. 
Moreover, we will skip the study of collisions. This just depicted model is sometimes called the {\it circular restricted three-body problem} \cite{Al89, GZ14}. The above series of assumptions is a standard approach in the literature (see for instance \cite[pg. 663]{AM78} and references therein).

Note that the centre of mass of our circular restricted three-body problem is located, under our assumptions, in the line between the masses $\mu$ and $1-\mu$, being the distances of the masses $\mu$ and $1-\mu$ to it equal to $r_1=1-\mu$ and $r_2=\mu$, respectively. This model is a quite good approximation for the motion of a three-body system Sun-Earth-satellite, where it is assumed that the Earth moves around the Sun in a circular motion with radius equal to one, i.e. $r_1+r_2=1$, and a fixed frequency $\varpi$; and the satellite moves being affected by the gravitational forces induced by the Sun and the Earth but without having any effect in the motion of the Sun and the Earth. There are many other astronomical systems, in particular types of asteroids of the Solar System, like the Jupiter Trojan asteroids, that can be described via our model. 

Mathematically, our model is described by a $t$-dependent Hamiltonian $h$ on the phase space of a plane, which amounts to a function on $\mathbb{R}\times T^*\mathbb{R}^2$, whose form, in adapted coordinates $\{t,r,\varphi,p_r,p_{\varphi}\}$ induced by polar coordinates in $\mathbb{R}^2$ and a time coordinate $t$, namely $\{t,r,\varphi\}$,  reads as follows
\[
    h(t,r,\varphi,p_r,p_\varphi)=\frac{p_r^2}{2}+\frac{p_\varphi^2}{2 r^2}-\frac{\mu}{[r^2+r^2_1+2rr_1\cos(\varphi-\varpi t)]^{1/2}}-\frac{1-\mu}{[r^2+r^2_2-2rr_2\cos(\varphi-\varpi t)]^{1/2}}.
\]
We will ignore technical problems related to the lack of differentiability of $h$, which has no relevance for our further discussion, and we will study the problem via a cosymplectic manifold $(\mathbb{R}\times T^*\mathbb{R}^2,\omega_{TB},\eta_{TB}=dt)$, where $\omega_{TB}$ is the pull-back to $\mathbb{R}\times T^*\mathbb{R}^2$ of the canonical symplectic form on $T^*\mathbb{R}^2$, namely  $\omega_{TB}=dr\wedge dp_r+d\varphi\wedge dp_\varphi$ in the chosen coordinates (see \cite{Al89} for a different approach using old techniques in cosymplectic geometry). The evolution vector field describing the dynamics of the system determined by $h$ is given by $R_{TB}+X_h$, namely
\begin{multline}\label{Eq::ExpreEvo}
\frac{\partial}{\partial t}-\bigg(\frac{\mu(r+r_1 \cos (\varphi - \varpi t))}{\left(r^2+2rr_1\cos(\varphi-\varpi t)+r_1^2\right)^{3/2}}+\frac{(1-\mu) (r-  r_2 \cos (\varphi -\varpi t))}{\left(r^2-2 r r_2 \cos (\varphi-\varpi t)+r_2^2\right)^{3/2}}-\frac{p_\varphi^2}{r^3}\bigg)\frac{\partial}{\partial p_r}+p_r\frac{\partial}{\partial r}\\+\frac{ p_\varphi}{r^2}\frac{\partial}{\partial \varphi}+\bigg(\frac{\mu r r_1 \sin (\varphi -\varpi  t)}{\left(r^2+2 r r_1 \cos (\varphi-\varpi t )+r_1^2\right)^{3/2}}-\frac{(1-\mu) r r_2 \sin (\varphi -\varpi t)}{\left(r^2-2 r r_2 \cos (\varphi -\varpi t)+r_2^2\right)^{3/2}}\bigg)\frac{\partial}{\partial p_\varphi}.
\end{multline}
The Hamilton equations corresponding to $h$ read
\begin{equation}\label{eq::ExprEqu}
\begin{gathered}
\frac{dr}{dt}=p_r,\qquad \frac{d\varphi}{dt}=\frac{p_\varphi}{r^2},
\\ 
\frac{dp_r}{dt}=\frac{p_\varphi^2}{r^3}-\frac{\mu(r+r_1 \cos (\varphi - \varpi t))}{\left(r^2+2rr_1\cos(\varphi-\varpi t)+r_1^2\right)^{3/2}}-\frac{(1-\mu) (r-  r_2 \cos (\varphi -\varpi t))}{\left(r^2-2 r r_2 \cos (\varphi-\varpi t)+r_2^2\right)^{3/2}},
\\
\frac{dp_\varphi}{dt}=\frac{\mu r r_1 \sin (\varphi -\varpi  t)}{\left(r^2+2 r r_1 \cos (\varphi-\varpi t )+r_1^2\right)^{3/2}}-\frac{(1-\mu) r r_2 \sin (\varphi -\varpi t)}{\left(r^2-2 r r_2 \cos (\varphi -\varpi t)+r_2^2\right)^{3/2}}.
\end{gathered}
\end{equation}

Consider the vector field on $\mathbb{R}^3$ given by
\[
Y=\frac{\partial}{\partial t}+{\varpi}\frac{\partial}{\partial \varphi}
\]
and let $\widehat{Y}$ be the  fundamental vector field of the lift to $\mathbb{R}\times T^*\mathbb{R}^2$ of the Lie group action of $\mathbb{R}$ on $T^*\mathbb{R}^2$ related to the flow of $Y$ associated with the same element of the Lie algebra of $\mathbb{R}$ (see Section \ref{Sec::AMomentumMap}).

The vector field $\widehat{Y}$ is a cosymplectic vector field, i.e. $\mathcal{L}_{\widehat{Y}}\omega_{TB}=0$ and $\mathcal{L}_{\widehat{Y}}\eta_{TB}=0$. In fact, it is the gradient vector field relative to the function $\Upsilon= t+p_\varphi\varpi$, which satisfies that $R_{TB}\Upsilon$ is a constant. Note that $\widehat{Y}$ is not a Hamiltonian vector field relative to $(\mathbb{R}\times T^*\mathbb{R}^2,\omega_{TB},\eta_{TB}=dt)$ since $\iota_{\widehat{Y}}\eta_{TB}\neq 0$. Moreover,   $\widehat{Y}$ is a Lie symmetry of the Hamiltonian function $h$, which follows from the fact that $\widehat{Y}$ takes the same form as $Y$ but in the coordinates $\{t,r,\varphi,p_r,p_\varphi\}$. At this level, it is evident that Theorem \ref{Th:CoSymRed} can not be applied.

It is relevant to us to find gradient relative equilibrium points of $h$, namely when $E_h$ is proportional to $\widehat{Y}$. Physically, this happens when the mass $m$ moves around the centre of mass at a fixed distance $r$ and frequency $\varpi$. Note that the notion of a  relative equilibrium point for the cosymplectic manifold $(\mathbb{R}\times T^*\mathbb{R}^2,\omega_{TB},\eta_{TB}=dt)$ does not apply to this case because, for instance, $\widehat Y$ is not Hamiltonian. Similarly, the techniques devised in \cite{LZ21} can not be used either as $\widehat Y$ is not tangent to $T^*\mathbb{R}^2$. Hence, we will employ Theorem \ref{Th::GradReduction}.

By Definition \ref{Def::GradRelEqPoint} a point $z_e\in R\times T^*\mathbb{R}^2$ is a gradient relative equilibrium point if $R_{TB}+X_h$ is proportional to $\widehat{Y}$. If this happens at a point $(t,r,\varphi,p_r,p_\varphi)$, then the last expression in the Hamilton equations (\ref{eq::ExprEqu}) is equal to zero and $\varphi=\varpi t+k\pi$, with $k\in \mathbb{Z}$, or $\varphi-\varpi t$ is such that the distance between the mass from $m$ to $\mu$ and from $m$ to $1-\mu$ are the same. In this latter case, one can prove that $\varphi-\varpi t=\Delta$ and $r\cos \Delta =\mu-1/2$. Note that we can restrict ourselves to $k\in \{0,1\}$. Moreover, the remaining equations for the gradient relative equilibrium points read
\begin{equation}\label{eq:Remaining}
\begin{gathered}
\frac{p_\varphi^2}{r^3}- \frac{\mu(r+r_1 \cos (\varphi - \varpi t))}{\left(r^2+2rr_1\cos(\varphi-\varpi t)+r_1^2\right)^{3/2}}-\frac{(1-\mu) (r-  r_2 \cos (\varphi -\varpi t))}{\left(r^2-2 r r_2 \cos (\varphi-\varpi t)+r_2^2\right)^{3/2}}=0,\\ p_r=0,\qquad \frac{p_\varphi}{r^2}=\varpi.
\end{gathered}
\end{equation}
Since the masses $1-\mu$ and $\mu$ spin around their centre of mass, which is located at $r=0$, with constant angular velocity $\varpi$ due to their gravitational attraction, one has that
\[
\frac{\mu}{(r_1+r_2)^{2}}=\varpi^2r_2\,\Rightarrow   \,\varpi=\pm 1.
\]
Note that the force is given by the relative distance between the masses, while the centripetal force is considered relative to the inertial reference system at the centre of mass of the system of $\mu$ and $1-\mu$. 

Let us consider then three options for the relations between $\varphi$ and $t$ for our gradient relative equilibrium points, namely $\varphi=\varpi t$, $\varphi=\varpi t+\pi$ and $\varphi=\varpi t+\Delta$. In the first case, one has the equations
\begin{equation}\label{eq:L1L2}
r=\frac{\mu}{(r+1-\mu)^2}\mp\frac{1-\mu}{(\mu-r)^2},
\end{equation}
which are nothing but the equations for the centripetal force of a circular motion induced by the gravitational force of the masses $\mu$ and $1-\mu$ when the three objects move in circles with a frequency $\varpi$ while keeping their positions along a line which turns around the origin with such a frequency. Then, (\ref{eq:L1L2}) leads to two quintic equations
\begin{multline}\label{eq:Quintic}
P_{\pm}(r,\mu):=r^5+(2-4 \mu ) r^4+\left(6 \mu ^2-6 \mu +1\right) r^3+\left(-4 \mu ^3+6 \mu ^2-(3\pm 1)\mu \pm 1\right) r^2\\+\left(\mu ^4-2 \mu ^3+(3\pm 2)\mu ^2\mp(4 \mu -2)\right) r-\mu^3\pm(1-\mu)^3 =0
\end{multline}
for the gradient relative equilibrium position of $r$, which has always a root in $]0,\infty[$ since the polynomial has negative value at $r=0$; the value of $\mu$ is approximately equal to 1 with $\mu<1$; and the value of the polynomial (\ref{eq:Quintic}) tends to infinity when $r$ does so. Each of the above two equations in (\ref{eq:L1L2}) has just one real solution. This can be seen by analysing the right- and left-hand side functions in (\ref{eq:L1L2}). Let us work out the gradient relative equilibrium points in an approximate manner. The quintic polynomial has a triple root $r=1$ for $\mu=1$. Let us write $r=1+\sum_{n\in \mathbb{N}}\delta^{n/3}x_n$ for certain constants $\{x_n\}_{n\in \mathbb{N}}$ and a parameter $\delta\geq 0$, and consider the quintic polynomical as an expression $P_{\pm}(r,\mu)=\sum_{n=0}^\infty P_{\pm n}(r)\delta^{1+n/3}$ for $\delta=1-\mu$, and we look for solutions of $P_\pm(r(\delta),\delta)=0$ for every $\delta$ in some $[0,\delta_{\max}[$. Note that for $\delta=0$, one has that $P_\pm(r,1)$ has a triple root $r=1$ and one obtains 
\[
0=P_\pm(r(\delta),\delta)=(\pm 1+ 3x^3)\delta+(\pm 2x+3x^4+9x^2y)\delta^{4/3}+\ldots,
\]
The convergence of solutions of $P_\pm(r(\delta),\delta)=0$ can be obtained by the implicit function theorem and writing $P_\pm(r(\delta),\delta)$ in an appropriate manner. 
Then, the equilibrium points to order $\delta^{1/3}$ are given by 
\[
r=1\mp   \sqrt[3]{\frac{1-\mu}{3 }},
\]
which are, up to the chosen  level of  approximation, the known values for the Hill spheres. Note that $r$ is positive. Then,  the model recovers two gradient relative equilibrium points for $k=0$. Let us call them $L_2$ and $L_1$ for the signs plus and minus in $r$, respectively.

Meanwhile, for $k=1$, the equations for the gradient relative equilibrium points read
\begin{equation}\label{eq:L3}
r=\pm\frac{\mu}{(r-1+\mu)^2}+\frac{1-\mu}{(\mu+r)^2}.
\end{equation}
Note that the case of (\ref{eq:L3}) with the minus sign in $\pm$ amounts to one of the equations in (\ref{eq:L1L2}) with $-r$. Since (\ref{eq:L1L2}) has only a real positive solution for each possibility of the signs, it turns out that (\ref{eq:L3}) has no positive solution with the minus sign in $\pm$ and only the case
\[
r=\frac{\mu}{(r-1+\mu)^2}+\frac{1-\mu}{(\mu+r)^2}
\]
has a physical interest. The above can be written as a polynomial in terms of $r$ and $\delta=1-\mu$ of the form
\[
0=(1-r)S(r)+\delta Q(r,\delta), \qquad S(r)=-(1+r)^2(1+r+r^2),
\]
\[
Q(r,\delta)=3+4r-2r^2-6r^3-4r^4+(-3+r+6r^2+6r^3)\delta+(-2r-4r^2)\delta^2+r\delta^3.
\]
For $\mu=1$, this gives a solution $r=1$. Let us assume $r=1+\lambda \delta$. 
This case has an approximate positive solution, obtained by skipping terms in second or higher powers in $\delta$, given by
\[
0=\lambda \delta S(1)+\delta Q(1,0)=\lambda 12\delta-\delta 5 \Rightarrow   r=1+(1-\mu) \frac{5}{12}.
\]
This is a known value of the Lagrange point $L_3$. 

There exist still another two Lagrange points, which can be recovered for $\varphi-\varpi t=\Delta$. We leave it as an exercise for the reader to verify that they recover the well-known Lagrange points $L_4$ and $L_5$. The main point is that, at all Lagrange points, we have
\[
R_{TB}+X_h=\nabla \Upsilon,
\]
where $\Upsilon=t+\varpi p_\varphi$, for every $t\in \mathbb{R}$. 
Let us recall that $\nabla \Upsilon$ is a symmetry of $\eta_{TB }$ and $\omega_{TB}$, but it does not take values in the kernel of $\eta_{TB}$, which makes the standard cosymplectic reduction impossible to be applied and Theorem \ref{Th::GradReduction} has to be used.

The projection from $\mathbb{R}\times T^*\mathbb{R}^2$ onto the quotient space $(\mathbb{R}\times T^* \mathbb{R}^2) /\nabla \Upsilon\simeq \mathbb{R}^2\times \mathbb{R}^2$ corresponding to the orbit space of the integral curves of $\nabla \Upsilon$ is given by
\[
\pi:(t,r,\varphi,p_r,p_\varphi)\in \mathbb{R}\times T^*\mathbb{R}^2\mapsto (r,\varphi-t\varpi ,p_r,p_\varphi)\in \mathbb{R}^2\times \mathbb{R}^2,
\]
where $\{r,\varphi',p_r,p_\varphi\}$ is the chosen global coordinate system on $\mathbb{R}^2\times \mathbb{R}^2$.

Recall $\mathbb{R}\times T^*\mathbb{R}^2$ admits a Poisson bracket on $\mathbb{R}\times T^*\mathbb{R}^2$ associated with its cosymplectic structure. In our chosen coordinates, the Poisson bracket related to $(\mathbb{R}\times T^*\mathbb{R}^2,\omega_{TB},\eta_{TB})$ has the form
\[
\Lambda_{TB}=\frac{\partial}{\partial \varphi}\wedge\frac{\partial }{\partial p_\varphi}+\frac{\partial}{\partial r}\wedge\frac{\partial }{\partial p_r}. 
\]

By Theorem \ref{Th::GradReduction}, one can project $\Lambda_{TB}$ onto the quotient space of the orbits of a gradient vector field satisfying given conditions. In fact,
\[
\pi_*\Lambda_{TB}=\frac{\partial}{\partial \varphi'}\wedge\frac{\partial }{\partial p_\varphi}+\frac{\partial}{\partial r}\wedge\frac{\partial }{\partial p_r},\qquad \pi_*R_{TB}=-\varpi\frac{\partial}{\partial \varphi'},\qquad \pi_*\nabla \Upsilon=0
\]
and
\begin{multline*}
\pi_*(R_{TB}+X_h)= p_r\frac{\partial}{\partial r}+\left(-\varpi+\frac{ p_\varphi}{r^2}\right)\frac{\partial}{\partial \varphi'}\\-\left(\frac{\mu(  r+  r_1 \cos \varphi')}{(r^2+2 r r_1 \cos \varphi'+r_1^2)^{3/2}}+\frac{(1-\mu)(  r-  r_2 \cos \varphi')}{  \left(r^2-2 r r_2 \cos \varphi'+r_2^2\right)^{3/2}}-\frac{p_\varphi^2}{  r^3}\right)\frac{\partial}{\partial p_r}\\-r_1rr_2 \sin \varphi'\left(\frac{ 1  }{\left(r^2+2 r r_1 \cos \varphi'+r_1^2\right)^{3/2}}+\frac{  1  }{\left(r^2-2 r r_2 \cos \varphi'+r_2^2\right)^{3/2}}\right)\frac{\partial}{\partial p_\varphi}.
\end{multline*}
Note that $\pi_*(R_{TB}+X_h)$ vanishes only at the image under $\pi$ of gradient relative equilibrium points. Moreover, the quotient manifold becomes a symplectic manifold and $\pi_*(R_{TB}+X_h)$ is Hamiltonian with  a Hamiltonian function
\[
k(r,\varphi',p_r,p_\varphi)=-\varpi p_\varphi +\frac{p_r^2}{2}+\frac{p_\varphi^2}{2r^2}-\frac{\mu}{[r^2+r^2_1+2rr_1\cos\varphi']^{1/2}}-\frac{1-\mu}{[r^2+r^2_2-2rr_2\cos\varphi']^{1/2}}.
\]

Physically, this is an autonomous Hamiltonian system obtained by fixing a coordinate system spinning around the centre of mass with angular frequency $\varpi$. Mathematically, $k$ is the Hamiltonian function predicted by Theorem \ref{Th::GradReduction}.

This example shows that the cosymplectic approach opens new possibilities, which must further be developed.

Finally, it is immediate that the cosymplectic-to-symplectic reduction of the Poisson bivector field $\Lambda_{TB}$ can be understood as a Poisson reduction by the distribution spanned by $\nabla \Upsilon$ (cf. \cite{MR86}). Nevertheless, the dynamics studied at the beginning is related to a vector field, $R_{TB}+X_h$, which is not Hamiltonian relative to the Poisson bivector and, therefore, its reduction is not the typical Marsden--Weinstein one. Moreover, our methods give the dynamics of the reduced system and other features that Poisson reduction does not provide.

\section{Conclusions and Outlook}

This work has presented a survey on cosymplectic geometry that was missing in the literature. Several technical requirements in classical results have been eliminated. In particular, cosymplectic momentum maps have been described in full generality to apply an extended cosymplectic Marsden--Weinstein reduction theorem. Previous results on cosymplectic geometry and Marsden--Weinstein reductions have been slightly generalised.  Moreover, our work develops possible generalisations of the energy-momentum method to the cosymplectic realm and it justifies the need for new approaches to the study of non-autonomous Hamiltonian systems and their reductions. To illustrate our theory, examples from several $n$-level quantum systems described by $t$-dependent Schr\"odinger equations have been considered. We have also devised a new type of cosymplectic-to-symplectic reduction that significantly differs from previous techniques \cite{Al89,LS15}. This has led to defining gradient equilibrium points and then applying our theory to describe a circular restricted three-body problem, its Lagrange points and Hill spheres.

In the future, we plan to extend the characterisation of relative equilibrium points to different types of stability, e.g. exponential stability. Additionally, we want to study the stability of gradient relative equilibrium points. We are also interested in studying types of $k$-symplectic and almost cosymplectic manifolds and associated Hamiltonian systems with our techniques.  We also plan to study singular cosymplectic reduction by means of orbifolds \cite{Sc83}. We hope that the latter will give a more general approach than the one used in \cite{LS15}. Moreover, this will allow for the use of the cosymplectic energy-momentum method for problems for which the level sets of the momentum map or its quotients cannot be considered as manifolds. Finally, we plan to apply all our novel techniques to new physical examples. For instance, we are considering the motion of a spinning diver. Biomechanical problems of this type have been recently analysed using different approaches \cite{DT15}.

Finally, we would like to analyse the extensions/modifications of energy-momentum methods to study field theories with physical applications. In particular, we are interested in analysing equations of magneto-hydrodynamic type and their relative stability \cite{HMRW85} through appropriate modifications our of techniques. In particular, we are analysing the use of Poisson \cite{HMRW85,MS88}, poly-Poisson \cite{Ma15} and $k$-poly(co)symplectic techniques \cite{LRVZ23} in energy-momentum and energy-Casimir methods.

\label{Sec::Conclusions}
\section*{Acknowledgements}

We acknowledge partial financial support from the 
project MINIATURA 5 of the Polish National Science Centre (NCN) under grant number Nr 2021/05/X/ST1/01797. J. de Lucas acknowledges a CRM-Simons professorship in the Centre de Recherchers Math\'ematiques (CRM) of the Universit\'e de Montr\'eal, during which, our new type of cosymplectic-to-symplectic reduction was found and this work was finished. J. de Lucas and B.M. Zawora would like to thank the members and staff of the CRM for their hospitality. We would like to thank prof. Grundland for drawing our attention to the work \cite{HMRW85} and useful discussions, which will be of relevance in our further research.

\pdfbookmark[1]{References}{ref}
\footnotesize\itemsep=0pt
{\providecommand{\eprint}[2][]{\href{http:farxiv.org/abs/#2}{arXiv:#2}}

\end{document}